\documentclass[11pt]{article}
\usepackage{graphicx} 
\usepackage[a4paper,left=2.3cm,right=2.3cm,top=2.5cm,bottom=4cm,]{geometry}

\usepackage{amsmath}
\usepackage{amsfonts}
\usepackage{amssymb}
\usepackage{color}
\usepackage{amsthm}
\usepackage{enumerate}

\newcommand{\der}[2]{\frac{\partial {#1}}{\partial {#2}}}
\newcommand{\var}[3]{\frac{\delta_{#1} {#2}}{\delta {#3}}}\newcommand{\cL}{\mathcal{L}}
\newcommand{\cA}{\mathcal A}
\newcommand{\cB}{\mathcal B}

\newcommand{\cF}{\mathcal F}
\newcommand{\cG}{\mathcal G}
\newcommand{\cJ}{\mathcal J}
\newcommand{\cK}{\mathcal K}
\newcommand{\cM}{\mathcal M}
\newcommand{\cN}{\mathcal N}

\newcommand{\cR}{\mathcal R}

\newcommand{\fX}{\mathfrak X}

\newcommand{\R}{\mathbb R}

\newcommand{\pair}[1]{\ensuremath{\left\langle#1\right\rangle}}

\DeclareMathOperator{\dd}{d\!}
\DeclareMathOperator{\im}{Im}

\usepackage[numbers]{natbib}
\usepackage{hyperref}

\usepackage{todonotes}

\usepackage{tikz-cd}
\tikzcdset{
	cells={font=\everymath\expandafter{\the\everymath\displaystyle}},
}

\newtheorem{theorem}{Theorem}
\newtheorem{lemma}[theorem]{Lemma}
\newtheorem{corollary}[theorem]{Corollary}
\newtheorem{proposition}[theorem]{Proposition}

\newtheorem{definition}[theorem]{Definition}

	\title{\vspace*{-10mm}\bf Duality of Hamiltonian and Lagrangian formulations {for integrable systems}}

     \author{ Pierandrea Vergallo$^{1,2}$ \qquad  Mats Vermeeren$^3$ 
   \\[10pt]\small 
 $^1$Department of Basic and Applied Sciences, University of Basilicata,\\[1pt]\small 
 Via dell'Ateneo Lucano, 85100, Potenza, Italy\\[1pt]
  \small  $^2$Istituto Nazionale di Fisica Nucleare, 
  \small Sezione di Napoli, \\\small Via Cintia, 80126, Napoli, Italy\\[1pt]\small
  $^3$Department of Mathematical Sciences, 
 Loughborough University, \\[1pt] 
 \small Loughborough, Leicestershire LE11 3TU, United Kingdom,}
  \date{}
  
\begin{document}

\maketitle
\begin{abstract}
    \noindent
   We introduce the concept of \emph{Hamiltonian potential variables} to map Hamiltonian operators into symplectic operators in a dual space. This generalises the classical trick of switching to a potential variable to obtain a Lagrangian density for the Korteweg-de Vries (KdV) equation. Building on this concept, we present the Lagrangian structure for bi-Hamiltonian systems, discuss the Lenard scheme in the symplectic formalisms, and apply this to construct pairs of Lagrangian multiforms. We discuss the key model of the KdV equation and some dispersionless limits of it. We present a pair of Lagrangian multiforms for these equations, one of which is new. We also consider the examples of polytropic gas dynamics and the constant astigmatism equation, for which no Lagrangian multiforms were previously known.
\end{abstract}

\tableofcontents
\section{Introduction}

In the modern theory of integrable systems, Hamiltonian structures play a central role. The Lagrangian picture on the other hand, is not usually treated with much significance. This is in stark contrast to geometric mechanics, where the Hamiltonian and Lagrangian pictures are generally treated as equally important. Hamiltonian and Lagrangian formulations of partial differential equations (PDEs) are geometrically described by Poisson (or Hamiltonian) operators and symplectic operators, respectively. One reason for their unbalanced treatment lies in the fact that the integrability of a given evolutionary system is \emph{essentially} guaranteed by the existence of a bi-Hamiltonian structure, that is, by the presence of two compatible Hamiltonian formalisms describing the same system. The compatibility of two Hamiltonian operators is a simple linear criterion, namely that any linear combination of them is again a Hamiltonian operator. By contrast, in the Lagrangian framework a comparable linear test for the compatibility of symplectic structures is not, in general, available. 

In the last few decades, there have been several contributions to remedy the imbalance between Hamiltonian and Lagrangian perspectives in integrable systems. For example:

\begin{itemize}
\item It has been suggested to enforce the recurrence relation induced by a bi-Hamiltonian structure through a constrained variational principle \cite{dealmeidadasilva1990simple}.

\item There have been classifications of certain types of integrable systems, based on the existence of a Lagrangian density of a particular form \cite{ferapontov2006class, ferapontov2010integrable, ferapontov2010integrablea}.

\item Dirac structures, which generalise both symplectic and Hamiltonian operators, were thoroughly studied in \cite{dorfman1993dirac}.

\item Many examples of equations possessing a bi-Hamiltonian structure also possess different Lagrangian densities (that are not equivalent under the usual operations of adding total divergences). Such \emph{multi-Lagrangians} are studied in \cite{nutku2001lagrangian, nutku2002multilagrangians, pavlov2017remarks, bustamante2003multilagrangians}

\item Similarly named, but fundamentally different, are \emph{Lagrangian multiforms} (and the closely related idea of \emph{pluri-Lagrangian systems}). Rather than having several Lagrangians for the same equation, a Lagrangian multiform describes a whole hierarchy of commuting equations. A Lagrangian multiform is a differential form: some of its coefficients can be understood as Lagrangian densities of the individual equations of a hierarchy, other coefficients have no clear meaning in the traditional calculus of variations \cite{lobb2009lagrangian, suris2016lagrangian, suris2016variational}.
The relation between Lagrangian multiforms and Poisson structures was studied in \cite{vermeeren2021hamiltonian}, but no relation between Lagrangian multiforms and bi-Hamiltonian systems has previously been established.
\end{itemize}

In this paper, we combine the last two points. To reiterate, a multi-Lagrangian structure provides several Lagrangian structures for the \emph{same} equation, relating to its bi-Hamiltonian structure, whereas a Lagrangian multiform is a single object encoding a hierarchy of equations. Here, we combine both ideas. Given a bi-Hamiltonian hierarchy, we construct a pair of Lagrangian multiforms. This generalises both existing approaches, Lagrangian multiforms and multi-Lagrangians, to provide a pair of Lagrangian descriptions for the whole hierarchy, reflecting its bi-Hamiltonian structure.

Before treating Lagrangian multiform theory, we take a detailed look at the potential variables that are often introduced to turn a Hamiltonian partial differential equation (PDE) into a Lagrangian one. We argue that the transformation to a potential variable should be described using the Hamiltonian operator of the system. In the case of a compatible pair of Hamiltonian operators, transforming to a potential variable using one of the operators turns the pair into a pair of symplectic operators. This pair of symplectic operators leads to two Lagrangian structures (a multi-Lagrangian structure). Starting from each of these, a Lagrangian multiform can be constructed; thus we obtain two non-equivalent Lagrangian multiforms for the same bi-Hamiltonian hierarchy. Both Lagrangian multiforms have the system of evolutionary equations as their Euler-Lagrange equations. This is in contrast to the traditional Lagrangian formulation, where the Euler-Lagrange equations are differential consequences of the evolutionary equations.

The paper is organised as follows. In the remaining parts of the Introduction, we review the geometric setting of Hamiltonian and symplectic operators for evolutionary systems and the notion of compatibility of such operators.
In Section \ref{sec2}, starting from the motivating example of the Korteweg–de Vries (KdV) equation, we formalise the notion of potential coordinates and prove that, given a pair of compatible Hamiltonian operators, one can construct a corresponding pair of symplectic structures for the equation expressed in these new potential variables. We formulate a Lenard scheme associated with the resulting bi-symplectic structure, dual to its usual bi-Hamiltonian formulation. In Section \ref{multi-sec}, we briefly recall the main elements of the theory of Lagrangian multiforms and show how to construct a bi-Lagrangian 2-form starting from a bi-Hamiltonian structure. Section \ref{sec4} is devoted to examples. We first discuss the KdV equation, followed by several scalar models obtained as dispersionless limits of the potential KdV equation, as well as matrix examples arising in the theory of polytropic gas dynamics. Finally, we study the Constant Astigmatism Equation (CAE). We derive the bi-symplectic structure for the two-component form of CAE and present the resulting two Lagrangian 2-forms. We also present a Lagrangian multiform for the scalar form of the CAE. The paper concludes with some final remarks in the last section.

\subsection{Preliminary notions}
Let us consider a $2$-dimensional real manifold $M$, with local coordinates $t$ and $x$, a real $n+2$-dimensional manifold $E$, and a locally trivial vector bundle $\pi: E\rightarrow M$. We choose $u^1,\dots , u^n$ as local coordinates of the $n$-dimensional fibres $U$. These are the field variables of a system of $n$ evolutionary PDEs depending on $t$ and $x$, which are considered as the time and the space variables, respectively.
While it may be useful to have the geometric picture  of jet spaces on a vector bundle in mind (see e.g.\@ \cite{saunders1989geometry,krasilshchik2011geometry,bocharov1999symmetries}),
for our purposes it is sufficient to define $\mathbb{A}$ as the algebra of the differential functions $\varphi[u]$ of the space variable and higher order derivatives of ${u}=(u^1,\dots, u^n)$ with respect to $x$, i.e.\@ 
\[
     \varphi[u] = \varphi(x,{u},{u}_{x}, \dots, {u}_{kx}),
\]
 where ${u}_{kx}={u}_{\underbrace{x\cdots x}_{k \text{ times}}}$.
 
 The \emph{Fréchet derivative} of $\varphi$ acts on a tangent vector $X = (X^1, \ldots, X^n)$ to $U$ and is given by the local coordinate expression
 \begin{equation}
    \label{frechet}
     \ell_{\varphi} X = \sum_{k\geq 0} \der{\varphi}{u^j_{kx}} \partial^k_x X^j,
 \end{equation}
  where the Einstein summation convention is assumed for repeated indices ranging from $1$ to $n$, while we choose to specify the infinite sums explicitly, and
 \[
     \partial_x = \der{}{x}+\sum_{k\geq 0}u^j_{(k+1)x}\der{}{u^j_{kx}}
 \]
 is the total derivative with respect to $x$. The formal adjoint of a Fréchet derivative is then given by the following formula:
 \begin{equation}
    \label{frechet-adjoint}
     \ell_\varphi^* X = \sum_{k\geq 0} (-1)^k \partial^k_x \left( \der{\varphi}{u^j_{kx}} X^j \right).
 \end{equation}
Finally, the Fréchet derivative of a 1-form $\varphi = \sum_{i=1}^n \varphi_i \dd u^i$ is defined component-wise,
 \[
     \ell_{\varphi} X =  \ell_{\varphi_i} X \dd u^i .
 \]

Consider the quotient of $\mathbb{A}$ under the image of the total derivative $\partial_x$, i.e.\@ we consider differential functions which are equivalent up to total divergences. Elements in of this quotient are called functionals and are commonly indicated by $F=\int f \dd x$, where $f[u]$ is a representative in the class and is called a functional density.
We define the variational derivative of $F=\int f \dd x$ as
\begin{equation}
    \label{varder}\var{}{F}{u^i} := \sum_{k\geq 0} (-1)^k \partial_{x}^k \der{f}{u_{kx}^i} . 
\end{equation} 
This expression does not depend on which representative $f$ in the equivalence class is considered, and when it is more natural to work directly with the density, we will also denote $\var{}{F}{u^i}$ by $\var{}{f}{u^i}$.

In this context, an evolutionary vector field $X$ is a function from the $k$-th jet space of $\pi$ to the vertical subspace of its tangent space. It can be described in coordinates by
\[
    \varphi=\varphi^i(x,{u},\dots , {u}_{kx})\frac{\partial}{\partial u^i}.
\]
The space of all evolutionary vector fields is denoted by $\fX$, its dual by $\bar{\fX}$, and the pairing between them by $\pair{\cdot,\cdot}$.

\paragraph{Hamiltonian and symplectic operators.}

In this paper, we focus on (systems of) evolutionary PDEs
\begin{equation}\label{eveq}
    u^i_t=F^i({u},{u}_x,\dots , {u}_{kx}), \qquad i=1,2,\dots n,
\end{equation}
for some order $k\in\mathbb{N}$, where $F^i$ are smooth functions in their arguments.

We recall that \eqref{eveq} is called Hamiltonian if there exist a functional $H=\int h \dd x$ and a matrix differential operator $\cA: \bar \fX \to \fX$ with entries $\cA^{ij}=a^{ij}_\sigma\partial_x^\sigma$, $a^{ij}_\sigma\in\mathbb{A}$, such that 
\begin{equation}
    \label{hamiltonian}
    u^i_t=\mathcal{A}^{ij}\left(\frac{\delta H}{\delta u^j}\right), \qquad i=1,2,\dots n,
\end{equation}
and
\begin{description}
    \item[A.1] $\mathcal{A}$ is skew-adjoint, i.e.\@ $\mathcal{A}^*=-\mathcal{A}$, and
    \item[A.2] the Schouten bracket of $\cA$ with itself vanishes,
    \[ [\cA,\cA](\varphi_1,\varphi_2,\varphi_3) =
    \pair{ \ell_{\cA} (\cA \varphi_1)(\varphi_2) , \varphi_3 } + \pair{ \ell_{\cA} (\cA \varphi_2)(\varphi_3) , \varphi_1 } + \pair{ \ell_{\cA} (\cA \varphi_3)(\varphi_1) , \varphi_2 }  
    = 0 , \]
    where 
    \[ \ell_\cA(X)(\varphi)^i=\der{a^{ij}_\sigma}{u^k_{\tau x}} \partial_x^\tau X^k\, \partial_x^\sigma \varphi_j . \]
We refer to \cite{bocharov1999symmetries} for further discussions and for equivalent formulations of the present expression.
\end{description}
A differential operator $\cA$ is said to be Hamiltonian if A.1 and A.2 are satisfied. Equivalently, $\mathcal{A}$ is said Hamiltonian if the bracket
\[
    \{F,G\}_\mathcal{A}=\int \frac{\delta F}{\delta u^i}\mathcal{A}^{ij}\frac{\delta G}{\delta u^j} \dd x,
\]
 for functionals $F=\int f \dd x$, $G=\int g \dd x$, is a Poisson bracket, i.e. 
\begin{description}
    \item[B.1] it is skew-symmetric, $\{F,G\}_\cA=-\{G,F\}_\cA$, and
    \item[B.2] it satisfies the Jacobi identity 
    \[\{F,\{G,H\}_\cA\}_\cA + \{G,\{H,F\}_\cA\}_\cA + \{H,\{F,G\}_\cA\}_\cA = 0.\] 
\end{description}
We stress that A.1 and A.2 are equivalent to B.1 and B.2 respectively.

Dual to the notion of a Hamiltonian evolutionary system is the notion of a symplectic system, which is a system of the form
\[
    \cJ_{ij}(u^j_t)=\frac{\delta G}{\delta u^i}, \qquad i=1,2,\dots n,
\]
for some functional $G=\int g \dd x$ and a matrix differential operator $\cJ: \fX \to \bar \fX$ with entries $\cJ_{ij}=b_{ij\sigma}\partial_{x}^\sigma$, $b_{ij\sigma}\in \mathbb{A}$, that satisfies
\begin{description}
    \item[C.1] $\cJ^*=-\cJ$, and
    \item[C.2] $ \pair{ \ell_{\cJ X^1} X^2, X^3 } +  \pair{ \ell_{\cJ X^2} X^3, X^1 }+  \pair{ \ell_{\cJ X^3} X^1, X^2 }=0$ for every $X^i\in \fX$.
\end{description}
We stress that a matrix differential operator satisfying the previous conditions is called symplectic. This is equivalent to the requirement that the 2-form $\omega$ defined by 
\[
    \omega(\xi,\eta)=\int \xi^i\cJ_{ij}\eta^j \dd x
\]
for $\xi,\eta\in\fX$, is a closed $2$-form.

In other words, Hamiltonian operators are matrix differential operators $\mathcal{A}:\bar{\fX}\rightarrow \fX$ whose associated bracket is a Poisson bracket, whereas  symplectic operators $\cJ:\fX\rightarrow \bar{\fX}$ are differential operators whose associate bilinear form is a closed $2$-form. 

It is important to note that condition C.2 is equivalent to the closedness of the differential operator $\cJ$. This implies a local exactness result which, for our purposes, reads as:
\begin{theorem}[{\cite[Chapter 6]{dorfman1993dirac}}]
    \label{thm-helm}
    Let $\cJ$ be a symplectic operator, then locally there exists a vertical $1$-form $p\in\bar{\fX}$ (or equivalently $n$ differential functions $p_i, i=1,2,\dots n$) such that 
    \begin{equation}\label{helm}
        \cJ=\ell_{p}-\ell^*_p,
    \end{equation}
    where $\ell$ is the Fréchet derivative and $\ell^*$ its formal adjoint.
\end{theorem}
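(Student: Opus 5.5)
The approach is a homotopy (Poincaré lemma) construction on the variational bicomplex, written directly in terms of differential operators. I will produce an explicit candidate $p$ from $\cJ$ and then check that C.1 and C.2 force $\ell_p-\ell_p^*=\cJ$. Work locally in a star-shaped region of the fibre coordinates, so that the scaling $u\mapsto\lambda u$, $\lambda\in[0,1]$, is available. Then set
\[
p_i[u]=\int_0^1 \big(\cJ_{ij}[\lambda u]\,u^j\big)\,\dd\lambda ,
\]
meaning that the coefficients $b_{ij\sigma}$ are evaluated at $\lambda u$ and the operator acts on $u$. This is the contraction $\int_0^1 \iota_{E}\,\omega|_{\lambda u}\,\dd\lambda$ of the 2-form $\omega$ with the Euler (scaling) field $E=u^j\partial_{u^j}$. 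Formally, $p$ is the homotopy primitive of $\omega$.

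The key steps, in order, are as follows.

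First, compute the Fréchet derivative of $p$ by differentiating under the integral:
\[
\ell_p X=\int_0^1\Big(\lambda\,\ell_{\cJ u}\big|_{\lambda u}(X)\;+\;\cJ[\lambda u]X\Big)\dd\lambda .
\]
Here $\ell_{\cJ u}$ means differentiating the coefficients only, with $u$ held as the argument. Take the formal adjoint of this expression, using $\cJ^*=-\cJ$ (C.1).

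Second, rewrite C.2 in operator form. Because it holds for all $X^1,X^2,X^3$, it says that the operator $X\mapsto \ell_{\cJ Y}X$, minus its adjoint, plus the derivative of $\cJ$ in direction $X$ applied to $Y$, vanishes. In the notation above,
\[
\ell_{\cJ Y}-\ell_{\cJ Y}^* = -\,(\ell_{\cJ}Y)\quad\text{acting appropriately}.
\]
Specialise to $Y=u$, evaluated at $\lambda u$.

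Third, substitute this into $\ell_p-\ell_p^*$. The integrand becomes
\[
2\cJ[\lambda u]+\lambda\,\tfrac{d}{d\lambda}\big(\cJ[\lambda u]\big)=\tfrac{d}{d\lambda}\big(\lambda\,\cJ[\lambda u]\big)\;\text{-type terms}.
\]
Since the coefficients of $\cJ$ are differential functions, $\frac{d}{d\lambda}b_{ij\sigma}(\lambda u)=\ell_{b_{ij\sigma}}|_{\lambda u}(u)$. So the integrand should collapse to an exact $\lambda$-derivative, essentially $\frac{d}{d\lambda}\big(\lambda^2\cJ[\lambda u]\big)/\lambda$. To make it literally $\frac{d}{d\lambda}(\lambda\cJ[\lambda u])$, I may need to adjust the weight in the definition of $p$ to $\int_0^1\lambda\,\cJ[\lambda u]u\,\dd\lambda$ (or to $\tfrac12$ of the first form) and track the constants. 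With the right weight, integrating from $0$ to $1$ gives $\cJ[u]$, which is \eqref{helm}.

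The main obstacle is the second step: turning the cyclic identity C.2 into a precise operator identity for $\ell_{\cJ u}-\ell^*_{\cJ u}$. The adjoint involves integrations by parts on the derivative part of $\cJ$ as well as on the coefficients, and the pairing only sees things modulo total divergences. I would handle this by working entirely at the level of the bilinear form $\int X^i(\cdot)_{ij}Y^j\,\dd x$ with test fields. There, adjoints are just swaps of arguments, and C.2 with $X^3=u$ reads exactly as the closedness identity $d\omega(E,X,Y)=0$. The remaining subtlety is the explicit $x$-dependence of the coefficients, which does not scale under $u\mapsto\lambda u$. This is harmless because $\partial_x$ commutes with the scaling on the jet variables.

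If the star-shapedness assumption is too restrictive, I would instead shift the origin to a point $u_0$, which is where the word \emph{locally} in Theorem~\ref{thm-helm} comes in. Alternatively, I would invoke the exactness of the variational complex from \cite{dorfman1993dirac} directly, so that a closed 2-form is $d$ of a 1-form $\int p_i\,\dd u^i$. I would then identify $d$ of such a 1-form with the operator $\ell_p-\ell_p^*$ by a one-line computation.
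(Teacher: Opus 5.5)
Your proposal is correct in substance, once the three details below are fixed, but it does not follow the paper's route. The paper gives no argument of its own. It imports the statement from Dorfman as local exactness of the variational complex in degree two, which is exactly your fallback in the last paragraph. Your main route proves this special case directly via the scaling homotopy. That buys an explicit primitive and a concrete meaning for ``locally'': a region of jet space containing the whole homotopy. The cost is a computation that the citation avoids. Pairing with test fields to turn C.2 into an operator identity is the right way to do your second step, and your treatment of explicit $x$-dependence is fine.

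Three details need to be pinned down.

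\emph{The weight.} It is not optional: take
\[ p[u]=\int_0^1 \lambda\,\cJ[\lambda u]\,u\,\dd\lambda . \]
Your own computation shows that the unweighted integrand equals $\frac{1}{\lambda}\frac{d}{d\lambda}\big(\lambda^2\cJ[\lambda u]\big)$. So with the factor $\lambda$, the integrand of $\ell_p-\ell_p^*$ is exactly $\frac{d}{d\lambda}\big(\lambda^2\cJ[\lambda u]\big)$, not $\frac{d}{d\lambda}\big(\lambda\cJ[\lambda u]\big)$, and it integrates to $\cJ[u]$. Your alternative, half of the unweighted formula, gives $\ell_p-\ell_p^*=\frac12\big(\cJ[u]+\int_0^1\cJ[\lambda u]\,\dd\lambda\big)$. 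This fails unless $\cJ$ has constant coefficients. For example, for $\cJ=b(u)\partial_x+\frac12 b'(u)u_x$ the weighted formula returns $p=\frac12 b(u)u_x$, as it should. The halved one returns $p=\frac14\big(b(u)+\int_0^1 b(\lambda u)\,\dd\lambda\big)u_x$.

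\emph{The sign in C.2.} Write $\ell_{\cJ Y}$ for the Fr\'echet derivative of $\cJ Y$ at fixed $Y$, and $D_Y\cJ$ for $\cJ$ with its coefficients differentiated in the direction $Y$. Then C.2 is equivalent to $\ell_{\cJ Y}-\ell_{\cJ Y}^*=D_Y\cJ$, with a plus sign. The plus sign is what your third step actually uses; the minus sign you wrote in step two is wrong.

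\emph{The shift.} A constant shift $u_0$ does not rescue operators whose coefficients are singular at vanishing higher jets, such as $\bar\cB_3$ with its factors $1/\bar u_{xx}$. The jets of order $\geq 1$ of $u_0+\lambda(u-u_0)$ still tend to zero. In that example $\lambda^2\cJ[\lambda u]=\cJ[u]$, so the integral defining $p$ diverges. Shift instead by a function $u_0(x)$, e.g.\ a polynomial whose jet is the base point, and use
\[ p=\int_0^1\lambda\,\cJ[u_0+\lambda(u-u_0)]\,(u-u_0)\,\dd\lambda . \]
The same computation goes through verbatim, since this homotopy also commutes with $\partial_x$.
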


In coordinates, condition \eqref{helm} reads
\[
    \cJ_{ij}=\displaystyle \sum_{k\geq 0}\left( \frac{\partial p_i}{\partial u^j_{kx}}\partial_x^k-(-1)^{k}\partial_x^k \circ \frac{\partial p_j }{\partial u^i_{kx}}\right)
\]

The relevance of theorem \ref{thm-helm} is that operators of the form \eqref{helm} occur in Euler-Lagrange equations: consider a Lagrangian density of the form $L = p_i u^i_t - h$, then its Euler-Lagrange equations take the form
\[ (\ell_{p_i}^* - \ell_{p_i}) u^i_t - \var{}{h}{u^i} = 0 \quad \Leftrightarrow \quad -\cJ u_t - \var{}{h}{u} = 0. \]
Hence, a system of PDEs is symplectic if and only if it is the system of Euler-Lagrange equations of a Lagrangian density of this form.

We finally remark that given a Hamiltonian operator $\cA$ that is nondegenerate, its inverse matrix operator $\cJ=\cA^{-1}$ is a symplectic operator. Vice versa, the inverse of an invertible symplectic structure is Hamiltonian. We refer to \cite{dorfman1993dirac,mokhov1998symplectic, mokhov2001symplectic} for further details.

\subsection{Compatibility for Hamiltonian and symplectic operators} 

Two Hamiltonian operators $\mathcal{A},\mathcal{B}$ are said to be compatible if every linear combination $\lambda \mathcal{A}+\mu \mathcal{B}$ is again a Hamiltonian operator \cite{magri1978simple}. An evolutionary system \eqref{eveq} is said to be bi-Hamiltonian if there exist two compatible Hamiltonian operators $\mathcal{A},\mathcal{B}$ such that it can be written as
\[
    u^i_t=\mathcal{A}^{ij}\frac{\delta H_1}{\delta u^j}=\mathcal{B}^{ij}\frac{\delta H_2}{\delta u^j}, \qquad i=1,2,\dots n.
\]
If, in addition, $\cA$ is invertible, then the recursion operator defined as $\cR = \cB \cA^{-1}$ is a Nijenhuis operator, i.e.\@ it satisfies 
\[     [ \mathcal{R} X , \mathcal{R} Y ] -  \mathcal{R}[ \mathcal{R} X , Y ] -\mathcal{R} [ X , \mathcal{R} Y ] + \mathcal{R}^2 [ X , Y ]=0, \qquad \forall X,Y\in {\fX} . \]

The notion of compatibility for symplectic operators is, in the general setting, not analogously natural. A notion of compatible symplectic structures is possible for fully arbitrary operators through the corresponding Dirac structures \cite[Section 3.6]{dorfman1993dirac}. A simpler definition is available in the case of a pair of symplectic operators $\cJ,\cK$ where $\cJ$ is non-degenerate. In this case, $\cJ$ and $\cK$ are said to be compatible if the recursion operator
\[
    \mathcal{R}=\cJ^{-1}\mathcal{K},
\]
is a Nijenhuis operator.
In case $\mathcal{J}$ and $\mathcal{K}$ are both non-degenerate, this condition is equivalent to the requirement that 
$\lambda \cJ^{-1}+\mu \cK^{-1}$ is a Hamiltonian operator for every $\lambda, \mu$, i.e.\@ that $\mathcal{J}^{-1}$ and $\mathcal{K}^{-1}$ are compatible as Hamiltonian structures. In analogy with the Hamiltonian framework, we then say that an evolutionary system \eqref{eveq} is {bi-symplectic} if there exist two compatible symplectic operators $\mathcal{J},\mathcal{K}$ and two functionals $G_1,G_2$ such that 
\[
    (\mathcal{J})_{ij}\, u^j_t=\frac{\delta G_1}{\delta u^i}\qquad \text{and}\qquad  (\mathcal{K})_{ij}\, u^j_t=\frac{\delta G_2}{\delta u^i}, \qquad i=1,2,\dots n,
\]
are differential consequences of the evolutionary system \eqref{eveq}.

 Following the previous definitions, it is evident that if a system is bi-Hamiltonian with non-degenerate Hamiltonian operators $\cA,\cB$, then it is also bi-symplectic with operators $\cJ = \cA^{-1}$ and $\cK = \cB^{-1}$. Indeed,
\begin{align*}
    &u^i_t=\cA^{ij}\frac{\delta H_1}{\delta u^j}\qquad \Longrightarrow \qquad \cJ_{ij}u^j_t=\frac{\delta H_1}{\delta u^j}, \qquad i=1,2,\dots , n,\\[5pt]
    &u^i_t=\cB^{ij}\frac{\delta H_2}{\delta u^j}\qquad \Longrightarrow \qquad \cK_{ij}u^j_t=\frac{\delta H_2}{\delta u^j}, \qquad i=1,2,\dots, n.
\end{align*}

Now, if there are two compatible Hamiltonian operators $\cA_0$ and $\cA_1$, then the recursion operator is $\cR = \cA_1 \circ \cA_0^{-1}$ and we (formally) have Hamiltonian operators $\cA_k  = \cR \circ \cA_{k-1}$ and symplectic operators $\cJ_k = \cA_k^{-1} = \cA_{k-1}^{-1} \circ \cR^{-1}$. We can write these as 
\begin{equation}
    \label{Ap}
    \cJ_k = \ell_{p_k} - \ell_{p_k}^*,
\end{equation}
so the Lagrangian
\[
    L = p_k[u] u_{t_j} - h_{j+k}[u] 
\]
yields the Euler-Lagrange equation
\[  
    \cJ_k {u}_{t_j} = - \frac{\delta H_{j+k}}{\delta {u}} .
\]

\section{Duality of formalisms and integrability}\label{sec2}

\subsection{Hamiltonian potential variables}\label{sec-pot}
      
The simplest example of a differential Hamiltonian operator is $\mathcal{A}=\partial_x$. This provides the first Poisson structure of the KdV equation:
\begin{equation*}
    u_t=6uu_x+u_{xxx}=\partial_x\,\frac{\delta}{\delta u}\left( u^3 - \frac12 u_x^2 \right).
\end{equation*}
In this case, one can consider a new variable $\bar{u}$ such that $u = \partial_x \bar u$. This new variable $\bar u$ is known as a potential variable. Applying this differential change of variables to the KdV equation and integrating with respect to $x$, we obtain the potential KdV equation
\begin{equation*}
    \bar{u}_t=3\bar{u}^2_x+\bar{u}_{xxx},
\end{equation*}
for which a Lagrangian is known.

It is no coincidence that the relation between $\bar u$ and $u$ is given by the Hamiltonian operator $\partial_x$. We will see below that the suitable transformation can be expressed in general (with abuse of notation) by $u = \cA \bar u$. To emphasise this, we will refer to $\bar u$ as the \emph{Hamiltonian potential variable}. It can also be thought of as a variable dual to the original $u$, because, geometrically, $\cA$ maps covectors into vectors.

\paragraph{Formal construction of Hamiltonian potential variables.}

Consider two finite-dimensional vector spaces $U$, $\bar U$ that are dual to each other with a bilinear pairing $(\cdot,\cdot)$. They will both serve as possible spaces of dependent variables. As independent variable, take $x \in \R$. Now, as phase space we can use a space of sufficiently regular functions $\R \to U$ or $\R \to \bar U$. For example, we can take
\begin{align*}
    \cF = \{ u: \R \to U \mid f \text{ smooth, bounded, and } u(x) \text{ rapidly decreasing as } x \to - \infty \} , \\
    \bar \cF = \{ \bar u: \R \to \bar U \mid f \text{ smooth, bounded, and } \bar u(x) \text{ rapidly decreasing as } x \to + \infty \} .
\end{align*}
Then the paring $(\cdot,\cdot)$ extends to a pairing $\langle\cdot,\cdot\rangle$ between $\cF$ and $\bar \cF$:
\[ \langle u, \bar v \rangle = \int (u(x),\bar v(x)) \dd x .\]
The elements of $\cF$ and $\bar \cF$ can be seen as sections of bundles $E \to \R$ and $\bar E \to \R$ with fibres $U$ and $\bar U$ respectively. The additional requirements in the definitions of $\cF$ and $\bar \cF$ are introduced to make sure the integral defining the pairing is well-defined.

Since $U$ and $\bar U$ are vector spaces, the (co)tangent bundles of $\cF$ and $\bar \cF$ are trivial bundles:
\[ T \cF \simeq \cF \times \cF, \qquad  T^* \cF \simeq \cF \times \bar \cF, \qquad  T \bar \cF \simeq \bar \cF \times \bar \cF, \qquad  T^* \bar \cF \simeq \bar \cF \times \cF. \]
In particular, we can think of vector fields on $\cF$ as maps $\cF \to \cF$, 1-forms on $\cF$ as maps $\cF \to \bar \cF$, etc.

Consider a functional $H: \cF \to \R$. A vector field $X: \cF \to \cF$ acts on the functional, creating a new functional $XH: \cF \to \R$ defined by
\[ (X H)[u] := \frac{\dd}{\dd \varepsilon} H[u + \varepsilon X[u]]\Big|_{\varepsilon=0},  \qquad u \in \cF \] 
The variational derivative of $H: \cF \to \R$ is its differential, i.e.\@ the 1-form $\var{}{H}{u}: \cF \to \bar \cF$ satisfying:
\[ \pair{ \var{}{H}{u} , X } = X H , \qquad \text{i.e. } \pair{ \var{}{H}{u}[u] , X[u] } = (X H)[u] \]
for all vector fields $X: \cF \to \cF$. If $H$ is of the form $\int h \dd x$, then this definition agrees with the formula \eqref{varder}. For a functional  $H: \Bar \cF \to \cR$, we define $\var{}{\bar H}{\bar u}: \bar \cF \to \cF$ analogously.

In this setting, a Hamiltonian operator $\cA: \bar \fX \to \fX$ maps a function $\varphi: \cF \to \bar \cF$ to a function $\cF \to \cF$. We say that $\cA$ is constant (or has constant coefficients) if its action on the fibre does not depend on the base point, i.e.\@ if there is an operator $A: \bar \cF \to \cF$ such that $(\cA \varphi)[u] = A(\varphi[u])$.

\paragraph{Operators in Hamiltonian potential coordinates.} 
 We now investigate how this differential change of variables $u = A \bar u$ affects the \emph{nature} of $\cA$.

\begin{lemma}
    Consider a constant Hamiltonian operator $\cA: T^*\cF \to T \cF$ given fibre-wise by $A: \bar \cF \to \cF$ and a functional $H: \cF \to \R$. Then
    \[ \var{}{(H \circ \cA)}{\bar u} = -\cA \left(\var{}{H}{u}\right) , \]
    where $u = A \bar u$.
\end{lemma}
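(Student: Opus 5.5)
The plan is to verify the identity by pairing both sides with an arbitrary test vector field on $\bar\cF$ and using the defining property of the variational derivative. Write $\bar H := H\circ\cA$, i.e.\ $\bar H[\bar u] = H[A\bar u]$. Let $\bar X:\bar\cF\to\bar\cF$ be any vector field on $\bar\cF$. By definition, $\var{}{\bar H}{\bar u}:\bar\cF\to\cF$ is characterised by
\[ \pair{ \var{}{\bar H}{\bar u}[\bar u], \bar X[\bar u] } = \frac{\dd}{\dd\varepsilon} H[A(\bar u+\varepsilon \bar X[\bar u])]\Big|_{\varepsilon=0}. \]
So it suffices to show that the right-hand side equals $\pair{ -A\,\var{}{H}{u}[A\bar u], \bar X[\bar u]}$ for all $\bar X$.

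First I use that $A$ is linear and constant, i.e.\ independent of the base point. Then $A(\bar u+\varepsilon\bar X) = u + \varepsilon A\bar X$ with $u=A\bar u$. The derivative above is therefore the action of the vector field $u\mapsto A\bar X$ on $H$ at $u$. By the definition of $\var{}{H}{u}$ this equals $\pair{\var{}{H}{u}[u], A\bar X[\bar u]}$, with the pairing taken between $\bar\cF$ and $\cF$. This is the chain rule, and the constancy of $\cA$ is exactly what makes the Fréchet derivative of $\bar u\mapsto A\bar u$ equal to $A$ itself, with no $\ell_\cA$ term.

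Second, I move $A$ across the pairing. By A.1, $\cA^*=-\cA$, so for constant coefficients $\pair{\varphi, A\psi} = -\pair{A\varphi,\psi}$ for $\varphi,\psi\in\bar\cF$. Applying this with $\varphi=\var{}{H}{u}[u]$ and $\psi=\bar X[\bar u]$ gives $-\pair{A\,\var{}{H}{u}[u], \bar X[\bar u]}$. Since $\bar X$ is arbitrary, nondegeneracy of the pairing yields the claim.

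The main obstacle is justifying the integration by parts behind $A^*=-A$ in the function-space setting. Concretely, the boundary terms must vanish when $A$ is a differential operator such as $\partial_x$ acting between $\bar\cF$ and $\cF$. Here I would use the decay conditions built into $\cF$ and $\bar\cF$: one factor decays as $x\to-\infty$ and the other as $x\to+\infty$, and both are bounded. Hence products of $\var{}{H}{u}$, $\bar X$ and their derivatives vanish at both ends, and the formal adjoint identity from A.1 holds as an identity of integrals. I would also check that $A\bar X$ lies in $\cF$, so that it is an admissible vector field; I expect this, rather than the algebra, to need the most care.
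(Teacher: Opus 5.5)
Your proof is the paper's proof: pair with an arbitrary $\bar X$, use the constancy and linearity of $A$ to reduce the $\varepsilon$-derivative of $H[u+\varepsilon A\bar X[\bar u]]$ to $\pair{\var{}{H}{u}[u], A\bar X[\bar u]}$, then move $A$ across the pairing using $\cA^*=-\cA$. One detail in your boundary-term discussion is off: here both $\var{}{H}{u}[u]$ and $\bar X[\bar u]$ lie in $\bar\cF$, so they decay only at $+\infty$ and the stated conditions say nothing about the term at $-\infty$. The paper does not try to justify this step analytically; it uses $\cA^*$ as the formal adjoint, i.e.\ it works modulo total $x$-derivatives.
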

\begin{proof}
    Take any $\bar X: \bar \cF \to \bar \cF$, $\bar u \in \bar \cF$, and let $u = A \bar u$:
    \begin{align*}
        \pair{  \var{}{(H \circ \cA)}{\bar u}[\bar u] , \bar X[\bar u] }
        & = (\bar X (H \circ \cA))[\bar u]       
        = \frac{\dd}{\dd \varepsilon} \Big|_{\varepsilon=0} H\big[\cA\big[\bar u + \varepsilon \bar X[\bar u] \big]\big] \\
        &= \frac{\dd}{\dd \varepsilon} \Big|_{\varepsilon=0} H\big[u + \varepsilon \cA \bar X[\bar u]\big] 
        = \pair{  \var{}{H}{u}[u] , \cA \bar X[\bar u] } 
    = \pair{ \cA^* \var{}{H}{u}[u] , \bar X[\bar u] } .
    \end{align*}
    The result follows because $\cA$ is skew-adjoint.
\end{proof}  

Note that the Hamiltonian potential variables are given by the formula
\[
    A:\bar{\cF}\longrightarrow \cF, \qquad \bar{u}^i\mapsto  {u}^i=A^{ij}\left(\frac{\delta \bar U}{\delta \bar u^j}\right), \qquad i=1,2,\dots n,
\]
where $\bar U$ is the functional 
\[ \bar U=\int{\sum_{l=1}^n\frac{(\bar u^l)^2}{2}\, dx}. \]
We will also refer to this mapping with the slight abuse of notation $u = \cA\bar u$. This transformation turns the evolutionary Hamiltonian system \eqref{hamiltonian} into
\begin{equation}
\label{bar-symplectic}
A_{ij}\,  \bar u_t^j = u_t^i = A^{ij}\left( \var{}{H}{u^j}\right) = - \var{}{H}{\bar u^i} , \qquad i=1,2,\dots n.
\end{equation}
This is a PDE in symplectic form, assuming that the constant operator $\bar \cA: T \bar \cF \to T^* \bar \cF$, given fibre-wise by $A: \bar \cF \to \cF$, is a symplectic operator. To show that this is indeed the case, we use the following Lemma.

\begin{lemma}
    Let $X \in \bar \cF$ be a constant vector on $\bar \cF$ and $\eta: \bar \cF \to \cF$ a 1-form. Denote by $\pair{X, \eta}$ the functional $\bar \cF \to \R$ defined by $\pair{X, \eta}[\bar u] = \pair{X, \eta[\bar u]}$, then
    \[ \var{}{}{\bar u} \pair{X, \eta} = \ell^*_\eta X, \]
    where $\ell^*_\eta$ is the adjoint of the Fréchet derivative of $\eta$, defined component-wise by Equation \eqref{frechet-adjoint}.
\end{lemma}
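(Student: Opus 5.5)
We prove the identity by testing against an arbitrary vector field, in the same way as the previous Lemma. Fix a vector field $\bar Y: \bar \cF \to \bar \cF$ and a point $\bar u \in \bar \cF$. By the defining property of the variational derivative,
\[ \pair{ \var{}{}{\bar u}\pair{X,\eta}[\bar u] , \bar Y[\bar u] } = (\bar Y \pair{X,\eta})[\bar u] = \frac{\dd}{\dd \varepsilon}\Big|_{\varepsilon=0} \pair{ X, \eta[\bar u + \varepsilon \bar Y[\bar u]] } . \]
Since $X$ is constant, it does not depend on $\bar u$, and the pairing is bilinear. We can therefore move the $\varepsilon$-derivative inside the pairing. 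The result is $\pair{X, \frac{\dd}{\dd\varepsilon}|_{\varepsilon=0}\eta[\bar u+\varepsilon \bar Y]}$.

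The key step is to identify this directional derivative of $\eta$ with the Fréchet derivative $\ell_\eta \bar Y$ from \eqref{frechet}, applied component-wise to $\eta$. For a differential function $\eta_i[\bar u] = \eta_i(x,\bar u,\bar u_x,\dots)$, the chain rule gives
\[ \frac{\dd}{\dd\varepsilon}\Big|_{\varepsilon=0}\eta_i[\bar u+\varepsilon\bar Y] = \sum_k \der{\eta_i}{\bar u^j_{kx}}\,\partial_x^k \bar Y^j = (\ell_\eta \bar Y)_i . \]
Hence the quantity equals $\pair{X, \ell_\eta \bar Y}$.

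Next we use the formal adjoint \eqref{frechet-adjoint}. Writing the pairing as $\int X^i\,\der{\eta_i}{\bar u^j_{kx}}\partial_x^k\bar Y^j\dd x$, we integrate by parts $k$ times. This moves each $\partial_x^k$ onto $X^i\der{\eta_i}{\bar u^j_{kx}}$, which produces exactly $\pair{\ell^*_\eta X, \bar Y}$. The boundary terms vanish by the decay conditions built into $\cF$ and $\bar\cF$. Equivalently, in the quotient by total derivatives the two densities define the same functional. Since $\bar Y$ is arbitrary, the pairing is nondegenerate, and the identity $\var{}{}{\bar u}\pair{X,\eta} = \ell^*_\eta X$ follows.

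A cleaner alternative avoids the limiting argument. Apply the formula \eqref{varder} directly to the density $f = X^i\eta_i[\bar u]$, where $X^i$ depends on $x$ only. Then $\der{f}{\bar u^j_{kx}} = X^i\der{\eta_i}{\bar u^j_{kx}}$, and \eqref{varder} gives $\sum_k(-1)^k\partial_x^k\big(X^i \der{\eta_i}{\bar u^j_{kx}}\big)$. This is literally the $j$-th component of $\ell^*_\eta X$. We will present this computation, since it is essentially one line.

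The only point needing care, and the main obstacle, is the meaning of "constant vector" $X$. It must be independent of $\bar u$ and its derivatives, but it may depend on $x$. Under this reading $\partial_x$ does act on $X$, and $X$ must be kept inside the derivative in \eqref{frechet-adjoint}, not pulled out. We also need the index placement: pairing a vector of $\bar\cF$ with a 1-form valued in $\cF$ makes $\ell^*_\eta X$ land in $\cF$, consistent with the variational derivative of a functional on $\bar\cF$.
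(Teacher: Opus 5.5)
Your proposal is correct. The direct computation you say you will present (applying \eqref{varder} to the density $X^i\eta_i[\bar u]$ and keeping the $x$-dependent $X^i$ inside $(-\partial_x)^k$) is exactly the paper's one-line proof, and the preliminary argument that pairs with an arbitrary $\bar Y$ and integrates by parts is a valid but redundant justification of the same identity, resting on the same formal treatment of boundary terms.
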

\begin{proof}
We have
\[
    \var{}{}{\bar u} \pair{X, \eta} 
    = \var{}{}{\bar u} (X^i \eta_i[u]) 
    = \sum_{k \geq 0} (-\partial_x)^k \left( X^i \der{\eta_i}{\bar u_{kx}} \right) 
    \qedhere
\]
\end{proof}

Now we are ready to show that Equation \eqref{bar-symplectic} is symplectic and, hence, that is an Euler-Lagrange equation.

\begin{theorem}
    \label{thm-Asymplectic}
    If $\cA: T^* \cF \to T \cF$ is a constant Hamiltonian operator, given fibre-wise by $A: \bar \cF \to \cF$, then the constant operator $\bar \cA: T \bar \cF \to T^* \bar \cF$, defined fibre-wise by the same $A: \bar \cF \to \cF$, is symplectic. 
\end{theorem}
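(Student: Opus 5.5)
We check the two defining conditions C.1 and C.2 for $\bar\cA$, viewed as a map $T\bar\cF\to T^*\bar\cF$.

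Condition C.1 is immediate. The formal adjoint of $\bar\cA$ is computed from the pairing $\langle\cdot,\cdot\rangle$ between $\cF$ and $\bar\cF$. That pairing is symmetric in the sense that the same integral $\int(u,\bar v)\dd x$ is used in both directions. So the adjoint of the fibre-wise operator $A$ is the same whether $A$ is read as $\bar\cF\to\cF$ in the Hamiltonian setting or as the fibre map of $T\bar\cF\to T^*\bar\cF$. Skew-adjointness $\cA^*=-\cA$ (condition A.1) therefore gives $\bar\cA^*=-\bar\cA$ directly. Concretely, for constant vectors $X,Y\in\bar\cF$ we have $\pair{X,AY}=\pair{A^*X,Y}=-\pair{AX,Y}$.

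For condition C.2, the key point is that $\bar\cA$ is constant. For any vector fields $X^1,X^2$, the 1-form $\bar\cA X^1$ at $\bar u$ is $A(X^1[\bar u])$. Its Fréchet derivative is therefore
\[\ell_{\bar\cA X^1}X^2=A\big(\ell_{X^1}X^2\big),\]
because the operator $A$ has coefficients independent of $\bar u$ and so commutes with differentiation in $\bar u$.

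The cyclic sum in C.2 then becomes
\[\pair{A\,\ell_{X^1}X^2,X^3}+\pair{A\,\ell_{X^2}X^3,X^1}+\pair{A\,\ell_{X^3}X^1,X^2}.\]
This sum does not vanish term by term in general. The cleanest route is to observe that C.2 is the closedness of $\omega(\xi,\eta)=\int\xi^iA_{ij}\eta^j\dd x$, where the vectors $\xi,\eta$ are elements of $\bar\cF$. This 2-form has coefficients independent of the base point $\bar u$, so $\dd\omega=0$ trivially. Put differently, the defining expression of C.2 is tensorial: it is the closedness condition, not the bracket-dependent expression. Evaluated on constant vector fields $X^i$, all the Fréchet derivatives $\ell_{X^i}$ vanish, so every term is zero. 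I would therefore either evaluate C.2 on constant vector fields and invoke tensoriality, or exhibit a primitive via Theorem \ref{thm-helm}.

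I prefer exhibiting a primitive, as it also supplies the Lagrangian. Take $p=\tfrac12 A\bar u$, a linear 1-form on $\bar\cF$. Its Fréchet derivative is $\ell_p=\tfrac12 A$, since $A$ is constant. Hence
\[\ell_p-\ell_p^*=\tfrac12A-\tfrac12A^*=A,\]
using skew-adjointness. Any operator of the form $\ell_p-\ell_p^*$ satisfies C.1 and C.2; this is the converse direction of Theorem \ref{thm-helm}, i.e. $\dd\dd=0$. That converse can be checked by the standard computation, which is symmetric in the second Fréchet derivatives of $p$, and here it is trivial because $p$ is linear and its second derivatives vanish.

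The main obstacle is conceptual rather than computational: justifying that $\cA^*=-\cA$ transfers to $\bar\cA^*=-\bar\cA$ under the identification of $A$ as a map on $T\bar\cF$. This needs care with the boundary conditions built into $\cF$ and $\bar\cF$, which ensure that integration by parts produces no boundary terms. I would handle it by noting that $A$ is a constant-coefficient matrix differential operator, so its formal adjoint is computed purely algebraically and independently of which bundle it is considered on.
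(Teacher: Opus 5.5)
Your proof is correct, but it takes a different route from the paper.

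The paper proves that $\omega(X,Y)=\pair{X,\cA Y}$ is closed by following Mokhov's argument for inverses of Hamiltonian operators. For constant $X,Y,Z\in\bar\cF$ it introduces the linear functionals $I_X[\bar u]=\pair{AX,\bar u}=-\pair{X,u}$. It then uses the identity $\cA\frac{\delta}{\delta u}=-\frac{\delta}{\delta\bar u}$ from the first Lemma to identify $\dd\omega(X,Y,Z)$ with minus the cyclic sum of $\{I_X,\{I_Y,I_Z\}_\cA\}_\cA$. Closedness then follows from the Jacobi identity A.2.

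You instead observe that $\omega$ has coefficients independent of $\bar u$. Once C.1 makes it a genuine 2-form, closedness is automatic. You back this up with the explicit primitive $p=\tfrac12 A\bar u$, for which $\ell_p-\ell_p^*=A$.

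What your route buys:
\begin{itemize}
\item It is more elementary, and it shows that only skew-adjointness is really needed. For constant $A$ the Jacobi identity holds trivially. Indeed, each term $\pair{X,\ell^*_\zeta Y}$ in the paper's computation vanishes, because $\zeta=AZ$ does not depend on $\bar u$.
\item It directly produces the Lagrangians $\tfrac12 (A\bar u)\cdot\bar u_t-h$ used later in the paper, e.g.\ $p=\tfrac12\bar u_x$ for KdV.
\end{itemize}

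What the paper's route buys is a template. Transporting closedness between $\cF$ and $\bar\cF$ via $\cA\frac{\delta}{\delta u}=-\frac{\delta}{\delta\bar u}$ is exactly the mechanism reused in Theorem~\ref{thm_1}. There the operators $\bar\cB(\bar\cA^{-1}\bar\cB)^n$ are not constant, so no constant-coefficient shortcut is available.

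One presentational point: you call the C.2 expression ``tensorial'' right after showing that its literal form picks up terms $A\,\ell_{X^1}X^2$, which is confusing. It is cleaner to say that C.2 is read on constant vector fields, the form in which it is equivalent to $\dd\omega=0$, and there every term vanishes.
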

\begin{proof}
    We show that the 2-form $\omega(X,Y) = \pair{X, \cA Y}$ on $\bar \cF$ is closed. Our argument is similar to the proof that the inverse of a Hamiltonian operator is symplectic in \cite[Section 1.2]{mokhov1998symplectic}. Choose vectors $X,Y,Z \in \bar \cF$, representing constant vector fields on $\bar \cF$, and let $\xi = A X$, $\eta = A Y$, $\zeta = A Z \in \cF$.
    We have
    \begin{align*}
        \dd \omega(X, Y, Z)
        &= \pair{ X, \var{}{}{\bar u} \pair{Y, \cA Z} } - \omega([X,Y],Z) + cycl. = \pair{ X, \ell^*_\zeta Y}  + cycl. 
    \end{align*}
    
    Consider $I_X,I_Y,I_Z : \bar \cF \to \R$ given by 
    \[ I_X[\bar u] = \pair{\xi,\bar u} = \pair{\cA X, \bar u} = - \pair{X, u} \qquad \text{etc.} \]
    We have
    \[ \{I_X,I_Y\}_\cA = \pair{ \var{}{I_X}{u}, \cA \var{}{I_Y}{u} } = \pair{ X , \eta} \]
    and
    \[ \{I_X, \{I_Y, I_Z\}_\cA \}_\cA 
    = \pair{ X , \cA \var{}{}{u} \pair{ Y , \zeta} }
    = \pair{ X , -\var{}{}{\bar u} \pair{ Y , \zeta} }
    = -\pair{ X, \ell_{\zeta}^* Y  } ,\]
    hence
    \[ (\dd \omega)(\bar \xi, \bar \eta, \bar \zeta) = -\{I_X, \{I_Y, I_Z\}_\cA \}_\cA + cycl. = 0 , \]
    where the last equality holds because $\cA$ is a Hamiltonian operator.
\end{proof}

Note that both $\cA$ and $\bar \cA$ correspond to the same operator $\bar \cF \to \cF$ on the fibre. However, formally, the operator $\bar \cA$ is considered as a 2-form, in coordinates:
\[
    \mathcal{A}=\mathcal{A}^{ij}\,\,\frac{\delta}{\delta u^i}\wedge \frac{\delta}{\delta u^j} \qquad \text{and}\qquad \bar{\mathcal{A}}=\bar{\mathcal{A}}_{ij}\, \, d\bar u^i\wedge d\bar u^j.
\]
As a consequence, $\mathcal{A}$ acts on differential forms, whereas $\bar{\mathcal{A}}$ acts on vector fields. We stress that this is a generalisation of the corresponding classical case of Poisson tensors $\pi=(\pi^{ij})$ and symplectic forms $\omega=(\omega_{ij})$.

We finally conclude that if $\cA$ is invertible, constant and part of a bi-Hamiltonian pair, then every Hamiltonian operator in the hierarchy is symplectic with respect to the Hamiltonian potential variable $\bar u$:
\begin{theorem}
    \label{thm_1}
    Let $\cA: T^*\cF \to T \cF$ be an invertible constant Hamiltonian operator, given fibre-wise by $A: \bar \cF \to \cF$, and let $\cB: T^*\cF \to T\cF$ be a Hamiltonian operator compatible with $\cA$. Let $\bar \cA: T \bar \cF \to T^* \bar \cF$ be the constant symplectic operator defined by $A: \bar \cF \to \cF$ and define $\bar \cB: T \bar \cF \to T^* \bar \cF$ through its fibre-wise action $\cB[\bar u]: \bar \cF \to \cF$ by
    \[ \bar \cB[\bar u] := \cB[A \bar u] .\]
    Then $\bar \cA$, $\bar\cB$, and in general $\bar \cB (\bar\cA^{-1} \bar\cB)^{n}$ for $n\geq0$, are symplectic operators on $\bar \cF$.
\end{theorem}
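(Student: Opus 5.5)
The plan is to reduce everything to a single statement: \emph{if $\cC$ is a Hamiltonian operator compatible with the constant operator $\cA$, then the operator $\bar\cC$ on $\bar\cF$ defined fibre-wise by $\bar\cC[\bar u]:=\cC[A\bar u]$ is symplectic.} The case $\cC=\cA$ is Theorem~\ref{thm-Asymplectic}, and $\cC=\cB$ gives the second claim. For the general claim, write $\cR=\cB\cA^{-1}$. Since $\cA$ is constant, the substitution $u=A\bar u$ commutes with composition of operators. Hence
\[
\bar\cB(\bar\cA^{-1}\bar\cB)^n=\overline{\cB(\cA^{-1}\cB)^n}=\overline{\cR^{n}\cB}=\overline{\cR^{n+1}\cA}.
\]
By the standard Magri theory for a compatible pair with $\cA$ invertible (so that $\cR$ is Nijenhuis, as recalled in the introduction), every operator $\cA_k=\cR^k\cA$ is Hamiltonian, and these operators are pairwise compatible. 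In particular each $\cA_{n+1}$ is compatible with $\cA_0=\cA$. I would quote this from \cite{magri1978simple} or \cite{dorfman1993dirac} rather than reprove it.

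To prove the reduced statement, first note that skew-adjointness (C.1) of $\bar\cC$ is immediate, since it is the same fibre operator as $\cC$, which satisfies A.1. For closedness, I follow the proof of Theorem~\ref{thm-Asymplectic}. Set $\omega(X,Y)=\pair{X,\bar\cC Y}$ and take constant vectors $X,Y,Z\in\bar\cF$. Constant vector fields commute, so
\[
\dd\omega(X,Y,Z)=X\big(\pair{Y,\cC[A\bar u]Z}\big)+\text{cycl.}
\]
By the chain rule, differentiating in the direction $X$ in $\bar u$ is the same as differentiating in the direction $AX$ in $u$. Hence
\[
\dd\omega(X,Y,Z)=\pair{\ell_\cC(AX)(Z),Y}+\text{cycl.},
\]
with $\ell_\cC$ as in A.2. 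This is, up to an overall sign and ordering conventions, exactly the mixed Schouten bracket $[\cA,\cC](X,Y,Z)$. In that bracket the terms containing $\ell_\cA$ vanish because $\cA$ is constant. Now compatibility means $[\cA+\cC,\cA+\cC]=0$, while $[\cA,\cA]=[\cC,\cC]=0$ and the bracket is bilinear and symmetric in its two entries. Together these give $[\cA,\cC]=0$, hence $\dd\omega=0$ on constant vectors.

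It remains to pass from constant vectors to arbitrary vector fields. Both $\dd\omega$ and the Schouten trivector are tensorial, i.e.\ linear over functions. So vanishing on constant vectors, which span each tangent space $T_{\bar u}\bar\cF\simeq\bar\cF$, suffices.

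I expect the main obstacle to be the careful identification of $\sum_{\text{cycl}}\pair{\ell_\cC(AX)Z,Y}$ with the polarized Schouten bracket $[\cA,\cC]$ in the convention of A.2, keeping track of signs coming from $A^*=-A$ and the cyclic ordering. A sign slip there would not break the argument, since the conclusion is that the expression vanishes, but the matching of terms must be exact. As a cross-check, I would redo the computation with linear functionals $I_X$ and the Jacobi identity of the pencil $\{\cdot,\cdot\}_{\cA+\lambda\cC}$, taking the $\lambda$-linear part, as in the proof of Theorem~\ref{thm-Asymplectic}. The second delicate point is citing the hierarchy result cleanly, including the fact that $\cR^{k}\cA$ is compatible with $\cA$ and not just Hamiltonian.
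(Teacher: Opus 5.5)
Your proof is correct, but it takes a different route from the paper's. The paper never uses Schouten brackets or the Magri hierarchy theorem. It uses compatibility only to say that $\cA+\lambda\cB$ is Hamiltonian for every $\lambda$. Hence its inverse $\sum_{n}(-\lambda)^n\cA^{-1}\cR^n$ is symplectic on $\cF$, and therefore so is each coefficient $\cA^{-1}\cR^n$. It then pulls these closed 2-forms back along the linear map $\bar u\mapsto A\bar u$, and pullback commutes with $\dd$. Since $\pair{A\xi,\cA^{-1}\cR^nA\eta}=-\pair{\xi,\cR^n\cA\,\eta}$, this gives closedness of $\overline{\cR^n\cA}$ for all $n$ at once. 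That family covers $\bar\cA$, $\bar\cB$ and $\bar\cB(\bar\cA^{-1}\bar\cB)^{n-1}$.

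Your argument instead isolates a sharper criterion. For constant $\cA$ and skew-adjoint $\mathcal{C}$, the exterior derivative of $\pair{X,\bar{\mathcal{C}}Y}$ on constant vectors is exactly the cross term $\sum_{\mathrm{cycl}}\pair{\ell_{\mathcal{C}}(\cA\varphi_1)(\varphi_2),\varphi_3}$ of the pencil's Schouten bracket. So $\bar{\mathcal{C}}$ is symplectic precisely when $[\cA,\mathcal{C}]=0$, and the Jacobi identity of $\mathcal{C}$ itself plays no role. This makes the mechanism transparent, and it handles $\bar\cB$ without inverting $\cA$. The cost is that for $n\geq 1$ you must import the Poisson--Nijenhuis result that $\cR^{n+1}\cA$ is compatible with $\cA$. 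That is a heavier external input than the paper's single power-series expansion, although both arguments work at the same formal level with nonlocal operators.
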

\begin{proof}       
    Theorem \ref{thm-Asymplectic} states that $\bar \cA$ is a symplectic operator on $\bar \cF$. 
    
    Compatibility implies that $(\cA + \lambda \cB): T^*\cF \to T\cF$ is Hamiltonian for all $\lambda$, so 
    \[ (\cA + \lambda \cB)^{-1} = \cA^{-1} - \lambda \cA^{-1} \cB \cA^{-1} + \lambda^2 \cA^{-1} \cB \cA^{-1} \cB \cA^{-1} - \ldots \]
    is symplectic on $\cF$ for all $\lambda$. In particular, this implies that $\cA^{-1} (\cB \cA^{-1})^{n}$ is symplectic on $\cF$ for all $n \geq 0$. Consider constant covectors $\xi,\eta,\zeta \in \bar \cF$ and define $X = A \xi$, $Y = A \eta$, $Z = A \zeta$.
    Then
    \begin{align*}
        0 &= \pair{X , \var{}{}{u} \pair{Y, (\cA[u]^{-1} \cB[u])^{n} \cA[u]^{-1} Z }} + cycl. \\
        &= \pair{\cA[u] \xi, \var{}{}{u} \pair{\cA[u] \eta,  (\cA[u]^{-1} \cB[u])^{n} \zeta }} + cycl. \\
        &= \pair{\xi , \cA[u] \var{}{}{u} \pair{\eta, \cA[u] (\cA[u]^{-1} \cB[u])^{n-1} \zeta } } + cycl. \\
        &= -\pair{\xi , \var{}{}{\bar u} \pair{\eta, \bar \cB[\bar u] (\bar \cA[\bar u]^{-1} \bar \cB[\bar u])^n \zeta } } + cycl.,
    \end{align*}
    where $u = \cA \bar u$.
    
    Now since $\xi \in \bar \cF$, i.e.\@ it is a constant covector on $\cF$, we can identify it with a constant vector on $\bar \cF$, say $\xi = \bar X$. Similarly, set $\eta = \bar Y$ and $\zeta = \bar Z$. This shows that $\bar \cB (\bar \cA^{-1} \bar \cB)^n$ is symplectic on $\bar \cF$.
\end{proof}
Consequently: 
\begin{corollary}
    The operators $\bar \cA$ and $\bar \cB$ form Theorem \ref{thm_1} are compatible symplectic operators.
\end{corollary}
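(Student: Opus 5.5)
The plan is to use the definition of compatibility for symplectic operators given in the Introduction. For a pair $\cJ,\cK$ with $\cJ$ non-degenerate, compatibility means that the recursion operator $\cJ^{-1}\cK$ is Nijenhuis. Here $\bar\cA$ is constant and invertible, since $\cA$ is, so we take $\cJ=\bar\cA$ and $\cK=\bar\cB$. Theorem \ref{thm_1} already tells us that both are symplectic operators on $\bar\cF$. What remains is to show that $\bar\cR:=\bar\cA^{-1}\bar\cB$ is a Nijenhuis operator on vector fields on $\bar\cF$.

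The route I would take is to transport the known Nijenhuis property on the Hamiltonian side. Since $\cA,\cB$ are compatible and $\cA$ is invertible, the Introduction states that $\cR=\cB\cA^{-1}$ is Nijenhuis on $\fX$, that is, on vector fields on $\cF$. Its adjoint $\cR^*=\cA^{-1}\cB$ (using skew-adjointness of $\cA$ and $\cB$) then acts on 1-forms on $\cF$. The linear map $u=A\bar u$ identifies vector fields on $\bar\cF$ with 1-forms on $\cF$, and under this identification $\bar\cR[\bar u]$ acts fibre-wise as $\cA[u]^{-1}\cB[u]$.

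The key steps, in order, are as follows.
\begin{enumerate}
\item Write down the Nijenhuis torsion of $\bar\cR$, using the Lie bracket of evolutionary vector fields on $\bar\cF$.
\item Express everything via the pullback $u=A\bar u$. Lie brackets on $\bar\cF$ correspond, via $\cA$, to the Magri--Koszul type bracket $[\alpha,\beta]_\cA$ of 1-forms on $\cF$. For constant $\cA$ this is just the pulled-back vector-field bracket, because $\cA$ intertwines $\bar X\mapsto A\bar X$ with the derivative of $\bar u\mapsto A\bar u$.
\item Invoke the standard fact from bi-Hamiltonian (Poisson--Nijenhuis) theory: if $\cA,\cB$ are compatible and $\cA$ is invertible, then $\cR^*=\cA^{-1}\cB$ has vanishing torsion with respect to $[\cdot,\cdot]_\cA$. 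This is equivalent to $\cR$ being Nijenhuis on $\fX$, because $\cA$ is a Lie algebra morphism from $([\cdot,\cdot]_\cA)$ to vector fields and $\cA\cR^*=\cR\cA$.
\end{enumerate}

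A more robust alternative avoids Nijenhuis torsions entirely, should the bracket identification get messy. Rerun the argument of Theorem \ref{thm_1} with $\cA+\lambda\cB$ in place of $\cB$. Note that $\cB+\mu\cA$ is compatible with $\cA$, so Theorem \ref{thm_1} applied to that pair shows that $\bar\cB+\mu\bar\cA$ is symplectic for every $\mu$. More generally, $\bar\cB(\bar\cA^{-1}\bar\cB)^n$ is symplectic for all $n$, which is the bi-symplectic analogue of the Lenard hierarchy. Combined with the Introduction's remark that, for non-degenerate operators, compatibility is equivalent to $\lambda\bar\cA^{-1}+\mu\bar\cB^{-1}$ being Hamiltonian, one can show that $\bar\cR$ is Nijenhuis. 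To do so, note that closedness of $\omega_0=\bar\cA$, $\omega_1=\bar\cB$ and $\omega_2=\bar\cB\bar\cA^{-1}\bar\cB$, all of the form $\omega_k=\omega_0\bar\cR^k$, is the classical criterion (a symplectic--Nijenhuis structure) ensuring that the torsion of $\bar\cR$ vanishes.

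I expect the main obstacle to be this last implication. It is not true in general for non-degenerate $\bar\cR$ that closedness of $\omega_0,\omega_0\bar\cR,\omega_0\bar\cR^2$ alone forces $N_{\bar\cR}=0$. So I would first try the direct transport argument from steps 1--3, checking carefully that the constant operator $A$ makes the bracket identification exact. The closedness of the $\omega_k$ supplied by Theorem \ref{thm_1} would serve only as a consistency check.
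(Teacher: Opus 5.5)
Your committed route, transporting the Nijenhuis property from the Hamiltonian side, is correct, but it is not the paper's argument. It can be stated more directly than through 1-forms on $\cF$ and the bracket $[\cdot,\cdot]_\cA$. Since $\bar\cA^{-1}$ acts fibre-wise as $A^{-1}$, one has $A\,\bar\cR[\bar u]\,A^{-1} = \cB[A\bar u]\,A^{-1} = \cR[A\bar u]$ with $\cR=\cB\cA^{-1}$. So $\bar\cR$ is simply the pullback of $\cR$ under the linear isomorphism $\bar u\mapsto A\bar u$, and the Nijenhuis torsion is natural under such maps.

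The paper's proof never uses that $\cR$ is Nijenhuis. It stays on the symplectic side: it takes from Theorem~\ref{thm_1} that $\omega_0=\bar\cA$, $\omega_1=\bar\cB$ and $\omega_2=\bar\cB\bar\cA^{-1}\bar\cB$ are closed. It then combines this with Dorfman's identity (3.14) and the nondegeneracy of $\omega_0$.

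Your version shows that the Nijenhuis part follows purely from the change of variables; Theorem~\ref{thm_1} is needed only to know that $\bar\cA$ and $\bar\cB$ are symplectic. The cost is that you import the Hamiltonian-side Nijenhuis theorem. You also treat $A$ as a genuine isomorphism of phase spaces so that Lie brackets transport; for $A=\partial_x$ this is where nonlocality hides, though at the formality level the paper already adopts. The paper's version is self-contained given Theorem~\ref{thm_1}, and it illustrates its theme that compatibility is encoded in the closedness of the family $\omega_k$.

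Your assessment of the alternative route is mistaken, and that alternative is exactly the paper's proof. Suppose $\omega_0$ is nondegenerate and $\bar\cR$ is $\omega_0$-symmetric, $\omega_0(\bar\cR X,Y)=\omega_0(X,\bar\cR Y)$, which is automatic here because $\bar\cA$ and $\bar\cB$ are skew. Expand each $\dd\omega_k$ by the invariant formula. All derivative terms cancel, and all but four bracket terms cancel, leaving
\[ \dd\omega_2(X,Y,Z) = \dd\omega_1(\bar\cR X,Y,Z) + \dd\omega_1(X,\bar\cR Y,Z) - \dd\omega_0(\bar\cR X,\bar\cR Y,Z) - \omega_0\big(N_{\bar\cR}(X,Y),Z\big), \]
where $N_{\bar\cR}(X,Y) = [\bar\cR X,\bar\cR Y]-\bar\cR[\bar\cR X,Y]-\bar\cR[X,\bar\cR Y]+\bar\cR^2[X,Y]$. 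Closedness of $\omega_0,\omega_1,\omega_2$ then gives $\omega_0(N_{\bar\cR}(X,Y),\cdot)=0$, hence $N_{\bar\cR}=0$. There is no counterexample under these hypotheses.

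Separately, your remark that $\bar\cB+\mu\bar\cA$ is symplectic for every $\mu$ carries no information, because closedness is linear. That is precisely why there is no naive pencil test for symplectic compatibility, and why the quadratic object $\omega_2$ is what matters.
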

\begin{proof}
The symplectic structures defined by the operators in Theorem \ref{thm_1} can be written as $\omega_0(X,Y) = \pair{X, \bar \cA Y}$ and
\[ \omega_k(X,Y) = \omega_0(X, \bar \cR^k Y ), \]
where $\bar \cR = \bar \cA^{-1} \bar \cB $. They have the property that $\omega_k(\bar \cR X,Y) = \omega_k(X, \bar \cR Y)$. Using this, one can verify by direct computation that \cite[(3.14)]{dorfman1993dirac}
\begin{align*}
    \dd \omega_{k+1}(X,Y,Z) 
    &= \dd \omega_k(\bar \cR X,Y,Z) +  \dd \omega_k(X,\bar \cR Y,Z) - \dd \omega_{k-1}(\bar \cR X,\bar \cR Y,Z) \\
    &\qquad - \omega_0( [\bar \cR X, \bar \cR Y] - \bar \cR [\bar \cR X, Y] - \bar \cR[X, \bar \cR Y] + \bar \cR^2 [X,Y], \bar \cR^{k-1} Z ) .
\end{align*}
By Theorem \ref{thm_1}, all $\omega_k$ are closed, so it follows that
\[ \omega_0( [\bar \cR X, \bar \cR Y] - \bar \cR [\bar \cR X, Y] - \bar \cR[X, \bar \cR Y] + \bar \cR^2 [X,Y], \bar \cR^{k-1} Z ) = 0 . \]
Since $\bar \cA$ is invertible, the 2-form $\omega_0$ is non-degenerate. Hence, we see that $\bar \cR =  \bar \cA^{-1} \bar \cB$ is a Nijenhuis operator.
\end{proof}

\paragraph{Remarks on Dubrovin-Novikov operators.}
The assumption that $\cA$ is constant is not very restrictive in physical contexts. Indeed, many examples in hydrodynamics, gas dynamics, biological systems, and mathematical physics in general, are described by Hamiltonian operators that can be made constant by a change of variables. A particular class of such Hamiltonian operators was introduced by Dubrovin and Novikov in 1983 \cite{dubrovin1983hamiltonianformalism}. These operators are homogeneous in the order of derivation. First-order homogeneous operators, as an example, take the following general structure
\begin{equation}\label{dn}
    g^{ij}\partial_x+b^{ij}_ku^k_x,
\end{equation}
where $g^{ij},b^{ij}_k$ depend on the field variables ${u}$ only. Under the non-degeneracy assumption on the leading coefficient, i.e.\@ $\det(g)\neq 0$, and with the additional requirements of A.1 and A.2 (the Hamiltonianity conditions) they are also known in the literature as Dubrovin-Novikov operators. Indeed, in \cite{dubrovin1983hamiltonianformalism} it was proved that, if $\det(g)\neq 0$, the operator \eqref{dn} is Hamiltonian if and only if $g_{ij}=(g^{lk})^{-1}$ is a flat metric and $b^{ij}_k=-g^{is}\Gamma^j_{sk}$, where $\Gamma^i_{jk}$ are the Christoffel symbols of the Levi-Civita connection of $g$.

Consequently, there always exist local coordinates such that a Dubrovin-Novikov operator \eqref{dn} can be mapped into the constant form
\[
    \eta^{ij}\partial_x, \qquad \eta^{ij}=\eta^{ji}\in\mathbb{R}.
\]
This change of variables takes the form $\tilde{u}^i=f^i({u})$, where $f^i$ are the Casimir functions of the operator and the new coordinates are known as flat coordinates for \eqref{dn}.

The case of bi-Hamiltonian pairs whose first operator is of Dubrovin-Nokivov type and their relations with the symplectic structures was investigated in \cite{pavlov2017remarks} by Pavlov and Vitolo. Their approach is to map $\mathcal{A}$ into constant form $\eta^{ij}\partial_x$, so that its inverse operator is simply given by $(\eta^{ij})^{-1}\partial_x^{-1}$. This operator is symplectic, so that if an evolutionary system has the form
\[
    u^i_t=F^i({u},{u}_x,\dots , {u}_{kx})=\eta^{ij}\partial _x\left(\frac{\delta h_1}{\delta u^j}\right),
\]
it also admit the non-local symplectic formalism
\[
    \eta_{ij}\partial_x^{-1}\, u^j_t=\frac{\delta h_1}{\delta u^j}.
\]
If $\cB$ is a second Hamiltonian operator compatible with $\cA$, then a bi-symplectic pair can be found as (see \cite{pavlov2017remarks} or \cite[Chapter 7]{dorfman1993dirac}):
\[
    \mathcal{J}=(\mathcal{A})^{-1}, \qquad \mathcal{K}= (\mathcal{A})^{-1}\circ \mathcal{B}\circ (\mathcal{A})^{-1}
\]

As Pavlov and Vitolo remarked in \cite{pavlov2017remarks}, with the non-local change of variables $u^i=\partial_x \bar u^i$, the non-locality of $\mathcal{J}$ and $\mathcal{K}$ is avoided and the resulting system becomes bi-symplectic with local operators. Up to a constant $\eta^{ij}$, this is the same as our definition $u = A \bar u$ of the Hamiltonian potential variable.
    
\subsection{Double Lenard scheme}

Let $\cA,\cB: T^* \cF \to T \cF$ be a Hamiltonian pair, with $\cA$ constant, given fibre-wise by $A: \bar \cF \to \cF$, define $\bar u$ by $u = A \bar u$ and let $h_{-1}$ be a Casimir of $\cA$. 

\begin{figure}[ht]
    \centering
\begin{tikzcd}[column sep=large]
    \var{}{h_{-1}}{u} \arrow[r, "\cA"] \arrow[rd, "\cB"] & 0 \\
    \var{}{h_{0}}{u} \arrow[r, "\cA"] \arrow[rd, "\cB"] & u_{t_0} & \bar u_{t_0}  \arrow[l, " \cA"] \arrow[r, "\cJ = \bar \cA"] \arrow[rd, "\cK = \bar \cB"] & -\var{}{h_{0}}{\bar u} \\
    \var{}{h_{1}}{u} \arrow[r, "\cA"] \arrow[rd, "\cB"] & u_{t_1} & \bar u_{t_1}  \arrow[l, "\cA"] \arrow[r, "\cJ = \bar \cA"] \arrow[rd, "\cK = \bar \cB"] & -\var{}{h_{1}}{\bar u} \\
    \var{}{h_{2}}{u} \arrow[r, "\cA"] \arrow[rd, "\cB"] & u_{t_2} & \bar u_{t_2}  \arrow[l, "\cA"] \arrow[r, "\cJ = \bar \cA"] \arrow[rd, "\cK = \bar \cB"] & -\var{}{h_{2}}{\bar u} \\
    \vdots & \vdots & \vdots &  \vdots
\end{tikzcd}

$\underbrace{\hspace{3.5cm}}_{\text{Hamiltonian Lenard Scheme}}$
\hspace{1cm}
$\underbrace{\hspace{3.5cm}}_{\text{Symplectic Lenard Scheme}}$

\caption{Schematic overview of the two Lenard scheme, Hamiltonian and symplectic, linked by the Hamiltonian operator $\cA$.}
\label{fig-lenard}
\end{figure}
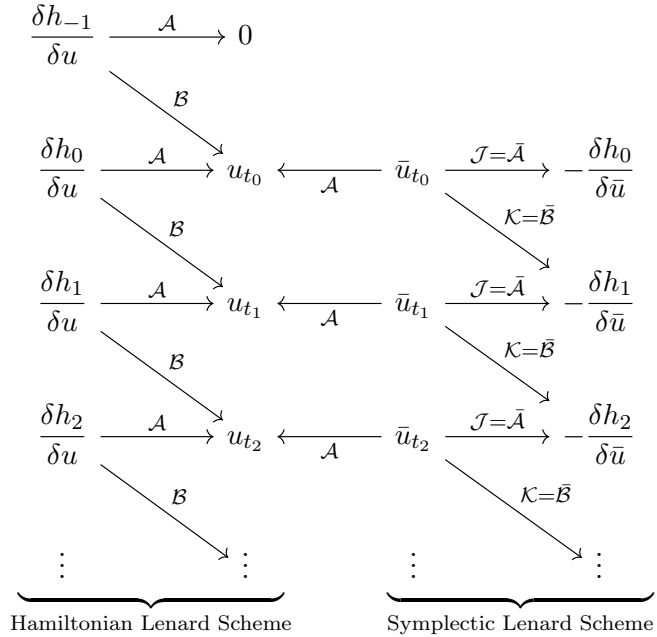

Suppose the equations of interest are $\bar u_{t_i} = Q_i$. Then the symplectic Lenard scheme, illustrated in Figure \ref{fig-lenard}, iterates the relations
\begin{align*}
    \bar \cA Q_i = - \var{}{h_i}{\bar u}, \\
    \bar \cB Q_i = - \var{}{h_{i+1}}{\bar u}
\end{align*}
Note that on the symplectic side, the recursion operator mapping $Q_i$ to $Q_{i+1}$ is given by $\bar \cA^{-1} \bar \cB$, whereas on the Hamiltonian side it is given by $\cB \cA^{-1}$.

A functional $H = \int h \dd x$ has exterior derivative $\dd H = \var{}{h}{\bar u^j} \dd \bar u^j$, so (the density of) a 1-form can be expressed locally as a variational derivative if and only if the 1-form is closed.

The following Lemma shows that each new 1-form that is produced by the iteration, $\phi_{i+1} = \bar\cB Q_i$, is closed. This means that it can be obtained as a variational derivative $\phi_{i+1} = \var{}{h_{i+1}}{\bar u^j} \dd \bar u^j$, hence we find a new Hamiltonian density $h_{i+1}$ and continue the iteration. 

\begin{lemma}[Special case of {\cite[Theorem 3.17]{dorfman1993dirac}}]
    Let $\cJ,\cK: T \cF \to T^* \cF$ be compatible symplectic operators, with $\cJ$ invertible. Assume that
    \[ \phi_0 = \cJ (Q_0) \qquad \text{and} \qquad \phi_1 = \cK (Q_0) = \cJ (Q_1) , \]
    where $\dd \phi_0 = \dd \phi_1 = 0$. Furthermore, assume that the set $V := \{Y \in T \cF \mid \cK Y \in \im \cJ \}$ is sufficiently large so that the only 2-form $\phi$ such that $\phi(Y,Z) = 0$ for all $Y,Z \in A$ is $\phi = 0$. Then $\phi_2 := \cK (Q_1)$ satisfies $\dd \phi_2 = 0$.
\end{lemma}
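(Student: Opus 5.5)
To show $\dd\phi_2 = 0$, we test it on vectors from $V$ and then use the non-degeneracy hypothesis. Throughout we write $\cR = \cJ^{-1}\cK$, so that $Q_1 = \cR Q_0$ and $\phi_2 = \cK Q_1 = \cK \cR Q_0$. We introduce the symplectic forms $\omega_0(X,Y) = \pair{X,\cJ Y}$, $\omega_1(X,Y) = \pair{X,\cK Y} = \omega_0(X,\cR Y)$ and $\omega_2(X,Y) = \omega_0(X,\cR^2 Y)$. Compatibility means that $\cR$ is Nijenhuis. Then the identity \cite[(3.14)]{dorfman1993dirac} quoted in the proof of the preceding corollary shows that $\omega_2$ is closed as well.

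The key step is a formula for $\dd\phi_2(Y,Z)$ with $Y,Z\in V$. We use the invariant expression $\dd\phi(Y,Z) = Y\pair{\phi,Z} - Z\pair{\phi,Y} - \pair{\phi,[Y,Z]}$ and write $\pair{\phi_2,Z} = \omega_2(Z,Q_0)$. Expanding $\dd\omega_2(Q_0,Y,Z)=0$ via the Cartan formula expresses $\dd\phi_2(Y,Z)$ through Lie derivatives of $\omega_2$ along $Q_0$, i.e.\ $\dd\phi_2 = \cL_{Q_0}\omega_2$ by $\iota_{Q_0}\dd\omega_2=0$. Similarly $\dd\phi_0 = \cL_{Q_0}\omega_0 = 0$ and $\dd\phi_1 = \cL_{Q_0}\omega_1=0$. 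So $Q_0$ preserves $\omega_0$ and $\omega_1$, and hence it preserves $\cR = \cJ^{-1}\cK$ on vectors, i.e.\ $\cL_{Q_0}\cR=0$, wherever this makes sense. Since $\omega_2(Y,Z) = \omega_1(Y,\cR Z)$, the Leibniz rule gives
\[ (\cL_{Q_0}\omega_2)(Y,Z) = (\cL_{Q_0}\omega_1)(Y,\cR Z) + \omega_1\big(Y,(\cL_{Q_0}\cR)Z\big) . \]
Both terms on the right vanish.

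The main obstacle is that $\cJ^{-1}$, and hence $\cR$, may not be everywhere defined. We therefore write $\cR Z$ only for $Z\in V$, where $\cK Z\in\im\cJ$. The identity $\cL_{Q_0}\cR = 0$ is derived as follows. Apply $\cL_{Q_0}$ to $\cJ\cR Z = \cK Z$ and use $\cL_{Q_0}\cJ = \cL_{Q_0}\cK = 0$; these say exactly that $\dd\phi_0=\dd\phi_1=0$. This gives $\cJ\big((\cL_{Q_0}\cR)Z\big)=0$, and invertibility of $\cJ$ then gives $(\cL_{Q_0}\cR)Z = 0$. We should also check that $[Q_0,Z]$ stays in the domain where these manipulations make sense, or else phrase everything in terms of $\omega$'s evaluated only at $Y,Z\in V$ to avoid that. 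Having $\dd\phi_2(Y,Z)=0$ for all $Y,Z\in V$, the hypothesis that a 2-form vanishing on $V$ (the statement's ``$A$'' being $V$) must vanish yields $\dd\phi_2=0$.

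If the Lie-derivative route becomes delicate in the infinite-dimensional setting, the fallback is a direct expansion using the Nijenhuis torsion. Write $\dd\phi_2(Y,Z)$ with $\phi_2 = \iota_{Q_0}\omega_2$. Substitute the (3.14)-type identity with $k=1$, applied to the triple $(Q_0,Y,Z)$, and use $\dd\omega_0=\dd\omega_1=0$ together with $\dd\phi_0=\dd\phi_1=0$. The residual terms are then the Nijenhuis torsion paired against $\omega_0$, which vanishes by compatibility.
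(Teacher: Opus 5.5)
Your proof is correct, but it takes a different route from the paper's.

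The paper's proof never introduces $\omega_2$ or Lie derivatives. For $Y_1,Z_1\in V$ it sets $Y_2=\cJ^{-1}\cK Y_1$ and $Z_2=\cJ^{-1}\cK Z_1$. It then uses skew-adjointness to obtain identities such as $\pair{\phi_0,Y_2}=\pair{\phi_1,Y_1}$ and $\pair{\phi_i,\cN W}=\pair{\phi_{i+1},W}$, and pairs the Nijenhuis identity for $(Y_1,Z_1)$ with $\phi_0$. Together these make $\dd\phi_0(Y_2,Z_2)-\dd\phi_1(Y_1,Z_2)-\dd\phi_1(Y_2,Z_1)+\dd\phi_2(Y_1,Z_1)$ vanish identically, so $\dd\phi_2(Y_1,Z_1)=0$. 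That argument is purely algebraic: it uses only skew-adjointness and the vanishing of the Nijenhuis torsion. It needs neither $\dd\omega_0=\dd\omega_1=0$ nor closedness of $\omega_2$.

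You instead use Cartan's formula, which is where you need $\dd\omega_k=0$, to read $\dd\phi_k=0$ as ``$Q_0$ is an infinitesimal symmetry of $\omega_k$''. You then propagate the symmetry: $\cL_{Q_0}\cJ=\cL_{Q_0}\cK=0\Rightarrow\cL_{Q_0}\cR=0\Rightarrow\cL_{Q_0}\omega_2=0$.

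\textbf{What each route buys.} Yours is more conceptual, and once all $\omega_k$ are known to be closed it gives closedness of every $\phi_k=\cJ\cR^kQ_0$ at once, without iterating the lemma. The cost is that the Nijenhuis content is outsourced to the identity \cite[(3.14)]{dorfman1993dirac}, which the paper only quotes. The paper's route is self-contained and uses fewer hypotheses.

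\textbf{Your open domain check.} It does go through. If $\cK Z=\cJ Z'$, then since $\cL_{Q_0}\cJ=\cL_{Q_0}\cK=0$,
\[ \cK[Q_0,Z]=\cL_{Q_0}(\cK Z)=\cL_{Q_0}(\cJ Z')=\cJ[Q_0,Z'], \]
so $V$ is invariant under $[Q_0,\cdot\,]$. Your remaining applications of $\cR$ to brackets inside (3.14) are at the same formal level as the paper's own application of $\cN$ to $[Y_2,Z_1]$ and $[Y_1,Z_1]$.
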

\begin{proof}
    From the assumptions it follows that $\cN = \cJ^{-1} \cK$ is a Nijhenhuis operator, i.e.\@ for all vector fields $Y$ and $Z$ there holds
    \begin{equation}
        \label{nijenhuis}
        [\cN Y, \cN Z] - \cN [ \cN Y, Z] - \cN [Y, \cN Z] + \cN^2[Y,Z] = 0.
    \end{equation}
    Choose any vector fields $Y_1,Z_1 \in V$. Then there exist $Y_2,Z_2$ such that
    \begin{align*}
        \cK Y_1 = \cJ Y_2 =: \phi_Y \qquad \text{and} \qquad
        \cK Z_1 = \cJ Z_2 .
    \end{align*}
    Then
    \[	\pair{ Q_0, \phi_Y } = \pair{ Q_0, \cJ Y_2 } = - \pair{ \cJ Q_0, Y_2 } = - \pair{ \phi_0, Y_2 } , \]
    but also
    \[	\pair{ Q_0, \phi_Y } = \pair{ Q_0, \cK Y_1 } = - \pair{ \cK Q_0, Y_1 } = - \pair{ \phi_1, Y_1 } , \]
    so 
    \begin{equation}
        \label{phiY}
        \pair{ \phi_0, Y_2 }  = \pair{ \phi_1, Y_1 } .
    \end{equation}
    Similarly, we find
    \begin{equation}
        \label{phiZ}
        \pair{ \phi_0, Z_2 }  = \pair{ \phi_1, Z_1 } .
    \end{equation}
    Next, we pair the Nijenhuis relation \eqref{nijenhuis} for $Y_1$ and $Z_1$ with $\phi_0$. Noting that $\pair{ \phi_i, \cN Y } = \pair{ \cK \cJ^{-1} \phi_i, Y } = \pair{ \phi_{i+1}, Y }$, we find
    \begin{equation}
        \label{nijenhuis2}
        \pair{ \phi_0, [Y_2,Z_2] } - \pair{ \phi_1, [Y_2, Z_1] } - \pair{ \phi_1, [Y_1, Z_2] } + \pair{ \phi_2, [Y_1,Z_1] } = 0 . 
    \end{equation}
    Now 
    \begin{align*}
        \dd \phi_0 (Y_2, Z_2) &= Y_2 \pair{ \phi_0, Z_2 } - Z_2 \pair{ \phi_0, Y_2 } + \pair{ \phi_0, [Y_2 ,Z_2] } , \\
        -\dd \phi_1 (U_1, Z_2) &= Y_1 \pair{ \phi_1, Z_2 } - Z_2 \pair{ \phi_1, Y_1 } - \pair{ \phi_1, [Y_1 ,Z_2] } , \\
        -\dd \phi_1 (Y_2, Z_1) &= Y_2 \pair{ \phi_1, Z_1 } - Z_1 \pair{ \phi_1, Y_2 } - \pair{ \phi_1, [Y_2 ,Z_1] } , \\
        \dd \phi_2 (Y_1, Z_1) &= Y_1 \pair{ \phi_2, Z_1 } - Z_1 \pair{ \phi_2, Y_1 } + \pair{ \phi_2, [Y_1 ,Z_1] } .
    \end{align*}
    Using Equations \eqref{phiY}--\eqref{nijenhuis2}, we see that
    \[ \dd \phi_0 (Y_2, Z_2) - \dd \phi_1 (Y_1, Z_2) - \dd \phi_1 (Y_2, Z_1) + \dd \phi_2 (Y_1, Z_1) = 0 , \]
    so the assumption that $\dd \phi_0 = \dd \phi_1 = 0$ implies that $ \dd \phi_2 (Y_1, Z_1) = 0$. Since $Y_1, Z_1 \in V$ are arbitrary, and $V$ is assumed to be sufficiently large, we conclude that $\dd \phi_2 = 0$.
\end{proof}

\begin{proposition}
    The $Q_i$ constructed this way are in involution: 
    \[ \pair{ Q_i, \cJ Q_j } = 0 \qquad \text{and}\qquad \pair{ Q_i, \cK Q_j } = 0. \]
\end{proposition}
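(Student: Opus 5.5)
The plan is the standard Lenard-scheme argument: use skew-adjointness to move the recursion one step at a time, then finish with a symmetric/antisymmetric clash. Throughout, the scheme gives $\cJ Q_i = \phi_i$ and $\cK Q_i = \phi_{i+1} = \cJ Q_{i+1}$, where $\phi_i = -\var{}{h_i}{\bar u}$. By C.1 both $\cJ$ and $\cK$ are skew-adjoint, so for all $i,j$
\[ \pair{Q_i, \cJ Q_j} = -\pair{Q_j, \cJ Q_i}, \qquad \pair{Q_i, \cK Q_j} = -\pair{Q_j, \cK Q_i}. \]
Since $\cK Q_j = \cJ Q_{j+1}$, we have $\pair{Q_i, \cK Q_j} = \pair{Q_i, \cJ Q_{j+1}}$. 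It therefore suffices to prove the statement for $\cJ$ alone, for all index pairs.

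\textbf{Key shifting identity.} Using skew-adjointness of $\cJ$ and $\cK$ together with the recursion,
\[ \pair{Q_i, \cJ Q_{j+1}} = \pair{Q_i, \cK Q_j} = -\pair{Q_j, \cK Q_i} = -\pair{Q_j, \cJ Q_{i+1}} = \pair{Q_{i+1}, \cJ Q_j}. \]
Write $c_{i,j} := \pair{Q_i, \cJ Q_j}$. The identity says $c_{i,j+1} = c_{i+1,j}$, so $c_{i,j}$ depends only on $i+j$. Skew-adjointness of $\cJ$ says $c_{i,j} = -c_{j,i}$.

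\textbf{Finishing the involution.}
\begin{itemize}
\item If $i+j$ is even, shift indices until they coincide: $c_{i,j} = c_{m,m}$ with $m = (i+j)/2$, and $c_{m,m} = -c_{m,m} = 0$.
\item If $i+j$ is odd, the argument above gives $c_{i,j} = c_{j,i}$, because both have the same index sum. Combined with antisymmetry this forces $c_{i,j} = 0$.
\end{itemize}
So all $c_{i,j}$ vanish, and by the first paragraph so do all $\pair{Q_i, \cK Q_j}$.

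\textbf{Main obstacle.} The delicate point is making sense of the pairings as functionals, i.e.\@ integrals of densities modulo total derivatives. Skew-adjointness then holds only up to boundary terms, which vanish either by the decay conditions on $\cF, \bar\cF$ or by working in the quotient $\mathbb{A}/\im\partial_x$. We also need the $Q_i$ to be defined for all indices used, which the preceding Lemma guarantees inductively. I will state explicitly that the equalities hold as functionals, so that density-level identities hold up to total $x$-derivatives. This is the sense in which the $Q_i$, respectively the Hamiltonians $h_i$, are in involution.
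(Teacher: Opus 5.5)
Your proof is correct and follows the paper's argument: the same shifting identity $\pair{Q_i,\cJ Q_j}=\pair{Q_{i-1},\cJ Q_{j+1}}$, derived from skew-adjointness of $\cJ$ and $\cK$ together with $\cK Q_j=\cJ Q_{j+1}$, and then a reduction to a term that vanishes by antisymmetry. The differences are cosmetic. You close the odd case via $c_{i,j}=c_{j,i}=-c_{i,j}$ rather than reducing to $\pair{Q_m,\cK Q_m}=0$. You also get the $\cK$-statement directly from $\pair{Q_i,\cK Q_j}=\pair{Q_i,\cJ Q_{j+1}}$ instead of redoing the argument ``analogously''.
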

\begin{proof}
    Similar to Equations \eqref{phiY}--\eqref{phiZ}, we obtain
    \[ \pair{ Q_i, \cJ Q_j } = \pair{ Q_i, \phi_j } = \pair{ Q_{i-1}, \phi_{j+1} } = \pair{ Q_{i-1}, \cJ Q_{j+1} } . \]
    So if $i-j$ is even, we find
    \[ \pair{ Q_i, \cJ Q_j } = \pair{ Q_{\frac{i+j}{2}}, \cJ Q_{\frac{i+j}{2}} } = 0 \]
    and if $i-j$ is odd, we find
    \[ \pair{ Q_i, \cJ Q_j } 
    = \pair{ Q_{\frac{i+j-1}{2}}, \cJ Q_{\frac{i+j+1}{2}} }
    = \pair{ Q_{\frac{i+j-1}{2}}, \cK Q_{\frac{i+j-1}{2}} }
    = 0 . \]
    Involution with respect to $\cK$ is shown analogously.
\end{proof}

\section{Lagrangian multiforms}\label{multi-sec}

Lagrangian multiforms were proposed in \cite{lobb2009lagrangian} as a way to combine the property of multi-dimensional consistency of integrable lattice equations and their variational nature into a single framework. The continuous side of the theory was further developed in \cite{suris2016variational, suris2016lagrangian, petrera2017variational, sleigh2020variational, petrera2021variational, vermeeren2021hamiltonian, caudrelier2025geometry}. Some of these works refer to this idea by the name \emph{pluri-Lagrangian systems}. In accordance to which aspect of the theory these works emphasise, we define both concepts in Definition \ref{def-multiform} below.

Consider a system of $N-1$ commuting PDEs in 2 independent variables each. We consider their combined space of independent variables $\R^N$ with coordinates $x = t_0, t_1, \ldots t_{N-1}$, where $x$ and $t_i$ are the independent variables of the $i$-th equation. The central object of Lagrangian multiform theory is a 2-form
\[ \cL = \sum_{i<j} L_{ij} \, \dd t_i \wedge \dd t_j \]
in $\R^N$, depending on the dependent variable $\bar u = (\bar u^1,\dots , \bar u^n): \R^N \to \R$ and its derivatives. For any surface $\Gamma \subset \R^N$, we define the action
\[ S_\Gamma[\bar u] = \int_\Gamma \cL[\bar u]. \]

\begin{definition}
\label{def-multiform}
    We say that $\bar u: \R^N \to \R$ satisfies the \emph{pluri-Lagrangian principle} if for every surface $\Gamma$, and every $\bar v: \R^N \to \R$ whose infinite jet prolongation vanishes on the boundary of $\Gamma$, there holds
    \[ \frac{\dd}{\dd \varepsilon} S_\Gamma[\bar u + \varepsilon \bar v] \Big|_{\varepsilon = 0} = 0 . \]
    
    We say $\bar u: \R^N \to \R$ satisfies the \emph{Lagrangian multiform principle} if it satisfies the pluri-Lagrangian principle, and in addition	
    \[ S_\Gamma[\bar u] = 0 . \]
    for all closed surfaces $\Gamma$.
\end{definition}
Note that the last condition has several equivalent versions. In particular, it is equivalent to the requirement that $\dd \cL[\bar u] = 0$.
We denote $\partial_i = \der{}{t_i}$ and $u_{t_i} = \frac{\partial u}{\partial t_i}$. More generally, we denote partial derivatives of $\bar u$ by $\bar u_I$, where $I$ is a string of $t$-variables. We denote 
\[ \bar u_{I t_i^\alpha} = \partial_i^\alpha \bar u_{I} \]
We write $I \ni t_i$ if the string $I$ contains at least one instance of $t_i$ and $I \not\ni t_i$ otherwise.
The variational derivative with respect to $\bar u_I$ in the direction of $t_i,t_j$ is defined as
    \[ \var{ij}{}{\bar u_I} := \sum_{\alpha,\beta \geq 0} (-1)^{\alpha+\beta} \partial_i^\alpha \partial_j^\beta \der{}{\bar u_{I t_i^\alpha t_j^\beta}} . \]
The variational derivative with respect to $\bar u_I$ in the direction of $t_i,t_j,t_k$ is defined as
    \[ \var{ijk}{}{\bar u_I} := \sum_{\alpha,\beta,\gamma \geq 0} (-1)^{\alpha+\beta+\gamma} \partial_i^\alpha \partial_j^\beta \partial_k^\gamma \der{}{\bar u_{I t_i^\alpha t_j^\beta t_k^\gamma}} . \]

\begin{theorem}
    \label{thm-tfae}
    Let 
    \[ \dd \cL[\bar u] = \sum_{i<j<k} P_{ijk}[\bar u] \, \dd t_i \wedge \dd t_j \wedge \dd t_k . \]
    The following are equivalent:
    \begin{enumerate}[(a)]
        \item $\bar u: \R^N \to \R$ satisfies the pluri-Lagrangian principle.
        \item $\frac{\dd}{\dd \varepsilon} \dd \cL[\bar u + \varepsilon \bar v] \Big|_{\varepsilon = 0} = 0$, i.e.\@ for all $i<j<k$, there holds $\frac{\dd}{\dd \varepsilon} P_{ijk}[\bar u + \varepsilon \bar v] \Big|_{\varepsilon = 0} = 0 .$
        \item\label{thm-partialP} For all mixed partial derivatives $\bar u_I$ of $\bar u$ and all $i<j<k$, there holds 
        \[ \der{}{\bar u_I} P_{ijk}[\bar u] = 0 . \]
        \item For all mixed partial derivatives $\bar u_I$ of $\bar u$ and all $i<j<k$, there holds 
        \[ \var{ijk}{}{\bar u_I} P_{ijk}[\bar u] = 0 . \]
        \item $\bar u: \R^N \to \R$ satisfies the system of equations
        \begin{align*}
                            &\var{ij}{L_{ij}}{\bar u} = 0 & \forall I \not\ni t_i, t_j ,\\
            &\var{ij}{L_{ij}}{\bar u_{t_i}} + \var{jk}{L_{jk}}{\bar u_{t_k}} = 0 & \forall I \not\ni t_j ,\\
            &\var{ij}{L_{ij}}{\bar u_{t_i t_j}} + \var{jk}{L_{jk}}{\bar u_{t_j t_k}} + \var{ki}{L_{ki}}{\bar u_{t_k t_i}} = 0 & \forall I ,
        \end{align*}
        where $L_{ki} = - L_{ik}$.
    \end{enumerate}
\end{theorem}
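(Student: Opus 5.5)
We prove the chain (a)$\Leftrightarrow$(b)$\Leftrightarrow$(c)$\Leftrightarrow$(d) and then (c)$\Leftrightarrow$(e), following the standard pluri-Lagrangian arguments of \cite{suris2016variational, sleigh2020variational}.

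\emph{(a)$\Leftrightarrow$(b).} By Stokes' theorem, for any surface $\Gamma$ and any closed surface $\Gamma'$ with $\Gamma' = \Gamma \cup \tilde\Gamma$, we have $S_{\Gamma'} = \int_B \dd\cL$, where $B$ is the enclosed 3-dimensional region. Conversely, any variation $\bar v$ with support near a small piece of $\Gamma$ can be related to a variation over a closed surface obtained by deforming that piece. Hence stationarity of $S_\Gamma$ for all $\Gamma$ is equivalent to $\frac{\dd}{\dd\varepsilon}\int_B \dd\cL[\bar u+\varepsilon\bar v]=0$ for all 3-dimensional regions $B$ and all compactly supported $\bar v$.

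The difficulty is that $\bar v$ need only have vanishing jet on $\partial\Gamma$, not on $\partial B$. The standard argument handles this: approximate $\Gamma$ by stepped surfaces built from coordinate plaquettes, and use that a variation supported at an interior point of $\Gamma$ can be pushed across a small box. Shrinking $B$ to a point gives (b) pointwise. The reverse direction is Stokes plus the fact that $\int_\Gamma$ and the integral over a deformed surface with the same boundary differ by $\int_B \dd\cL$. This localisation step is the main obstacle, and I would follow the stepped-surface argument closely.

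\emph{(b)$\Leftrightarrow$(c).} Expand
\[
\frac{\dd}{\dd\varepsilon}P_{ijk}[\bar u+\varepsilon\bar v]\Big|_{\varepsilon=0} = \sum_I \der{P_{ijk}}{\bar u_I}\,\bar v_I .
\]
Since the jet of $\bar v$ at a point can be prescribed arbitrarily, this vanishes for all $\bar v$ if and only if every coefficient vanishes, which is (c).

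\emph{(c)$\Leftrightarrow$(d).} Here the special structure of $P_{ijk}$ is essential. Writing
\[
P_{ijk}=\partial_i L_{jk}+\partial_j L_{ki}+\partial_k L_{ij},
\]
it is a divergence in the $(t_i,t_j,t_k)$-directions, so its full variational derivative $\var{ijk}{}{\bar u}$ vanishes identically. More generally, $\var{ijk}{}{\bar u_I}$ is related triangularly to the partial derivatives $\der{}{\bar u_J}$: the operator $\var{ijk}{}{\bar u_I}$ equals $\der{}{\bar u_I}$ plus total derivatives of $\der{}{\bar u_J}$ over strictly longer strings $J$, and this relation can be inverted. Thus the vanishing of all $\der{}{\bar u_I}P_{ijk}$ implies (d) immediately. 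The converse follows by induction from high to low derivative order, since $P_{ijk}$ has finite order.

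\emph{(c)$\Leftrightarrow$(e).} Compute
\[
\der{}{\bar u_I}P_{ijk}=\der{}{\bar u_I}\bigl(\partial_i L_{jk}+\text{cycl.}\bigr)
\]
using $\der{}{\bar u_I}\partial_i = \partial_i\der{}{\bar u_I}+\der{}{\bar u_{I/t_i}}$, where the second term is present only when $I\ni t_i$. Passing to the $\var{ijk}{}{\bar u_I}$ form of (d), the total-derivative pieces drop out. What remains are combinations of $\var{jk}{L_{jk}}{\bar u_{I'}}$ organised by whether $I$ contains $t_i$, $t_j$ or $t_k$. This yields exactly the three families in (e): no removable letter gives $\var{ij}{L_{ij}}{\bar u_I}=0$, one letter gives the two-term relations, and two letters give the cyclic three-term relations. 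The bookkeeping is routine once the commutation rule above is in hand.
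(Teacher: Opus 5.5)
The paper states this theorem without proof, quoting it as background from the pluri-Lagrangian literature, so your proposal has to stand on its own.

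\textbf{The algebraic part is correct.}

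(b)$\Leftrightarrow$(c) is fine. (c)$\Leftrightarrow$(d) by downward induction on $|I|$ is also fine. However, the divergence structure of $P_{ijk}$ plays no role in that step: the triangular inversion works for any finite-order differential function.

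The divergence structure is what drives (d)$\Leftrightarrow$(e), which is what you actually prove under the label (c)$\Leftrightarrow$(e). Your commutation rule gives the exact identity
\[ \var{ijk}{}{\bar u_I}\, \partial_i L_{jk} = \var{jk}{L_{jk}}{\bar u_{I/t_i}} \ \text{ if } I \ni t_i, \qquad \var{ijk}{}{\bar u_I}\, \partial_i L_{jk} = 0 \ \text{ if } I \not\ni t_i . \]
Write $I = J t_i^a t_j^b t_k^c$ with $J$ free of $t_i,t_j,t_k$. The cases where one, two or three of $a,b,c$ are positive give exactly the three families of (e), with the subscripts $I$ (missing in the displayed statement) restored. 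Obtaining the first family for a pair $(i,j)$ uses a third index, so $N\geq 3$ is implicitly needed.

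\textbf{The genuine gap is (a)$\Leftrightarrow$(b).} This is the only non-algebraic part of the theorem. Your paragraph mixes the two directions and then defers to ``the stepped-surface argument''.

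The direction (a)$\Rightarrow$(b) needs no localisation. A closed surface $\partial B$ has empty boundary, so every $\bar v$ is admissible. Stokes plus shrinking coordinate boxes in the $(t_i,t_j,t_k)$-directions then gives (b).

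For (b)$\Rightarrow$(a), your Stokes observation shows that the first variation of $S_\Gamma$ is unchanged when $\Gamma$ is deformed with $\partial\Gamma$ fixed. So, via a partition of unity and using $N\geq 3$, variations supported in a compact set disjoint from $\partial\Gamma$ are handled by pushing $\Gamma$ off their support.

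The definition, however, only requires the infinite jet of $\bar v$ to vanish on $\partial\Gamma$. Such $\bar v$ may be non-zero arbitrarily close to $\partial\Gamma$, and no deformation fixing $\partial\Gamma$ avoids its support. This case needs an argument you do not give. Two options:
\begin{enumerate}
\item Replace $\bar v$ by $\chi_\epsilon \bar v$, where $\chi_\epsilon$ is a cutoff vanishing on an $\epsilon$-neighbourhood of $\partial\Gamma$. Flatness of $\bar v$ on $\partial\Gamma$ beats the $O(\epsilon^{-m})$ growth of the derivatives of $\chi_\epsilon$, so $\chi_\epsilon\bar v\to\bar v$ in every $C^m$. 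The first variation involves only finitely many derivatives of $\bar v$ on the compact surface $\Gamma$, so it passes to the limit.
\item Show that the closed 2-form $\frac{\dd}{\dd \varepsilon}\cL[\bar u+\varepsilon\bar v]\Big|_{\varepsilon=0}$ has a primitive that is a linear differential expression in $\bar v$, using exactness of the Koszul complex. The boundary term then vanishes for flat $\bar v$.
\end{enumerate}

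Approximating $\Gamma$ by stepped surfaces plays no role in this step. In the standard literature, stepped surfaces are used to derive (e) directly from (a) via edge and corner contributions. That is a different route, and it needs its own passage from stepped to arbitrary surfaces.
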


\begin{theorem}[Double zero property]
    Consider a system of equations $E[\bar u] = 0$ with
    \[ E[\bar u] = (E_1[\bar u], E_2[\bar u], \ldots, E_n[\bar u])^\top .\]
    If the coefficients of the exterior derivative of $\cL$ factorise as
    \[ P_{ijk}[\bar u] = (\cF_{ijk}[\bar u] E[\bar u]) \cdot (\cG_{ijk}[\bar u] E[\bar u]), \]
    where $\cF_{ijk}[\bar u]$ and $\cG_{ijk}[\bar u]$ are $(m \times n)$-matrix differential operators, for some $m$, then the Lagrangian multiform principle is satisfied for all functions $\bar u$ that solve the system of equations $E[\bar u] = 0$. We say that $\dd \cL$ has a \emph{double zero} at solutions of this system.
\end{theorem}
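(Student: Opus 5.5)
We will show that condition (c) of Theorem \ref{thm-tfae} holds at every solution of $E[\bar u]=0$; by that theorem, (c) is equivalent to the pluri-Lagrangian principle. Closedness then follows immediately, since $P_{ijk}[\bar u]$ itself vanishes on solutions: both factors $\cF_{ijk}E$ and $\cG_{ijk}E$ are zero there. Hence $\dd\cL[\bar u]=0$, which is equivalent to $S_\Gamma[\bar u]=0$ for all closed surfaces $\Gamma$.

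The main step is the following computation. Fix $i<j<k$ and a solution $\bar u$, and vary along $\bar u+\varepsilon\bar v$. Write $a=\cF_{ijk}[\bar u]E[\bar u]$ and $b=\cG_{ijk}[\bar u]E[\bar u]$, so that $P_{ijk}=a\cdot b$. By the product rule,
\[ \frac{\dd}{\dd\varepsilon}P_{ijk}[\bar u+\varepsilon\bar v]\Big|_{\varepsilon=0} = \Big(\frac{\dd}{\dd\varepsilon}a\Big)\cdot b + a\cdot\Big(\frac{\dd}{\dd\varepsilon}b\Big). \]
Both $a$ and $b$ vanish at a solution. Moreover, the derivatives $\frac{\dd}{\dd\varepsilon}a$ and $\frac{\dd}{\dd\varepsilon}b$ are finite expressions: they are differential polynomials in $\bar v$ with coefficients in the jets of $\bar u$. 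The product-rule sum therefore vanishes, and this gives condition (b). Taking $\bar v$ arbitrary, the pointwise version says that $\partial P_{ijk}/\partial\bar u_I$ is a sum of terms, each containing a factor $a$ or $b$ (or their components), so it vanishes at solutions. This is condition (c).

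One subtlety needs care. Condition (b) concerns variations $\bar v$ that do not preserve the solution property. This is harmless: only the values of $a$ and $b$ at the unperturbed $\bar u$ enter the formula, and those values are zero. With (b) or (c) established, Theorem \ref{thm-tfae} gives the pluri-Lagrangian principle. Together with $\dd\cL[\bar u]=0$, this is exactly the Lagrangian multiform principle of Definition \ref{def-multiform}. No real obstacle is expected beyond stating the product rule in the jet setting, where $\partial/\partial\bar u_I$ acts as a derivation on the components of $a$ and $b$.
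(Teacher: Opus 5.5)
Your proposal is correct and follows essentially the same route as the paper. On solutions both factors $\cF_{ijk}E$ and $\cG_{ijk}E$ vanish (the paper notes explicitly that this uses the jet prolongations of $E[\bar u]=0$, which you use implicitly), so $P_{ijk}[\bar u]=0$, and the product rule kills the first variation of $P_{ijk}$, giving condition (b) of Theorem \ref{thm-tfae}; your extra remark that the same computation yields condition (c) directly is a harmless addition.
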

\begin{proof}
    If $\bar u$ satisfies $E[\bar u] = 0$, then it also satisfies the jet prolongations of this equations, in particular $\cF_{ijk}[\bar u] E[\bar u] = 0$ and $\cG_{ijk}[\bar u] E[\bar u] = 0$. Hence $P_{ijk}[\bar u] = 0$ and
    \begin{align*}
        \frac{\dd}{\dd \varepsilon} P_{ijk}[\bar u + \varepsilon \bar v] \Big|_{\varepsilon = 0} 
          &= \left(\frac{\dd}{\dd \varepsilon}  (\cF_{ijk}[\bar u] E[\bar u]) \Big|_{\varepsilon = 0}   \cdot (\cG_{ijk}[\bar u] E[\bar u]) \right. \\
        &\qquad \left. + (\cF_{ijk}[\bar u] E[\bar u]) \cdot \frac{\dd}{\dd \varepsilon} (\cG_{ijk}[\bar u] E[\bar u]) \Big|_{\varepsilon = 0} \right)
          =0 .  \qedhere
    \end{align*}
\end{proof}

\subsection{Bi-Lagrangian 2-form structure}
\label{sec-bilagrangian}

Suppose we have two compatible symplectic operators {$\cJ = \ell_p - \ell_p^*$ and $\cK = \ell_q - \ell_q^*$, such that the corresponding Hamiltonian operators $\mathcal{A}$ and $\mathcal{B}$ generate the hierarchy of Hamiltonians $h_0, h_1, \ldots$}. Assume that $\cA$ is constant, so that $\cJ$ also is.

In this subsection, we denote $x=t_0$ and $\partial_x = \partial_0$.

\subsubsection{First Lagrangian 2-form}
We will construct a Lagrangian 2-form $\cL = \sum_{i<j} L_{ij} \dd t_i \wedge \dd t_j$. First we define
\[ L_{0j} = p[\bar u] \bar u_{t_j} - h_j[\bar u] . \]
Recall that its traditional Euler-Lagrange equation is
\[ 0 = (\ell_p^* - \ell_p) \bar u_{t_j} - \var{}{h_j}{\bar u} = - \cJ  \bar u_{t_j} - \var{}{h_j}{\bar u} . \]

Let $\sim$ denote equality modulo addition of a total $x$-derivative. Then, by definition of the adjoint operator, we have, for any two functions $f,g$,
\[ (\ell_p f) g \sim f (\ell_p^* g) . \]

Aiming to have a double-zero expression for $\dd \cL$, we compute in this equivalence class:
\begin{align*}
    \partial_k L_{0j} - \partial_j L_{0k}
    &= (\partial_k p) \bar u_{t_j} - (\partial_j p) \bar u_{t_k} - (\partial_k h_j) + (\partial_j h_k) \\
    &= (\ell_p \bar u_{t_k}) \bar u_{t_j} - (\ell_p \bar u_{t_j}) \bar u_{t_k} - (\partial_k h_j) + (\partial_j h_k) \\
    &\sim \frac12 (\ell_p \bar u_{t_k}) \bar u_{t_j} + \frac12 u_{t_k} (\ell_p^* \bar u_{t_j}) - \frac12(\ell_p \bar u_{t_j}) \bar u_{t_k} - \frac12 \bar u_{t_j} (\ell_p^* \bar u_{t_k}) - \bar u_{t_k} \var{}{h_j}{\bar u} + \bar u_{t_j} \var{}{h_k}{\bar u} \\
    &= \frac12 \bar u_{t_j} \cJ (\bar u_{t_k}) - \frac12 \bar u_{t_k} \cJ (\bar u_{t_j}) + \bar u_{t_k} \cJ (Q_j) - \bar u_{t_j} \cJ (Q_k) \\
    &\sim \frac12 \bar u_{t_j} \cJ (\bar u_{t_k}) - \frac12 \bar u_{t_k} \cJ (\bar u_{t_j}) + \frac12 \bar u_{t_k} \cJ (Q_j) - \frac12 Q_j \cJ (\bar u_{t_k}) - \frac12 \bar u_{t_j} \cJ (Q_k) + \frac12 Q_k \cJ (\bar u_{t_j}) \\
    &= \frac12 ( \bar u_{t_j} - Q_j) \cJ (\bar u_{t_k} - Q_k) - \frac12 ( \bar u_{t_k} - Q_k) \cJ (\bar u_{t_j} - Q_j) - \frac12 Q_j \cJ (Q_k) + \frac12 Q_k \cJ (Q_j) \\
    &\sim \frac12 ( \bar u_{t_j} - Q_j) \cJ (\bar u_{t_k} - Q_k) - \frac12 ( \bar u_{t_k} - Q_k) \cJ (\bar u_{t_j} - Q_j)
\end{align*}
The function $L_{jk}$ is defined by the computation above, by writing the terms hidden by the notation $\sim$ explicitly as $\partial_x L_{jk}$. Concretely, we take the function $L_{jk}$ such that
\[ P_{0jk} = \partial_k L_{0j} - \partial_j L_{0k} + \partial_0 L_{jk} = \frac12 ( \bar u_{t_j} - Q_j) \cJ (\bar u_{t_k} - Q_k) - \frac12 ( \bar u_{t_k} - Q_k) \cJ (\bar u_{t_j} - Q_j) ,\]
where $\partial_0 = \partial_x$.
This defines the remaining coefficients of $\cL = \sum_{i<j} L_{ij} \dd t_i \wedge \dd t_j$.

\begin{theorem}
    \label{thm-evolutionary-A}
    Assume that the coefficients of $\cL$ do not depend on $x$ explicitly (only through $\bar u$ and its derivatives), then
    $\bar u$ satisfies the Lagrangian multiform principle for $\cL$ if and only if
    \begin{equation}
        \label{multiform-evolutionary}
          \bar u_{t_j} = Q_j[\bar u]
    \end{equation}
    for all $j$.
\end{theorem}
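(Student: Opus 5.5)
Both directions go through Theorem~\ref{thm-tfae}, with the double zero property for sufficiency. The aim is to understand $\dd\cL$ well enough that, for all $1\le j<k<l$, the coefficients $P_{0jk}$ and $P_{jkl}$ are visibly double zeros on \eqref{multiform-evolutionary}. For the pluri-Lagrangian direction we use the equations in item (e) of Theorem~\ref{thm-tfae}.

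\textbf{Sufficiency.} Assume $\bar u_{t_j}=Q_j$ for all $j$. By construction,
\[ P_{0jk}=\tfrac12 E_j\,\cJ E_k-\tfrac12 E_k\,\cJ E_j,\qquad E_j:=\bar u_{t_j}-Q_j, \]
which already has the double zero form. For $P_{jkl}$ with $j,k,l\ge1$ we use closedness of $\dd\cL$, i.e.\ $\dd(\dd\cL)=0$. Its $\dd t_0\wedge\dd t_j\wedge\dd t_k\wedge\dd t_l$ component gives
\[ \partial_0 P_{jkl}=\partial_j P_{0kl}-\partial_k P_{0jl}+\partial_l P_{0jk}. \]
Each $P_{0\cdot\cdot}$ is quadratic in the $E$'s, so its $t$-derivatives vanish to first order on solutions. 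Hence $\partial_x P_{jkl}$ vanishes, together with its first variation, on solutions.

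Because the coefficients do not depend on $x$ explicitly, $P_{jkl}$ is a differential polynomial in the jet of $\bar u$. If $\partial_x P=0$ on the solution set (and in the linearised sense), then $P$ is constant there. Evaluating at the trivial solution, or using the normalisation of the $L_{jk}$ (chosen without constants), gives $P_{jkl}=0$. The same argument applied to the linearisation gives that its variation vanishes.

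This is the \textbf{main obstacle}: an $x$-independent differential function whose total $x$-derivative has a double zero should itself have a double zero, up to a constant. I would argue via the kernel of $\partial_x$ on jets modulo the equations. On the solution manifold the jet of $\bar u$ is parametrised by the $x$-jet, and the only $x$-independent elements of $\ker\partial_x$ there are constants. This is exactly where the hypothesis of no explicit $x$-dependence is needed. Then $P_{jkl}$ and its variation vanish on solutions, so $\dd\cL=0$ and criterion (b) of Theorem~\ref{thm-tfae} holds. Therefore $\bar u$ satisfies the Lagrangian multiform principle.

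\textbf{Necessity.} Assume the multiform principle holds. Apply item (e) of Theorem~\ref{thm-tfae} to the triple $(0,j,k)$, or equivalently item (c) to $P_{0jk}$.

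Differentiating $P_{0jk}$ with respect to $\bar u_{t_k}$ (using item (d) with the variational derivative $\delta_{0jk}/\delta \bar u_{t_k}$) gives
\[ \tfrac12(\cJ^*-\cJ)E_j+(\text{terms linear in }E_k)=-\cJ E_j+O(E_k). \]
Combining this with the corresponding relation from differentiating with respect to $\bar u_{t_j}$ yields $\cJ E_j=0$ for each $j$, after eliminating the $E_k$ terms; taking $k$ with the roles symmetric suffices.

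Alternatively, the plain Euler--Lagrange equation of $L_{0j}$ is $-\cJ\bar u_{t_j}-\var{}{h_j}{\bar u}=0$. Since $\var{}{h_j}{\bar u}=-\cJ Q_j$, this is $\cJ E_j=0$.

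Finally, $\cJ=\bar\cA$ is invertible (on the function class $\bar\cF$ with the chosen decay), so $E_j=0$, which is \eqref{multiform-evolutionary}.
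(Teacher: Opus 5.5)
Your sufficiency direction is essentially the paper's argument. $P_{0jk}$ is a double zero by construction, the identity $\partial_x P_{jkl}=\partial_jP_{0kl}-\partial_kP_{0jl}+\partial_lP_{0jk}$ transfers this to $\partial_xP_{jkl}$, and the absence of explicit $x$-dependence lets you remove $\partial_x$. The paper makes your ``kernel of $\partial_x$'' step concrete: it takes the highest $\bar u_I$ with $\partial P_{jkl}/\partial\bar u_I\neq0$ on solutions and uses $\frac{\partial}{\partial \bar u_{Ix}}\partial_x P=\frac{\partial P}{\partial\bar u_I}+\partial_x\frac{\partial P}{\partial \bar u_{Ix}}$. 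Your linearisation version is the same induction, once you view the variation as a differential function of $x$ with the $t$-derivatives of $\bar v$ as extra dependent variables. Your handling of the constant is shaky, though. Constants added to $L_{jk}$ drop out of $P_{jkl}$ entirely, so the normalisation is irrelevant. Evaluating at $\bar u\equiv0$ only works if that is a solution, i.e.\ $Q_j[0]=0$. (The paper itself only verifies condition (c) here.)

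The genuine gap is in the necessity direction. Both of your routes only yield $\cJ E_j=0$: the variational derivative of $P_{0jk}$ with respect to $\bar u_{t_k}$, and the ordinary Euler--Lagrange equation of $L_{0j}$. That is, you obtain only the \emph{differentiated} equations (for pKdV, $\partial_x(\bar u_{t_j}-Q_j)=0$). You then invert $\cJ$. But $\cJ=\bar\cA$ is a constant-coefficient differential operator of positive order with a nontrivial kernel: for $\cJ=\partial_x$, any $\bar u$ with $\bar u_{t_j}=Q_j+c_j(t_1,t_2,\ldots)$ satisfies $\cJ E_j=0$. The multiform principle is a local condition on $\bar u:\R^N\to\R$ with no boundary conditions. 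So the decay built into $\bar\cF$ is not among the hypotheses, and even with it you would need extra control of $E_j$ as $x\to+\infty$.

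This step discards exactly what the theorem asserts: the multiform, unlike a single Lagrangian, yields the evolutionary equations themselves rather than $\cJ$ applied to them. The missing idea is to use the pointwise condition (c) at a higher derivative. Write $\cJ=a\partial_x^n+\dots$ with $a$ constant and nondegenerate. Then $\bar u_{x^nt_k}$ enters $P_{0jk}=\tfrac12E_j\cJ E_k-\tfrac12E_k\cJ E_j$ only through $\tfrac12E_j\,a\,\bar u_{x^nt_k}$. Hence $\partial P_{0jk}/\partial\bar u_{x^nt_k}=\tfrac12aE_j=0$ gives $E_j=0$ directly, with no invertibility needed. Equivalently, apply (d) with $I=x^nt_k$ instead of $I=t_k$.
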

\begin{proof}
    By construction, the coefficients $P_{0jk}$ of $\dd \cL$ satisfy the double zero on the system of equations $ \bar u_{t_j} = Q_j[\bar u] $.  For the other coefficients of $\dd \cL$, we find
    \[ \partial_x P_{ijk} = \partial_i P_{0jk} + \partial_j P_{0ki} + \partial_k P_{0ij},
    \]
    so $\partial_x P_{ijk}$ also has the double zero property. It is easy to check that this implies condition (\ref{thm-partialP}) from Theorem \ref{thm-tfae}, i.e.\@ that
    \[ \der{P_{ijk}}{\bar u_I} = 0 \]
    on the equations \eqref{multiform-evolutionary}.
    Indeed, suppose that this is not the case and let $\bar u_I$ be the highest derivative for which 
    \[ \der{P_{ijk}}{\bar u_I} \neq 0 \]
    on the equations \eqref{multiform-evolutionary}. Since $P_{ijk}$ depends on $x$ only through $\bar u$, we have 
    \[ \partial_x P_{ijk} = \sum_{J} \der{P_{ijk}}{\bar u_J} \bar u_{Jx} . \]
    Because this quantity has the double zero property, its partial derivatives vanish on the equations \eqref{multiform-evolutionary}:
    \begin{align*}
        0 = \der{}{\bar u_{Ix}} \partial_x P_{ijk}
        &= \der{P_{ijk}}{\bar u_I} + \partial_x \der{P_{ijk}}{\bar u_{Ix}} .
    \end{align*}
    But this means that either $\der{P_{ijk}}{\bar u_I} = 0$, or $\der{P_{ijk}}{\bar u_{Ix}} \neq 0$, contradicting the assumption that $\bar u_I$ is the highest derivative for which 
    $\der{P_{ijk}}{\bar u_I} \neq 0$.
    This shows that if $\bar u$ solves the system \eqref{multiform-evolutionary}, then $\bar u$ satisfies the Lagrangian multiform principle.
    
    Now assume $\bar u$ satisfies the Lagrangian multiform principle. Let $n$ be the differential order of $\cA$ and write $\cA = a \partial_x^n + l.o.t.$, where $a$ is constant because $\cA$ is. Then by Theorem \ref{thm-tfae}(\ref{thm-partialP})
    \[ 0 = \der{P_{0jk}}{u_{x^n t_k}} = \frac12 a (\bar u_{t_j} - Q_j) , \]
    so $\bar u$ satisfies equations \eqref{multiform-evolutionary}.
\end{proof}

\subsubsection{Second Lagrangian 2-form}
In the same way, we can construct a second Lagrangian 2-form $\cM = \sum_{i<j} M_{ij} \dd t_i \wedge \dd t_j$. First we define
\[ M_{0j} = q[\bar u] \bar u_{t_j} - h_{j+1}[\bar u] , \]
where $q$ is related to the second symplectic operator by $\cK = \ell_q - \ell_q^*$.
Analogous to the above, we find that there exists a function $M_{jk}$ such that
\[ \partial_k M_{0j} - \partial_j M_{0k} + \partial_0 M_{jk} = \frac12 ( \bar u_{t_j} - Q_j) \cK (\bar u_{t_k} - Q_k) - \frac12 ( \bar u_{t_k} - Q_k) \cK (\bar u_{t_j} - Q_j) \]
This defines the remaining coefficients of $\cM = \sum_{i<j} M_{ij} \dd t_i \wedge \dd t_j$.

\begin{theorem}
    \label{thm-evolutionary-B}
    Assume that the operator $\cB$ has constant rank. Then $\bar u$ satisfies the Lagrangian multiform principle for $\cL$ if and only if the evolutionary equations \eqref{multiform-evolutionary} hold for all $j$.
\end{theorem}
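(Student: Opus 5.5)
The plan is to run the argument of Theorem \ref{thm-evolutionary-A} again for the 2-form $\cL$ built from $\cJ = \bar\cA = \ell_p - \ell_p^*$. I will check that its hypotheses hold in the present setting, and I will indicate where the constant-rank assumption on $\cB$ enters.

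\emph{Step 1: well-posedness of the data entering $\cL$.} The coefficients $L_{0j} = p\,\bar u_{t_j} - h_j$ use the densities $h_j$ and the flows $Q_j$ from the symplectic Lenard scheme. There, $Q_{j+1}$ is obtained by solving $\bar\cA Q_{j+1} = \bar\cB Q_j$, and $h_{j+1}$ comes from the closedness lemma.

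The constant-rank assumption on $\cB$, and hence on $\bar\cB[\bar u] = \cB[A\bar u]$, is what I would use to guarantee two things. First, the set $V$ in that lemma is large enough. Second, the solutions $Q_j$ and $h_j$ can be chosen as local differential functions of $\bar u$, depending smoothly on the jet, with no explicit $x$-dependence.

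Since $\cA$ is constant, $p$ can be chosen linear in the jet of $\bar u$ with constant coefficients. The functions $L_{jk}$ are obtained from $P_{0jk}$ by integrating a total $x$-derivative (the terms hidden by $\sim$), so they are then also free of explicit $x$. This is exactly the hypothesis of Theorem \ref{thm-evolutionary-A}.

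\emph{Step 2: sufficiency.} By construction, $P_{0jk} = \tfrac12(\bar u_{t_j}-Q_j)\cJ(\bar u_{t_k}-Q_k) - \tfrac12(\bar u_{t_k}-Q_k)\cJ(\bar u_{t_j}-Q_j)$, which has a double zero on \eqref{multiform-evolutionary}. The identity $\partial_x P_{ijk} = \partial_i P_{0jk} + \partial_j P_{0ki} + \partial_k P_{0ij}$ (from $\dd\dd\cL=0$) gives a double zero for $\partial_x P_{ijk}$.

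Next I use the ``highest derivative'' argument. Suppose $\bar u_I$ is the highest derivative with $\partial P_{ijk}/\partial \bar u_I \neq 0$ on solutions. Then $\partial/\partial\bar u_{Ix}$ applied to $\partial_x P_{ijk}$ yields a contradiction; this step needs the absence of explicit $x$ from Step 1. Hence condition (\ref{thm-partialP}) of Theorem \ref{thm-tfae} holds, and solutions of \eqref{multiform-evolutionary} satisfy the pluri-Lagrangian principle. For the closedness part $\dd\cL[\bar u]=0$, I would note that $P_{0jk}$ vanishes on solutions. Then $\partial_x P_{ijk}=0$ there, so $P_{ijk}$ is an $x$-independent function of the jet on solutions. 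Using the decay built into $\bar\cF$, this constant is zero.

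\emph{Step 3: necessity.} Write $\cA = a\,\partial_x^n + \text{l.o.t.}$ with $a$ constant and invertible, since $\cA$ is invertible. Differentiating $P_{0jk}$ with respect to the top mixed derivative $\bar u_{x^n t_k}$ and applying Theorem \ref{thm-tfae}(\ref{thm-partialP}) gives $\tfrac12 a(\bar u_{t_j}-Q_j) = 0$, hence \eqref{multiform-evolutionary}.

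I expect the main obstacle to be Step 1, namely justifying that the constant-rank assumption really produces local, $x$-independent $Q_j$ and $h_j$ along the whole hierarchy, since the rest is the earlier argument verbatim. My first attempt would be to invoke the closedness lemma inductively, with constant rank ensuring $V$ is large. For the second ingredient, I would try to integrate $\dd\phi_{j+1}=0$ by the homotopy formula, which produces densities free of explicit $x$.
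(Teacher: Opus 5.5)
Your argument proves the wrong statement. The theorem sits in the subsection on the \emph{second} Lagrangian 2-form, and the ``$\cL$'' in it is a slip for $\cM$. The only reason to assume anything about $\cB$ is that $\cM$ is built from $\cK=\bar\cB$, so that
\[ P_{0jk} = \tfrac12 (\bar u_{t_j}-Q_j)\,\bar\cB(\bar u_{t_k}-Q_k) - \tfrac12 (\bar u_{t_k}-Q_k)\,\bar\cB(\bar u_{t_j}-Q_j) . \]
Taken literally for $\cL$, the statement is Theorem~\ref{thm-evolutionary-A} with an idle hypothesis, and your Steps 2--3 just repeat that proof. To give the hypothesis a job you invented Step 1, but its claims do not follow from constant rank. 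Constant rank of $\cB$ says nothing about whether $\bar\cB Q_j$ lies in the image of $\bar\cA$ within local functions, nor about explicit $x$-dependence. The constant astigmatism example in the paper has nondegenerate leading coefficient $\mathrm{diag}(2\bar w_x, 2/\bar w_x)$, yet $h_1$ contains $2x\bar w$ and $Q_2$ is nonlocal. Absence of explicit $x$ is a standing assumption inherited from Theorem~\ref{thm-evolutionary-A}, not a consequence of constant rank.

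The missing idea is where constant rank actually enters: the necessity direction for $\cM$. Sufficiency goes as in your Step 2 with $\bar\cB$ in place of $\bar\cA$. The double-zero and highest-derivative arguments only use that $P_{0jk}$ is bilinear in $\bar u_{t_j}-Q_j$ and $\bar u_{t_k}-Q_k$, together with the absence of explicit $x$. Your Step 3, however, does not transfer, because the constant $a$ from $\cA$ does not occur as a leading coefficient in this $P_{0jk}$. Instead, write $\bar\cB = b[\bar u]\,\partial_x^n + \text{l.o.t.}$ with $n$ the order of $\bar\cB$. The coefficients of $\bar\cB$ and the $Q$'s involve only $x$-derivatives, so $\bar u_{x^n t_k}$ enters $P_{0jk}$ only through the leading term of $\bar\cB(\bar u_{t_k}-Q_k)$. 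Theorem~\ref{thm-tfae}(\ref{thm-partialP}) then gives
\[ 0 = \der{P_{0jk}}{\bar u_{x^n t_k}} = \tfrac12\, b[\bar u]\,(\bar u_{t_j}-Q_j) . \]
Constant rank is precisely what makes $b[\bar u]$ nonvanishing (nondegenerate in the matrix case), and this yields \eqref{multiform-evolutionary}. Without it the step is empty where $b$ vanishes: for $\bar\cB_1 = 2\bar u_x\partial_x + \bar u_{xx}$ one only gets $\bar u_x(\bar u_{t_j}-Q_j)=0$.
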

\begin{proof}
    The proof that equations \eqref{multiform-evolutionary} imply that $\bar u$ satisfies the Lagrangian multiform principle is as in Theorem \ref{thm-evolutionary-A}.

    Assume $\bar u$ satisfies the Lagrangian multiform principle. Let $n$ be the differential order of $\cB$ and write $\cB = b[\bar u] \partial_x^n + l.o.t.$, where $b[\bar u]$ is non-vanishing because $\cB$ has constant rank. Then by Theorem \ref{thm-tfae}(\ref{thm-partialP})
    \[ 0 = \der{P_{0jk}}{u_{x^n t_k}} P_{0jk} = \frac12 b[\bar u] (\bar u_{t_j} - Q_j) , \]
    so $\bar u$ satisfies equations \eqref{multiform-evolutionary}.
\end{proof}

\section{Examples}\label{sec4}

\subsection{Potential Korteweg-de Vries equation}
\label{sec-ex-kdv}

Let us consider the KdV equation
\[
    u_t=3uu_x+u_{3x},
\]
whose bi-Hamiltonian structure is well-established and given by the operators
$\cA = \partial_x$ and $\cB = \partial_x^3+2u\partial_x+u_x$. Let us now consider the Hamiltonian potential variable $\bar{u}$, w.r.t.\@ $\cA$, such that $u=\partial_x \bar{u}$. The equation becomes
\[
    \bar{u}_{tx}-3\bar{u}_x\bar{u}_{xx}-\bar{u}_{4x}=0,
\]
or, after integrating w.r.t.\@ $x$,
\[
    \bar{u}_t-\frac{3}{2}\bar{u}^2_x-\bar{u}_{3x}=0 ,
\]
which is known as \emph{potential KdV}, or simply pKdV. Moreover, in Hamiltonian potential variables, the second operator is mapped into
\[\bar \cB = \partial_x^3 + 2 \bar u_x \partial_x + \bar u_{xx},\]
while the first operator remains the same, $\bar \cA = \partial_x$.

We now proceed with the Lenard recursion using the symplectic operators $\bar \cA$ and $\bar \cB$.
Let $\bar u_{t_0} = \bar u_x$, then
\begin{align*}
    & \bar \cA(\bar u_{t_0}) = \bar u_{xx} = -\var{}{h_0}{\bar u}, \\
    & h_0 = \tfrac{1}{2} \bar u_{x}^{2} , \\
    & \bar \cB(\bar u_{t_0}) = 3 \bar u_{x} \bar u_{xx} + \bar u_{4x} .
\end{align*}
Solving $\bar \cB(\bar u_{t_0}) = \bar \cA(\bar u_{t_1}) = -\var{}{h_1}{\bar u}$ we find
\begin{align*}
    & \bar u_{t_1} =  \tfrac{3}{2} \bar u_{x}^{2} + \bar u_{3x} , \\
    & h_1 = \tfrac{1}{2} \bar u_{x}^{3} - \tfrac{1}{2} \bar u_{xx}^{2} , \\
    & \bar\cB(\bar u_{t_1}) = \tfrac{15}{2} \bar u_{x}^{2} \bar u_{xx} + 10 \bar u_{xx} \bar u_{3x} + 5 \bar u_{x} \bar u_{4x} + \bar u_{6x} .
\end{align*}
Solving $\bar \cB(\bar u_{t_1}) = \bar \cA(\bar u_{t_2}) = -\var{}{h_2}{\bar u}$ we find
\begin{align*}
    & \bar u_{t_2} =  \tfrac{5}{2} \bar u_{x}^{3} + \tfrac{5}{2} \bar u_{xx}^{2} + 5 \bar u_{x} \bar u_{3x} + \bar u_{5x} , \\
    & h_2 = \tfrac{5}{8} \bar u_{x}^{4} - \tfrac{5}{2} \bar u_{x} \bar u_{xx}^{2} + \tfrac{1}{2} \bar u_{3x}^{2} , \\
    &\bar\cB(\bar u_{t_2}) = \tfrac{35}{2} \bar u_{x}^{3} \bar u_{xx} + \tfrac{35}{2} \bar u_{xx}^{3} + 70 \bar u_{x} \bar u_{xx} \bar u_{3x} + \tfrac{35}{2} \bar u_{x}^{2} \bar u_{4x} + 35 \bar u_{3x} \bar u_{4x} + 21 \bar u_{xx} \bar u_{5x} + 7 \bar u_{x} \bar u_{6x} + \bar u_{8x} .
\end{align*}
Solving $\bar \cB(\bar u_{t_2}) = \bar \cA(\bar u_{t_3}) = -\var{}{h_3}{\bar u}$ we find
\begin{align*}
    & \bar u_{t_3} =  \tfrac{35}{8} \bar u_{x}^{4} + \tfrac{35}{2} \bar u_{x} \bar u_{xx}^{2} + \tfrac{35}{2} \bar u_{x}^{2} \bar u_{3x} + \tfrac{21}{2} \bar u_{3x}^{2} + 14 \bar u_{xx} \bar u_{4x} + 7 \bar u_{x} \bar u_{5x} + \bar u_{7x} , \\
    & h_3 = \tfrac{7}{8} \bar u_{x}^{5} - \tfrac{35}{4} \bar u_{x}^{2} \bar u_{xx}^{2} + \tfrac{7}{2} \bar u_{x} \bar u_{3x}^{2} - \tfrac{1}{2} \bar u_{4x}^{2} .
\end{align*}

\paragraph{First multiform.}
Observe that $\bar\cA = \ell_p - \ell_p^*$ with $p = \frac12 \bar u_x$. This leads to Lagrangians of the form
\[ L = \frac12 \bar u_x \bar u_t - h, \]
which can be extended to a multiform $\cL = \sum_{i<j} L_{ij} \dd t_i \wedge \dd t_j$ with
\begin{align*}
    & L_{01} =  -\tfrac{1}{2} \bar u_{x}^{3} + \tfrac{1}{2} \bar u_{xx}^{2} + \tfrac{1}{2} \bar u_{x} \bar u_{t_1} , \\
    & L_{02} =  -\tfrac{5}{8} \bar u_{x}^{4} + \tfrac{5}{2} \bar u_{x} \bar u_{xx}^{2} - \tfrac{1}{2} \bar u_{3x}^{2} + \tfrac{1}{2} \bar u_{x} \bar u_{t_2} , \\
    & L_{03} =  -\tfrac{7}{8} \bar u_{x}^{5} + \tfrac{35}{4} \bar u_{x}^{2} \bar u_{xx}^{2} - \tfrac{7}{2} \bar u_{x} \bar u_{3x}^{2} + \tfrac{1}{2} \bar u_{4x}^{2} + \tfrac{1}{2} \bar u_{x} \bar u_{t_3} .
\end{align*}
The standard Euler-Lagrange equations of $L_{0j}$ are differential consequences of the pKdV equations, obtained by applying the operator $\bar\cA = \partial_x$ to the pKdV equations:
\begin{align*}
    & -\var{01}{L_{01}}{\bar u} = -3 \bar u_{x} \bar u_{xx} - \bar u_{4x} + \bar u_{xt_1} = \bar\cA(\bar u_{t_1} - Q_1) , \\
    & -\var{02}{L_{02}}{\bar u} = -\tfrac{15}{2} \bar u_{x}^{2} \bar u_{xx} - 10 \bar u_{xx} \bar u_{3x} - 5 \bar u_{x} \bar u_{4x} - \bar u_{6x} + \bar u_{xt_2} = \bar\cA(\bar u_{t_2} - Q_2) , \\
    & -\var{03}{L_{03}}{\bar u} = \bar \cA(\bar u_{t_3} - Q_3) .
\end{align*}
Using the construction of Section \ref{sec-bilagrangian}, we find the remaining coefficients:
\begin{align*}
    L_{12} &= \tfrac{3}{8} \bar u_{x}^{5} - \tfrac{15}{8} \bar u_{x}^{2} \bar u_{xx}^{2} + \tfrac{5}{2} \bar u_{x}^{3} \bar u_{3x} - \tfrac{5}{4} \bar u_{x}^{3} \bar u_{t_1} + \tfrac{7}{4} \bar u_{xx}^{2} \bar u_{3x} + \tfrac{3}{2} \bar u_{x} \bar u_{3x}^{2} - 3 \bar u_{x} \bar u_{xx} \bar u_{4x} + \tfrac{3}{4} \bar u_{x}^{2} \bar u_{5x} + 5 \bar u_{x} \bar u_{xx} \bar u_{xt_1}  \\
    &\qquad - \tfrac{5}{4} \bar u_{xx}^{2} \bar u_{t_1} - \tfrac{5}{2} \bar u_{x} \bar u_{3x} \bar u_{t_1} + \tfrac{3}{4} \bar u_{x}^{2} \bar u_{t_2} - \tfrac{1}{2} \bar u_{4x}^{2}  + \tfrac{1}{2} \bar u_{3x} \bar u_{5x} - \bar u_{3x} \bar u_{xxt_1} + \bar u_{4x} \bar u_{xt_1} - \bar u_{xx} \bar u_{xt_2}\\
    &\qquad - \tfrac{1}{2} \bar u_{5x} \bar u_{t_1} + \tfrac{1}{2} \bar u_{3x} \bar u_{t_2} ,
\\[1ex]
    L_{13} &= \tfrac{35}{32} \bar u_{x}^{6} - \tfrac{35}{8} \bar u_{x}^{3} \bar u_{xx}^{2} + \tfrac{175}{16} \bar u_{x}^{4} \bar u_{3x} - \tfrac{35}{16} \bar u_{x}^{4} \bar u_{t_1} + \tfrac{35}{8} \bar u_{xx}^{4} - \tfrac{35}{4} \bar u_{x} \bar u_{xx}^{2} \bar u_{3x} + \tfrac{147}{8} \bar u_{x}^{2} \bar u_{3x}^{2} - \tfrac{21}{2} \bar u_{x}^{2} \bar u_{xx} \bar u_{4x} \\
    &\qquad + \tfrac{21}{4} \bar u_{x}^{3} \bar u_{5x} + \tfrac{35}{2} \bar u_{x}^{2} \bar u_{xx} \bar u_{xt_1} - \tfrac{35}{4} \bar u_{x} \bar u_{xx}^{2} \bar u_{t_1} - \tfrac{35}{4} \bar u_{x}^{2} \bar u_{3x} \bar u_{t_1} + \tfrac{19}{4} \bar u_{3x}^{3} - 2 \bar u_{xx} \bar u_{3x} \bar u_{4x} - 5 \bar u_{x} \bar u_{4x}^{2} \\
    &\qquad + 3 \bar u_{xx}^{2} \bar u_{5x} + \tfrac{13}{2} \bar u_{x} \bar u_{3x} \bar u_{5x} - 3 \bar u_{x} \bar u_{xx} \bar u_{6x} + \tfrac{3}{4} \bar u_{x}^{2} \bar u_{7x} - 7 \bar u_{x} \bar u_{3x} \bar u_{xxt_1} + 7 \bar u_{xx} \bar u_{3x} \bar u_{xt_1} + 7 \bar u_{x} \bar u_{4x} \bar u_{xt_1} \\
    &\qquad - \tfrac{21}{4} \bar u_{3x}^{2} \bar u_{t_1} - 7 \bar u_{xx} \bar u_{4x} \bar u_{t_1} - \tfrac{7}{2} \bar u_{x} \bar u_{5x} \bar u_{t_1} + \tfrac{3}{4} \bar u_{x}^{2} \bar u_{t_3} + \tfrac{1}{2} \bar u_{5x}^{2} - \bar u_{4x} \bar u_{6x} + \tfrac{1}{2} \bar u_{3x} \bar u_{7x} + \bar u_{4x} \bar u_{3xt_1} \\
    &\qquad - \bar u_{5x} \bar u_{xxt_1} + \bar u_{6x} \bar u_{xt_1} - \bar u_{xx} \bar u_{xt_3} - \tfrac{1}{2} \bar u_{7x} \bar u_{t_1} + \tfrac{1}{2} \bar u_{3x} \bar u_{t_3} ,
\\[1ex]
    L_{23} &= \tfrac{25}{32} \bar u_{x}^{7} - \tfrac{175}{32} \bar u_{x}^{4} \bar u_{xx}^{2} + \tfrac{175}{16} \bar u_{x}^{5} \bar u_{3x} + \tfrac{175}{8} \bar u_{x} \bar u_{xx}^{4} - \tfrac{175}{8} \bar u_{x}^{2} \bar u_{xx}^{2} \bar u_{3x} + \tfrac{245}{8} \bar u_{x}^{3} \bar u_{3x}^{2} - \tfrac{35}{2} \bar u_{x}^{3} \bar u_{xx} \bar u_{4x} + \tfrac{105}{16} \bar u_{x}^{4} \bar u_{5x} \\
    &\qquad - \tfrac{35}{16} \bar u_{x}^{4} \bar u_{t_2} - \tfrac{305}{8} \bar u_{xx}^{2} \bar u_{3x}^{2} + \tfrac{95}{4} \bar u_{x} \bar u_{3x}^{3} + 20 \bar u_{xx}^{3} \bar u_{4x} - 10 \bar u_{x} \bar u_{xx} \bar u_{3x} \bar u_{4x} - \tfrac{25}{2} \bar u_{x}^{2} \bar u_{4x}^{2} + 15 \bar u_{x} \bar u_{xx}^{2} \bar u_{5x} \\
    &\qquad + \tfrac{65}{4} \bar u_{x}^{2} \bar u_{3x} \bar u_{5x} - \tfrac{15}{2} \bar u_{x}^{2} \bar u_{xx} \bar u_{6x} + \tfrac{5}{4} \bar u_{x}^{3} \bar u_{7x} + \tfrac{35}{2} \bar u_{x}^{2} \bar u_{xx} \bar u_{xt_2} - \tfrac{35}{4} \bar u_{x} \bar u_{xx}^{2} \bar u_{t_2} - \tfrac{35}{4} \bar u_{x}^{2} \bar u_{3x} \bar u_{t_2} + \tfrac{5}{4} \bar u_{x}^{3} \bar u_{t_3} \\
    &\qquad + \tfrac{19}{4} \bar u_{3x}^{2} \bar u_{5x} + 8 \bar u_{xx} \bar u_{4x} \bar u_{5x} + \tfrac{5}{2} \bar u_{x} \bar u_{5x}^{2} - 10 \bar u_{xx} \bar u_{3x} \bar u_{6x} - 5 \bar u_{x} \bar u_{4x} \bar u_{6x} + \tfrac{5}{4} \bar u_{xx}^{2} \bar u_{7x} + \tfrac{5}{2} \bar u_{x} \bar u_{3x} \bar u_{7x} \\
    &\qquad - 7 \bar u_{x} \bar u_{3x} \bar u_{xxt_2} + 7 \bar u_{xx} \bar u_{3x} \bar u_{xt_2} + 7 \bar u_{x} \bar u_{4x} \bar u_{xt_2} - 5 \bar u_{x} \bar u_{xx} \bar u_{xt_3} - \tfrac{21}{4} \bar u_{3x}^{2} \bar u_{t_2} - 7 \bar u_{xx} \bar u_{4x} \bar u_{t_2} \\
    &\qquad - \tfrac{7}{2} \bar u_{x} \bar u_{5x} \bar u_{t_2} + \tfrac{5}{4} \bar u_{xx}^{2} \bar u_{t_3} + \tfrac{5}{2} \bar u_{x} \bar u_{3x} \bar u_{t_3} - \tfrac{1}{2} \bar u_{6x}^{2} + \tfrac{1}{2} \bar u_{5x} \bar u_{7x} + \bar u_{4x} \bar u_{3xt_2} - \bar u_{5x} \bar u_{xxt_2} + \bar u_{3x} \bar u_{xxt_3} \\
    &\qquad + \bar u_{6x} \bar u_{xt_2} - \bar u_{4x} \bar u_{xt_3} - \tfrac{1}{2} \bar u_{7x} \bar u_{t_2} + \tfrac{1}{2} \bar u_{5x} \bar u_{t_3} .
\end{align*}

The multiform Euler-Lagrange equations of the form $\var{0j}{L_{0j}}{\bar u}$ give the pKdV equations in their differentiated form, for example:
\begin{align*}
    0 = \var{01}{L_{01}}{\bar u} &= 3 \bar u_{x} \bar u_{xx} + \bar u_{4x} - \bar u_{xt_1} , \\
    0 = \var{02}{L_{02}}{\bar u} &= \tfrac{15}{2} \bar u_{x}^{2} \bar u_{xx} + 10 \bar u_{xx} \bar u_{3x} + 5 \bar u_{x} \bar u_{4x} + \bar u_{6x} - \bar u_{xt_2} .
\end{align*}
The evolutionary form of the pKdV equations can be obtained in several ways from the system of multiform Euler-Lagrange equations, for example as
\begin{align*}
    0 = \var{12}{L_{12}}{\bar u_{5x}} &= \tfrac{3}{4} \bar u_{x}^{2} + \tfrac{1}{2} \bar u_{3x} - \tfrac{1}{2} \bar u_{t_1} , \\
    0 = \var{23}{L_{23}}{\bar u_{7x}} &= \tfrac{5}{4} \bar u_{x}^{3} + \tfrac{5}{4} \bar u_{xx}^{2} + \tfrac{5}{2} \bar u_{x} \bar u_{3x} + \tfrac{1}{2} \bar u_{5x} - \tfrac{1}{2} \bar u_{t_2} ,
\end{align*}
or as
\begin{align*}
    0 = \var{01}{L_{01}}{\bar u_x} + \var{12}{L_{12}}{\bar u_{t_2}} &= -\tfrac{3}{4} \bar u_{x}^{2} - \tfrac{1}{2} \bar u_{3x} + \tfrac{1}{2} \bar u_{t_1} ,\\
    0 = \var{02}{L_{02}}{\bar u_x} - \var{12}{L_{12}}{\bar u_{t_1}} &= -\tfrac{5}{4} \bar u_{x}^{3} - \tfrac{5}{4} \bar u_{xx}^{2} - \tfrac{5}{2} \bar u_{x} \bar u_{3x} - \tfrac{1}{2} \bar u_{5x} + \tfrac{1}{2} \bar u_{t_2} .
\end{align*}
By Theorem \ref{thm-evolutionary-A}, all multiform Euler-Lagrange equations are differential consequences of the system of evolutionary pKdV equations.

\paragraph{Second Lagrangian multiform.}

In the second case, note that $\bar \cB = \ell_p - \ell_p^*$ with $p = \frac12 \bar u_{3x} + \frac12 \bar u_x^2$. This leads to Lagrangians of the form
\[ L = \tfrac12 (\bar u_{3x} + \bar u_x^2) \bar u_t - h, \]
which we can extend to a multiform $\cL = \sum_{i<j} L_{ij} \dd t_i \wedge \dd t_j$ with
\begin{align*}
& L_{01} =  -\tfrac{5}{8} \bar u_{x}^{4} + \tfrac{5}{2} \bar u_{x} \bar u_{xx}^{2} - \tfrac{1}{2} \bar u_{3x}^{2} + \tfrac{1}{2} {\left(\bar u_{x}^{2} + \bar u_{3x}\right)} \bar u_{t_1} , \\
& L_{02} =  -\tfrac{7}{8} \bar u_{x}^{5} + \tfrac{35}{4} \bar u_{x}^{2} \bar u_{xx}^{2} - \tfrac{7}{2} \bar u_{x} \bar u_{3x}^{2} + \tfrac{1}{2} \bar u_{4x}^{2} + \tfrac{1}{2} {\left(\bar u_{x}^{2} + \bar u_{3x}\right)} \bar u_{t_2} .
\end{align*}
The standard Euler-Lagrange equations of these $L_{0j}$ are obtained by applying the operator $\bar\cB$ to the pKdV equations:
\begin{align*}
-\var{01}{L_{01}}{\bar u} &= -\tfrac{15}{2} \bar u_{x}^{2} \bar u_{xx} - 10 \bar u_{xx} \bar u_{3x} - 5 \bar u_{x} \bar u_{4x} + 2 \bar u_{x} \bar u_{xt_1} + \bar u_{xx} \bar u_{t_1} - \bar u_{6x} + \bar u_{3xt_1} \\
&= \bar\cB(\bar u_{t_1} - Q_1)
\end{align*}
and
\begin{align*}
 -\var{02}{L_{02}}{\bar u} &= -\tfrac{35}{2} \bar u_{x}^{3} \bar u_{xx} - \tfrac{35}{2} \bar u_{xx}^{3} - 70 \bar u_{x} \bar u_{xx} \bar u_{3x} - \tfrac{35}{2} \bar u_{x}^{2} \bar u_{4x} - 35 \bar u_{3x} \bar u_{4x} - 21 \bar u_{xx} \bar u_{5x} - 7 \bar u_{x} \bar u_{6x} \\
 &\qquad + 2 \bar u_{x} \bar u_{xt_2} + \bar u_{xx} \bar u_{t_2} - \bar u_{8x} + \bar u_{3xt_2} \\
 &= \bar\cB(\bar u_{t_2} - Q_2) .
\end{align*}
Using the construction of Section \ref{sec-bilagrangian}, we find the remaining coefficient:
\begin{align*}
    L_{12} &= \tfrac{5}{8} \bar u_{x}^{6} + \tfrac{5}{4} \bar u_{x}^{3} \bar u_{xx}^{2} + \tfrac{55}{8} \bar u_{x}^{4} \bar u_{3x} - \tfrac{15}{8} \bar u_{x}^{4} \bar u_{t_1} + \tfrac{5}{8} \bar u_{xx}^{4} - \tfrac{25}{4} \bar u_{x} \bar u_{xx}^{2} \bar u_{3x} + \tfrac{45}{4} \bar u_{x}^{2} \bar u_{3x}^{2} + 4 \bar u_{x}^{3} \bar u_{5x} + \tfrac{5}{4} \bar u_{x}^{3} \bar u_{xxt_1} \\
    &\qquad + \tfrac{55}{4} \bar u_{x}^{2} \bar u_{xx} \bar u_{xt_1} - \tfrac{15}{2} \bar u_{x} \bar u_{xx}^{2} \bar u_{t_1} - \tfrac{35}{4} \bar u_{x}^{2} \bar u_{3x} \bar u_{t_1} + \bar u_{x}^{3} \bar u_{t_2} + 5 \bar u_{3x}^{3} + \tfrac{5}{2} \bar u_{xx} \bar u_{3x} \bar u_{4x} - \tfrac{5}{2} \bar u_{x} \bar u_{4x}^{2} + \tfrac{1}{4} \bar u_{xx}^{2} \bar u_{5x} \\
    &\qquad + \tfrac{5}{2} \bar u_{x} \bar u_{3x} \bar u_{5x} - \tfrac{3}{2} \bar u_{x} \bar u_{xx} \bar u_{6x} + \tfrac{3}{4} \bar u_{x}^{2} \bar u_{7x} + \tfrac{5}{4} \bar u_{xx}^{2} \bar u_{xxt_1} - \tfrac{9}{2} \bar u_{x} \bar u_{3x} \bar u_{xxt_1} - \tfrac{3}{4} \bar u_{x}^{2} \bar u_{xxt_2} + 2 \bar u_{xx} \bar u_{3x} \bar u_{xt_1} \\
    &\qquad + \tfrac{9}{2} \bar u_{x} \bar u_{4x} \bar u_{xt_1} - \tfrac{7}{2} \bar u_{x} \bar u_{xx} \bar u_{xt_2} - \tfrac{11}{2} \bar u_{3x}^{2} \bar u_{t_1} - \tfrac{13}{2} \bar u_{xx} \bar u_{4x} \bar u_{t_1} - \tfrac{7}{2} \bar u_{x} \bar u_{5x} \bar u_{t_1} + \bar u_{xx}^{2} \bar u_{t_2} + \tfrac{5}{2} \bar u_{x} \bar u_{3x} \bar u_{t_2} \\
    &\qquad - \tfrac{1}{2} \bar u_{4x} \bar u_{6x} + \tfrac{1}{2} \bar u_{3x} \bar u_{7x} + \bar u_{4x} \bar u_{3xt_1} - \tfrac{1}{2} \bar u_{5x} \bar u_{xxt_1} + \tfrac{1}{2} \bar u_{3x} \bar u_{xxt_2} + \tfrac{1}{2} \bar u_{6x} \bar u_{xt_1} - \tfrac{1}{2} \bar u_{4x} \bar u_{xt_2} - \tfrac{1}{2} \bar u_{7x} \bar u_{t_1} \\
    &\qquad + \tfrac{1}{2} \bar u_{5x} \bar u_{t_2} .
\end{align*}
The evolutionary equations can be found as part of the system of  multiform EL equations, for example as
\begin{align*}
    0 = \var{01}{L_{01}}{\bar u_{3x}} + \var{12}{L_{12}}{\bar u_{xxt_2}} &= -\tfrac{3}{4} \bar u_{x}^{2} - \tfrac{1}{2} \bar u_{3x} + \tfrac{1}{2} \bar u_{t_1} , \\
    0 = \var{02}{L_{02}}{\bar u_{3x}} - \var{12}{L_{12}}{\bar u_{xxt_2}} &= \tfrac{3}{4} \bar u_{x}^{2} - 7 \bar u_{x} \bar u_{3x} - \tfrac{1}{2} \bar u_{3x} - \bar u_{5x} + \tfrac{1}{2} \bar u_{t_2} .
\end{align*}
By Theorem \ref{thm-evolutionary-B}, all multiform Euler-Lagrange equations are differential consequences of the system of evolutionary pKdV equations.

\subsection{Dispersionless potential KdV}

As before, let us fix the first operator to be $\cA = \partial_x$, whose symplectic counterpart $\bar \cA = \partial_x$ corresponds to $p = \frac12 \bar u_x$, where the Hamiltonian potential variable is defined through $u = \bar u_x$. The Lagrangians associated to the operator $\cA$ are of the form $L = \frac12 \bar u_x \bar u_t - h$. In this case, we have several other candidates for the compatible operator $\cB$ \cite{olver1988hamiltonian}:
\begin{enumerate}
    \item $\cB_1 = 2 u \partial_x + u_x$
    \item $\cB_2 = u^2 \partial_x + u u_x$
    \item $\cB_3 = \partial_x \frac{1}{u_x} \partial_x\frac{1}{u_x} \partial_x$
\end{enumerate}

\paragraph{Case 1:}

$\bar \cB_1 = 2 \bar u_x \partial_x + \bar u_{xx}$, so the recursion operator is $\cR = \partial_x^{-1} \circ (2 \bar u_x \partial_x + \bar u_{xx})$. More explicitly, the first few iterations of the recursion are as follows:
\begin{align*}
	& \bar u_{t_0} =  \bar u_{x} , 
	&& \bar \cA(\bar u_{t_0}) = \bar u_{xx} = -\var{}{h_0}{\bar u}, 
	&& h_0 = \tfrac{1}{2} \bar u_{x}^{2} , 
	&& \bar \cB(\bar u_{t_0}) = 3 \bar u_{x} \bar u_{xx} . \\
	& \bar u_{t_1} =  \tfrac{3}{2} \bar u_{x}^{2} , 
	&& \bar\cA(\bar u_{t_1}) = 3 \bar u_{x} \bar u_{xx} = -\var{}{h_1}{\bar u}, 
	&& h_1 = \tfrac{1}{2} \bar u_{x}^{3} , 
	&& \bar\cB(\bar u_{t_1}) = \tfrac{15}{2} \bar u_{x}^{2} \bar u_{xx} , \\
	& \bar u_{t_2} =  \tfrac{5}{2} \bar u_{x}^{3} , 
	&& \bar\cA(\bar u_{t_2}) = \tfrac{15}{2} \bar u_{x}^{2} \bar u_{xx} = -\var{}{h_2}{\bar u}, 
	&& h_2 = \tfrac{5}{8} \bar u_{x}^{4} , 
	&&\bar \cB(\bar u_{t_2}) = \tfrac{35}{2} \bar u_{x}^{3} \bar u_{xx} , \\
	& \bar u_{t_3} =  \tfrac{35}{8} \bar u_{x}^{4} , 
	&&\bar \cA(\bar u_{t_3}) = \tfrac{35}{2} \bar u_{x}^{3} \bar u_{xx} = -\var{}{h_3}{\bar u},
	&& h_3 = \tfrac{7}{8} \bar u_{x}^{5} ,
	&& \ldots
\end{align*}
The Lagrangian multiform associated to $\cA$ has coefficients
\begin{align*}
	& L_{01} =  -\tfrac{1}{2} \bar u_{x}^{3} + \tfrac{1}{2} \bar u_{x} \bar u_{t_1} ,
	&& L_{02} =  -\tfrac{5}{8} \bar u_{x}^{4} + \tfrac{1}{2} \bar u_{x} \bar u_{t_2} ,
	&& L_{03} =  -\tfrac{7}{8} \bar u_{x}^{5} + \tfrac{1}{2} \bar u_{x} \bar u_{t_3} ,
\end{align*}
and
\begin{align*}
	& L_{12} = \tfrac{3}{8} \bar u_{x}^{5} - \tfrac{5}{4} \bar u_{x}^{3} \bar u_{t_1} + \tfrac{3}{4} \bar u_{x}^{2} \bar u_{t_2} ,
	&& L_{13} = \tfrac{35}{32} \bar u_{x}^{6} - \tfrac{35}{16} \bar u_{x}^{4} \bar u_{t_1} + \tfrac{3}{4} \bar u_{x}^{2} \bar u_{t_3} ,
	&& L_{23} = \tfrac{25}{32} \bar u_{x}^{7} - \tfrac{35}{16} \bar u_{x}^{4} \bar u_{t_2} + \tfrac{5}{4} \bar u_{x}^{3} \bar u_{t_3} .
\end{align*}
It can be obtained from the KdV multiform by removing all terms depending on second and higher derivatives.

on the other hand, we can write $\bar\cB_{1} = \ell_{p_1} -  \ell^*_{p_1}$, where $p_1 = \frac12 \bar u_x^2$. Hence the Lagrangian associated to $\cB_1$ is
\[ L = \tfrac12 \bar u_x^2 \bar u_t - h . \]
and the Lagrangian multiform associated to $\bar\cB_1$ has coefficients
\begin{align*}
	& L_{01} =  -\tfrac{5}{8} \bar u_{x}^{4} + \tfrac{1}{2} \bar u_{x}^{2} \bar u_{t_1},
	&& L_{02} =  -\tfrac{7}{8} \bar u_{x}^{5} + \tfrac{1}{2} \bar u_{x}^{2} \bar u_{t_2},
\end{align*}
with
\[ L_{12} = \tfrac{5}{8} \bar u_{x}^{6} - \tfrac{15}{8} \bar u_{x}^{4} \bar u_{t_1} + \bar u_{x}^{3} \bar u_{t_2} . \]

\paragraph{Case 2:}

$\bar \cB_2 =  \bar u_x^2 \partial_x + \bar u_x \bar u_{xx}$, so the recursion operator is $\cR = \partial_x^{-1} \circ (\bar u_x^2 \partial_x + \bar u_x \bar u_{xx}) $. The relevant quantities at the first few levels of the recursion are
\begin{align*}
	& \bar u_{t_0} =  \bar u_{x} , 
	&& \bar\cA(\bar u_{t_0}) = \bar u_{xx} = -\var{}{h_0}{\bar u}, 
	&& h_0 = \tfrac{1}{2} \bar u_{x}^{2} ,
	&& \bar\cB(\bar u_{t_0}) = 2 \bar u_{x}^{2} \bar u_{xx} ,  \\
	& \bar u_{t_1} =  \tfrac{2}{3} \bar u_{x}^{3} , 
	&& \bar\cA(\bar u_{t_1}) = 2 \bar u_{x}^{2} \bar u_{xx} = -\var{}{h_1}{\bar u}, 
	&& h_1 = \tfrac{1}{6} \bar u_{x}^{4} ,
	&& \bar\cB(\bar u_{t_1}) = \tfrac{8}{3} \bar u_{x}^{4} \bar u_{xx} , \\
	& \bar u_{t_2} =  \tfrac{8}{15} \bar u_{x}^{5} , 
	&& \bar\cA(\bar u_{t_2}) = \tfrac{8}{3} \bar u_{x}^{4} \bar u_{xx} = -\var{}{h_2}{\bar u}, 
	&& h_2 = \tfrac{4}{45} \bar u_{x}^{6} , 
	&& \bar\cB(\bar u_{t_2}) = \tfrac{16}{5} \bar u_{x}^{6} \bar u_{xx} , \\
	& \bar u_{t_3} =  \tfrac{16}{35} \bar u_{x}^{7} ,
	&& \bar\cA(\bar u_{t_3}) = \tfrac{16}{5} \bar u_{x}^{6} \bar u_{xx} = -\var{}{h_3}{\bar u}, 
	&& h_3 = \tfrac{2}{35} \bar u_{x}^{8} ,
	&& \ldots
\end{align*}
Up to rescaling, the $i$-th level of this hierarchy is the $2i$-th level of the previous example.

Here, we can write $\bar\cB_2 = \ell_{p_2} -  \ell^*_{p_2}$, where $p_2 = \frac16 \bar u_x^3$. Hence the Lagrangian associated to $\cB_2$ is
\[ L = \tfrac16 \bar u_x^3 \bar u_t - h . \]
The corresponding multiform has coefficients
\begin{align*}
	& L_{01} =  -\tfrac{4}{45} \bar u_{x}^{6} + \tfrac{1}{6} \bar u_{x}^{3} \bar u_{t_1},
	&& L_{02} =  -\tfrac{2}{35} \bar u_{x}^{8} + \tfrac{1}{6} \bar u_{x}^{3} \bar u_{t_2},
\end{align*}
and
\[L_{12} = \tfrac{8}{225} \bar u_{x}^{10} - \tfrac{4}{21} \bar u_{x}^{7} \bar u_{t_1} + \tfrac{1}{5} \bar u_{x}^{5} \bar u_{t_2}. \]

\paragraph{Case 3:}

$\bar \cB_3 = \partial_x \frac{1}{\bar u_{xx}} \partial_x \frac{1}{\bar u_{xx}} \partial_x$, so the recursion operator is $\cR =  \frac{1}{\bar u_{xx}} \partial_x \frac{1}{\bar u_{xx}} \partial_x$.
It is not clear where we should start the Lenard recursion, because $\bar\cB_3 (\bar u_x) = 0$. However, with the Hamiltonians 
\[ h_1 = 3 \bar u_x^2, \qquad h_2 = 4 \bar u_x^3, \]
we have 
\[ \bar\cB_3(\bar u_x^3) = -\var{}{h_1}{\bar u}, \qquad \bar \cB_3(\bar u_x^4) = -\var{}{h_2}{\bar u} . \]

We can write $\bar \cB_3 = \ell_{p_3} -  \ell^*_{p_3}$, where $p_3 = \frac12 \frac{\bar u_{xxx}}{\bar u_{xx}^2}$. Hence the Lagrangian associated to $\cB_3$ is
\[ L = \frac12 \frac{\bar u_{xxx}}{\bar u_{xx}^2} \bar u_t - h . \]
(An equivalent Lagrangian is $L = \frac12 \frac{1}{\bar u_{xx}} \bar u_{xt} - h$.)
Then the Lagrangian multiform associated the Hamiltonians above, for the equations
\[ \bar u_{t_1} = \bar u_x^3, \qquad \bar u_{t_2} = \bar u_x^4,\]
is
\begin{align*}
	L_{01} &=  -3 \bar u_{x}^{2} + \frac{\bar u_{xxx} \bar u_{t_1}}{2 \bar u_{xx}^{2}} , \\
	L_{02} &=  -4 \bar u_{x}^{3} + \frac{\bar u_{xxx} \bar u_{t_2}}{2 \bar u_{xx}^{2}} , \\
	L_{12} &= \tfrac{9}{5} \bar u_{x}^{5} + \frac{\bar u_{x}^{4} \bar u_{xxt_1}}{2 \bar u_{xx}^{2}} - \frac{2 \bar u_{x}^{3} \bar u_{xt_1}}{\bar u_{xx}} - \frac{\bar u_{x}^{4} \bar u_{xxx} \bar u_{xt_1}}{2 \bar u_{xx}^{3}} - 6 \bar u_{x}^{2} \bar u_{t_1} - \frac{\bar u_{x}^{3} \bar u_{xxt_2}}{2 \bar u_{xx}^{2}} + \frac{3 \bar u_{x}^{2} \bar u_{xt_2}}{2 \bar u_{xx}} + \frac{\bar u_{x}^{3} \bar u_{xxx} \bar u_{xt_2}}{2 \bar u_{xx}^{3}} \\
	&\qquad + 3 \bar u_{x} \bar u_{t_2} - \frac{\bar u_{xxx} \bar u_{xt_2} \bar u_{t_1}}{2 \bar u_{xx}^{3}} + \frac{\bar u_{xxx} \bar u_{xt_1} \bar u_{t_2}}{2 \bar u_{xx}^{3}} .
\end{align*}
Note that in this example, $L_{12}$ contains terms depending on both a $t_1$-derivative and a $t_2$-derivative.

The evolutionary equations can be obtained from the system of multiform Euler-Lagrange equations as
\begin{align*}
    0 = \var{01}{L_{01}}{\bar u_{3x}} + \var{12}{L_{12}}{\bar u_{xxt_2}} &= -\frac{\bar u_{x}^{3}}{2 \bar u_{xx}^{2}} + \frac{\bar u_{t_1}}{2 \bar u_{xx}^{2}} ,\\
    0 = \var{02}{L_{02}}{\bar u_{3x}} - \var{12}{L_{12}}{\bar u_{xxt_1}} &= -\frac{\bar u_{x}^{4}}{2 \bar u_{xx}^{2}} + \frac{\bar u_{t_2}}{2 \bar u_{xx}^{2}} .
\end{align*}

\subsection{Polytropic gas dynamics}
Our construction also applies to multi-component evolutionary equations. The corresponding Hamiltonian structures are here described by matrix differential operators. In this subsection, we apply our procedure to a two-component Lagrangian equation arising in the context of polytropic gas dynamics (we refer to  \cite{nutku1987new, olver1988hamiltonian} and \cite[Section V]{nutku2002multilagrangians}). Consider a two-component system with variables $u,v$ and Hamiltonian operator $$\cA = \begin{pmatrix} 0 & \partial_x \\ \partial_x & 0 \end{pmatrix}.$$ This means we take potential variables $\bar u, \bar v$ that satisfy $\bar u_x = v$ and $\bar v_x = u$. 
The operator $\cA$ can be written as $\cA = \ell_{p_0} - \ell_{p_0}^*$, with $p_0 =\begin{pmatrix} 0 & \bar u_x \end{pmatrix}^\top$.
Hence, the Lagrangian associated to the operator $\cA$ is of the form\footnote{Equivalently, we could take $p_0 =\begin{pmatrix} \bar v_x & 0 \end{pmatrix}^\top$ and $L = \bar v_x \bar u_t - h$, or a linear combination of the two.}
\[ L = \bar u_x \bar v_t - h . \]
A compatible Hamiltonian operator is given by
\begin{align*}
	\cB &= \begin{pmatrix} v^{\gamma-2} \partial_x + \partial_x v^{\gamma-2} & (\gamma-1) u \partial_x + u_x \\ (\gamma-1) u \partial_x + (\gamma-2) u_x & v \partial_x + \partial_x v \end{pmatrix} ,
\end{align*}
where $\gamma$ is an arbitrary parameter. In the Hamiltonian potential variables for $\mathcal{A}$, we have
	\begin{align*} \bar{\mathcal{B}}&= \begin{pmatrix} \bar u_x^{\gamma-2} \partial_x + \partial_x \bar u_x^{\gamma-2} & (\gamma-1) \bar v_x \partial_x + \bar v_{xx} \\ (\gamma-1) \bar v_x \partial_x + (\gamma-2) \bar v_{xx} & \bar u_x \partial_x + \partial_x \bar u_x \end{pmatrix},
\end{align*}
which we can write as $\cB = \ell_{p_1} -  \ell^*_{p_1}$, where 
\[ p_1 = \begin{pmatrix} \frac{1}{\gamma-1} \bar u_x^{\gamma-1} + \frac{\gamma-2}{2} \bar v_x^2 \\  \bar u_x \bar v_x \end{pmatrix}. \]
Hence the Lagrangian associated to $\cB$ is of the form
\[ L = \frac{1}{\gamma-1} \bar u_x^{\gamma-1} \bar u_t + \frac{\gamma-2}{2} \bar v_x^2 \bar u_t + \bar u_x \bar v_x \bar v_t - h\]

We focus on the special case $\gamma = 2$, for which 
\begin{align*}
	\bar\cB
	&= \begin{pmatrix}  2 \partial_x & \partial_x \bar v_x  \\ \bar v_x \partial_x & \bar u_x \partial_x + \partial_x \bar u_x \end{pmatrix}
\end{align*}
Denote by $\delta h$ the vector of variational derivatives 
\[\delta h = \begin{pmatrix}
    \displaystyle \var{}{h}{\bar u} & \displaystyle \var{}{h}{\bar v}
\end{pmatrix}^\top . \]
Then the first few steps of the Lenard recursion are given by
\begin{align*}
& \begin{cases} \bar u_{t_0} = \bar u_{x}, \\ \bar v_{t_0} = \bar v_{x}, \end{cases}
&& \bar\cA\begin{pmatrix}\bar u_{t_0}\\\bar v_{t_0}\end{pmatrix} = \begin{pmatrix}\bar v_{xx} \\ \bar u_{xx}\end{pmatrix} = -\delta h_0, \\
& h_0 = \bar u_{x} \bar v_{x} ,
&& \bar\cB\begin{pmatrix}\bar u_{t_0}\\\bar v_{t_0}\end{pmatrix} = \begin{pmatrix}2 \bar v_{x} \bar v_{xx} + 2 \bar u_{xx} \\ 2 \bar u_{xx} \bar v_{x} + 2 \bar u_{x} \bar v_{xx}\end{pmatrix} ,
\\[1ex]
& \begin{cases} \bar u_{t_1} =  2 \bar u_{x} \bar v_{x}, \\  \bar v_{t_1} = \bar v_{x}^{2} + 2 \bar u_{x}, \end{cases}
&& \bar\cA\begin{pmatrix}\bar u_{t_1}\\\bar v_{t_1}\end{pmatrix} = \begin{pmatrix}2 \bar v_{x} \bar v_{xx} + 2 \bar u_{xx} \\ 2 \bar u_{xx} \bar v_{x} + 2 \bar u_{x} \bar v_{xx}\end{pmatrix} = -\delta h_1, \\
& h_1 = \bar u_{x} \bar v_{x}^{2} + \bar u_{x}^{2} , 
&& \bar\cB\begin{pmatrix}\bar u_{t_1}\\\bar v_{t_1}\end{pmatrix} = \begin{pmatrix}3 \bar v_{x}^{2} \bar v_{xx} + 6 \bar u_{xx} \bar v_{x} + 6 \bar u_{x} \bar v_{xx} \\ 3 \bar u_{xx} \bar v_{x}^{2} + 6 \bar u_{x} \bar v_{x} \bar v_{xx} + 6 \bar u_{x} \bar u_{xx}\end{pmatrix} ,
\\[1ex]
& \begin{cases} \bar u_{t_2} = 3 \bar u_{x} \bar v_{x}^{2} + 3 \bar u_{x}^{2}, \\ \bar v_{t_2} = \bar v_{x}^{3} + 6 \bar u_{x} \bar v_{x}, \end{cases}
&&\bar \cA\begin{pmatrix}\bar u_{t_2}\\\bar v_{t_2}\end{pmatrix} = \begin{pmatrix}3 \bar v_{x}^{2} \bar v_{xx} + 6 \bar u_{xx} \bar v_{x} + 6 \bar u_{x} \bar v_{xx} \\ 3 \bar u_{xx} \bar v_{x}^{2} + 6 \bar u_{x} \bar v_{x} \bar v_{xx} + 6 \bar u_{x} \bar u_{xx}\end{pmatrix} = -\delta h_2, \\
& h_2 = \bar u_{x} \bar v_{x}^{3} + 3 \bar u_{x}^{2} \bar v_{x} , 
&& \bar\cB\begin{pmatrix}\bar u_{t_2}\\\bar v_{t_2}\end{pmatrix} = \begin{pmatrix}4 \bar v_{x}^{3} \bar v_{xx} + 12 \bar u_{xx} \bar v_{x}^{2} + 24 \bar u_{x} \bar v_{x} \bar v_{xx} + 12 \bar u_{x} \bar u_{xx} \\ 4 \bar u_{xx} \bar v_{x}^{3} + 12 \bar u_{x} \bar v_{x}^{2} \bar v_{xx} + 24 \bar u_{x} \bar u_{xx} \bar v_{x} + 12 \bar u_{x}^{2} \bar v_{xx}\end{pmatrix} ,
\\[1ex]
& \begin{cases} \bar u_{t_3} = 4 \bar u_{x} \bar v_{x}^{3} + 12 \bar u_{x}^{2} \bar v_{x}, \\ \bar v_{t_3} = \bar v_{x}^{4} + 12 \bar u_{x} \bar v_{x}^{2} + 6 \bar u_{x}^{2}, \end{cases}
&&\bar \cA\begin{pmatrix}\bar u_{t_3}\\\bar v_{t_3}\end{pmatrix} = \begin{pmatrix}4 \bar v_{x}^{3} \bar v_{xx} + 12 \bar u_{xx} \bar v_{x}^{2} + 24 \bar u_{x} \bar v_{x} \bar v_{xx} + 12 \bar u_{x} \bar u_{xx} \\ 4 \bar u_{xx} \bar v_{x}^{3} + 12 \bar u_{x} \bar v_{x}^{2} \bar v_{xx} + 24 \bar u_{x} \bar u_{xx} \bar v_{x} + 12 \bar u_{x}^{2} \bar v_{xx}\end{pmatrix}, \\
& h_3 = \bar u_{x} \bar v_{x}^{4} + 6 \bar u_{x}^{2} \bar v_{x}^{2} + 2 \bar u_{x}^{3} , 
&& \ldots
\end{align*}
The Lagrangian multiform associated to the operator $\bar\cA$ has coefficients
\begin{align*}
	& L_{01} = \bar u_{t_1} \bar v_{x} -\bar u_{x} \bar v_{x}^{2} - \bar u_{x}^{2} , \\
	& L_{02} = \bar u_{t_2} \bar v_{x} -\bar u_{x} \bar v_{x}^{3} - 3 \bar u_{x}^{2} \bar v_{x} , \\
	& L_{03} = \bar u_{t_3} \bar v_{x} -\bar u_{x} \bar v_{x}^{4} - 6 \bar u_{x}^{2} \bar v_{x}^{2} - 2 \bar u_{x}^{3} ,
\end{align*}
and
\begin{align*}
	L_{12} &= \tfrac{1}{2} \bar u_{x} \bar v_{x}^{4} + \tfrac{3}{2} \bar u_{x}^{2} \bar v_{x}^{2} - \tfrac{1}{2} \bar u_{t_1} \bar v_{x}^{3} - \tfrac{3}{2} \bar u_{x} \bar v_{x}^{2} \bar v_{t_1} + \bar u_{x}^{3} - 3 \bar u_{x} \bar u_{t_1} \bar v_{x} + \tfrac{1}{2} \bar u_{t_2} \bar v_{x}^{2} \\
	&\qquad - \tfrac{3}{2} \bar u_{x}^{2} \bar v_{t_1} + \bar u_{x} \bar v_{x} \bar v_{t_2} + \bar u_{x} \bar u_{t_2} + \tfrac{1}{2} \bar u_{t_2} \bar v_{t_1} - \tfrac{1}{2} \bar u_{t_1} \bar v_{t_2} , \\
	L_{13} &= \bar u_{x} \bar v_{x}^{5} + 6 \bar u_{x}^{2} \bar v_{x}^{3} - \tfrac{1}{2} \bar u_{t_1} \bar v_{x}^{4} - 2 \bar u_{x} \bar v_{x}^{3} \bar v_{t_1} + 6 \bar u_{x}^{3} \bar v_{x} - 6 \bar u_{x} \bar u_{t_1} \bar v_{x}^{2} - 6 \bar u_{x}^{2} \bar v_{x} \bar v_{t_1} - 3 \bar u_{x}^{2} \bar u_{t_1} \\
	&\qquad + \tfrac{1}{2} \bar u_{t_3} \bar v_{x}^{2} + \bar u_{x} \bar v_{x} \bar v_{t_3} + \bar u_{x} \bar u_{t_3} + \tfrac{1}{2} \bar u_{t_3} \bar v_{t_1} - \tfrac{1}{2} \bar u_{t_1} \bar v_{t_3} , \\
	L_{23} &= \tfrac{1}{2} \bar u_{x} \bar v_{x}^{6} + \tfrac{9}{2} \bar u_{x}^{2} \bar v_{x}^{4} + 9 \bar u_{x}^{3} \bar v_{x}^{2} - \tfrac{1}{2} \bar u_{t_2} \bar v_{x}^{4} - 2 \bar u_{x} \bar v_{x}^{3} \bar v_{t_2} - 6 \bar u_{x} \bar u_{t_2} \bar v_{x}^{2} + \tfrac{1}{2} \bar u_{t_3} \bar v_{x}^{3} \\
	&\qquad - 6 \bar u_{x}^{2} \bar v_{x} \bar v_{t_2} + \tfrac{3}{2} \bar u_{x} \bar v_{x}^{2} \bar v_{t_3} - 3 \bar u_{x}^{2} \bar u_{t_2} + 3 \bar u_{x} \bar u_{t_3} \bar v_{x} + \tfrac{3}{2} \bar u_{x}^{2} \bar v_{t_3} + \tfrac{1}{2} \bar u_{t_3} \bar v_{t_2} - \tfrac{1}{2} \bar u_{t_2} \bar v_{t_3} .
\end{align*}
The Lagrangian multiform associated to the operator $\bar\cB$ has coefficients
\begin{align*}
	L_{01} &= \bar u_{x} \bar u_{t_1} -\bar u_{x} \bar v_{x}^{3} - 3 \bar u_{x}^{2} \bar v_{x} + \bar u_{x} \bar v_{x} \bar v_{t_1} , \\
	L_{02} &= \bar u_{x} \bar u_{t_2} -\bar u_{x} \bar v_{x}^{4} - 6 \bar u_{x}^{2} \bar v_{x}^{2} - 2 \bar u_{x}^{3} + \bar u_{x} \bar v_{x} \bar v_{t_2} ,
\end{align*}
and
\begin{align*}
	L_{12} &= \tfrac{1}{2} \bar u_{x} \bar v_{x}^{5} + \tfrac{7}{2} \bar u_{x}^{2} \bar v_{x}^{3} - \tfrac{1}{2} \bar u_{t_1} \bar v_{x}^{4} - \tfrac{3}{2} \bar u_{x} \bar v_{x}^{3} \bar v_{t_1} + 3 \bar u_{x}^{3} \bar v_{x} - 6 \bar u_{x} \bar u_{t_1} \bar v_{x}^{2} + \tfrac{1}{2} \bar u_{t_2} \bar v_{x}^{3} - \tfrac{9}{2} \bar u_{x}^{2} \bar v_{x} \bar v_{t_1} \\
	&\qquad + \bar u_{x} \bar v_{x}^{2} \bar v_{t_2} - 3 \bar u_{x}^{2} \bar u_{t_1} + 3 \bar u_{x} \bar u_{t_2} \bar v_{x} - \tfrac{1}{2} \bar u_{t_2} \bar v_{x} \bar v_{t_1} + \bar u_{x}^{2} \bar v_{t_2} + \tfrac{1}{2} \bar u_{t_1} \bar v_{x} \bar v_{t_2} .
\end{align*}

\subsection{The constant astigmatism equation}

In this subsection, we apply our results for a Lagrangian equation whose bi-Hamiltonian structure has been recently investigated and, as far as the authors know, never described in terms of symplectic operators. This example comes out from the classical theory of surfaces.  The study of surfaces immersed in the Euclidean space and having constant and non-zero difference between the principal radii of curvatures dates back to the 19th century (see e.g. \cite{bianchi1879ricerchea,lipschitz1900zur}). These surfaces are known as surfaces of \emph{constant astigmatism} and are parametrised by the equation \begin{equation}\label{constast}
    u_{tt}+\left(\frac{1}{u}\right)_{xx}+2=0,
\end{equation}
which is called the \emph{constant astigmatism equation}, or CAE. Here $u=u(t,x)$ is the single function  expressing all the nonzero
coefficients of the three fundamental forms of a surface of constant
astigmatism.

In recent years, equation \eqref{constast} was revisited in \cite{baran2009integrability} and its solutions have been deeply investigated in \cite{manganaro2014constant, hlavac2015multisoliton,hlavac2017nonlocal,hlavac2018more}. An interesting result also relates the CAE with the sine-Gordon equation by a reciprocal transformation \cite{hlavac2014reciprocal}.

In \cite{pavlov2013lagrangian}, the authors introduced a new variable $w$, such that $w_x=u_t$, and mapped the CAE into a system of non-homogeneous quasilinear evolutionary PDEs: 
\begin{equation}\label{const-sys}
u_t=w_x,\qquad w_t+\left(\frac{1}{u}\right)_x+2x=0.
\end{equation}
In the same paper, they proved that the system is bi-Hamiltonian, with
\begin{align*}
    &\cA = \begin{pmatrix}
        0&1\\1&0
    \end{pmatrix}\partial_x, 
    \qquad H_1=\int{\left(\frac{1}{2}w^2-\log{u}-x^2u\right)\, dx},\\[4pt]
    & \cB = \begin{pmatrix}2u\partial_x+u_x &-w_x\\ w_x&\frac{2}{u}\partial_x-\frac{u_x}{u^2}+2\partial_x^{-1}\end{pmatrix}, \qquad 
    H_2=-\int{w\, dx}.
\end{align*}
The corresponding recursion operator is given by 
\[
    \cR=\cB \circ \cA^{-1}=  
    \begin{pmatrix}
        -w_x&2u\partial_x+u_x\\\frac{2}{u}\partial_x-\frac{u_x}{u^2}+2\partial_x^{-1}&w_x
    \end{pmatrix} \partial_x^{-1},
\]

We remark that the first Hamiltonian structure is of Dubrovin-Novikov type and in flat coordinates, whereas the second is of Ferapontov type (see \cite{ferapontov1992nonlocal}), i.e.\@ it is the sum of a first-order homogeneous operator plus a nonlocal term
\[
    \cB^{ij}=g^{ij}\, \partial_x-g^{is}\Gamma^j_{sk}\, u^j_x + f^i\partial_x^{-1}f^j, 
\]
with $g^{ij}=2u\, \partial_u\otimes \partial_u+\frac{2}{u}\partial_w\otimes \partial_w$ and $f=\sqrt{2}\, \partial_w$. One can prove that these operators are generated by nonlocal extensions of Dubrovin-Novikov operators with infinitesimal isometries of the leading coefficient $g$.    Pairs of compatible operators of this type have been investigated and classified in \cite{pavlov2021classification}.

\paragraph{Bi-symplectic and bi-Lagrangian formalism.} Let us now consider the potential variables with respect to the first Hamiltonian structure, i.e. $\bar{u},\bar{w}$ so that 
\[
    u=\bar{w}_x, \quad w=\bar{u}_x.
\]
Taking $p_1= \begin{pmatrix} \bar{w}_x & 0\end{pmatrix}^\top$, we have 
\[
    \cA = \ell_{p_1}-\ell^*_{p_1}
    = \begin{pmatrix}
        0&1\\1&0
    \end{pmatrix}\partial_x.
\]
Hence the Lagrangians associated to $\cA$ are of the form
\[ L = \bar w_x \bar u_t - h .\]

The second Hamiltonian operator can be written as $\cB = \ell_p - \ell_p^*$ with
\[ p = \begin{pmatrix}  -\bar u_{xx} \bar w & \log(\bar w_x) + \partial_x^{-1} \bar w \end{pmatrix}^\top, \]
where sum in the definitions \eqref{frechet}--\eqref{frechet-adjoint} of $\ell$ and $\ell^*$ are extended to allow negative $k$. So, the Lagrangians associated to $\cB$ are of the form
\[ L = -\bar u_{xx} \bar w \bar u_t + \log(\bar w_x) \bar w_t + (\partial_x^{-1} \bar w) \bar w_t - h .\]

Despite the non-locality, we can find a second equation from the symplectic Lenard recursion:
\begin{align*}
    & \begin{cases} \bar u_{t_1} = -x^{2} - \frac{1}{\bar w_{x}}, \\ \bar w_{t_1} = \bar u_{x}, \end{cases}
    && \cA\begin{pmatrix}\bar u_{t_1}\\\bar w_{t_1}\end{pmatrix} = \begin{pmatrix}\bar u_{xx} \\ -2 x + \frac{\bar w_{xx}}{\bar w_{x}^{2}}\end{pmatrix} = - \delta h_1, \\
    & h_1 = \tfrac{1}{2} \bar u_{x}^{2} + 2 x \bar w - \log(\bar w_x) ,
    && \cB\begin{pmatrix}\bar u_{t_1}\\\bar w_{t_1}\end{pmatrix} = \begin{pmatrix}-x^{2} \bar w_{xx} - \bar u_{x} \bar u_{xx} - 4 x \bar w_{x} + \frac{\bar w_{xx}}{\bar w_{x}} \\ -x^{2} \bar u_{xx} + \frac{\bar u_{xx}}{\bar w_{x}} - \frac{\bar u_{x} \bar w_{xx}}{\bar w_{x}^{2}} + 2 \bar u\end{pmatrix} ,
    \\[1ex]
    & \begin{cases} \bar u_{t_2} = -x^{2} \bar u_{x} + 2 x \bar u + \frac{\bar u_{x}}{\bar w_{x}}, \\ \bar w_{t_2} =   -x^{2} \bar w_{x} - \frac{1}{2} \bar u_{x}^{2} - 2 x \bar w + \log(\bar w_{x}) + 2 (\partial_x^{-1} \bar w), \end{cases}
    && \cA\begin{pmatrix}\bar u_{t_2}\\\bar w_{t_2}\end{pmatrix} = \begin{pmatrix}-x^{2} \bar w_{xx} - \bar u_{x} \bar u_{xx} - 4 x \bar w_{x} + \frac{\bar w_{xx}}{\bar w_{x}} \\ -x^{2} \bar u_{xx} + \frac{\bar u_{xx}}{\bar w_{x}} - \frac{\bar u_{x} \bar w_{xx}}{\bar w_{x}^{2}} + 2 \bar u\end{pmatrix} ,\\
    & h_2 = x^{2} \bar u_{xx} \bar w - \tfrac{1}{6} \bar u_{x}^{3} + \log(\bar w_{x}) \bar u_{x} - 2 \bar u \bar w , 
    && \ldots
\end{align*}
We see that the second flow involves the non-local variable $\partial_x^{-1} \bar w$. Continuing the recursion would require the introduction of additional non-local variables. For example, the expression we get by applying $\cB$ to $\begin{pmatrix}\bar u_{t_2} & \bar w_{t_2}\end{pmatrix}^\top$ involves $\partial_x^{-1}\left(- \tfrac{1}{2} \bar u_{x}^{2} + \log(\bar w_{x}) + 2 (\partial_x^{-1} \bar w) \right)$.
It is not clear if a local expression of $h_3$ exists.

A Lagrangian multiform based on the operator $\cA$ has coefficients
\begin{align*}
    & L_{01} = \bar u_{t_1} \bar w_{x} -\tfrac{1}{2} \bar u_{x}^{2} - 2 x \bar w + \log\left(\bar w_{x}\right) , \\
    & L_{02} = \bar u_{t_2} \bar w_{x} -x^{2} \bar u_{xx} \bar w + \tfrac{1}{6} \bar u_{x}^{3} - \log\left(\bar w_{x}\right) \bar u_{x} + 2 \bar u \bar w , \\
    & L_{12} = \tfrac{1}{2} x^{4} \bar w_{x} + \tfrac{1}{4} x^{2} \bar u_{x}^{2} - x^{3} \bar w - x^{2} \bar u_{xt_1} \bar w + \tfrac{1}{2} x^{2} \bar u_{t_1} \bar w_{x} - \tfrac{1}{2} x^{2} \bar u_{x} \bar w_{t_1} + \tfrac{1}{2} x^{2} \log\left(\bar w_{x}\right) - x \bar u \bar u_{x} + \tfrac{1}{4} \bar u_{x}^{2} \bar u_{t_1} \\
    &\qquad + x \bar u_{t_1} \bar w + x \bar u \bar w_{t_1} + \tfrac{1}{2} x^{2} \bar w_{t_2} + x^{2} (\partial_x^{-1} \bar w) + \tfrac{3}{2} x^{2} + \bar u^{2} - \tfrac{1}{2} \log\left(\bar w_{x}\right) \bar u_{t_1} + \tfrac{1}{2} \bar u_{x} \bar u_{t_2} + \tfrac{1}{2} \bar u_{t_2} \bar w_{t_1} - \tfrac{1}{2} \bar u_{t_1} \bar w_{t_2} \\
    &\qquad + \bar u_{t_1} (\partial_x^{-1} \bar w) + \frac{\bar u_{x}^{2}}{4 \bar w_{x}} - \frac{x \bar w}{\bar w_{x}} - \frac{\bar u_{x} \bar w_{t_1}}{2 \bar w_{x}} + \frac{\log\left(\bar w_{x}\right)}{2 \bar w_{x}} - \frac{\bar w_{t_2}}{2 \bar w_{x}} + \frac{(\partial_x^{-1} \bar w)}{\bar w_{x}} + \frac{1}{\bar w_{x}} .
\end{align*}
We can check that the evolution equations are part of the system of multiform Euler-Lagrange equations. In particular,
\begin{align*}
    & 0 = \var{01}{L_{01}}{\bar u_0} + \var{12}{L_{12}}{\bar u_2} = \frac{1}{2} \bar w_{t_1} -\frac{1}{2} \bar u_{x} ,  \\
    & 0 = \var{01}{L_{01}}{\bar w_0} + \var{12}{L_{12}}{\bar w_2} = \frac{1}{2} \bar u_{t_1} + \frac{1}{2} x^{2} + \frac{1}{2 \bar w_{x}} , \\
    & 0 = \var{02}{L_{02}}{\bar u_0} - \var{12}{L_{12}}{\bar u_1} = \frac{1}{2} \bar w_{t_2} + \frac{1}{2} x^{2} \bar w_{x} + \frac{1}{4} \bar u_{x}^{2} + x \bar w - \frac{1}{2} \log\left(\bar w_{x}\right) - (\partial_x^{-1} \bar w) , \\
    &  0 = \var{02}{L_{02}}{\bar w_0} - \var{12}{L_{12}}{\bar w_1} = \frac{1}{2} \bar u_{t_2} + \frac{1}{2} x^{2} \bar u_{x} - x \bar u - \frac{\bar u_{x}}{2 \bar w_{x}} .
\end{align*}

Because we do not have an expression for $h_3$, we can only write a single Lagrangian based on the operator $\cB$:
\[ L_{01} =  -\bar u_{xx} \bar w \bar u_t + \log(\bar w_x) \bar w_t + (\partial_x^{-1} \bar w) \bar w_t -x^{2} \bar u_{xx} \bar w + \frac{1}{6} \bar u_{x}^{3} - \log\left(\bar w_{x}\right) \bar u_{x} + 2 \bar u \bar w . \]
We can verify by direct computation that its Euler-Lagrange equations are the CAE equations under the image of $\cB$. Taking the variational derivative with respect to $\bar u$, we find
\begin{align*}
    \var{01}{L_{01}}{\bar u} &= \der{L_{01}}{\bar u} - \partial_x \der{L_{01}}{\bar u_x} - \partial_t \der{L_{01}}{\bar u_t} + \partial_x^2 \der{L_{01}}{\bar u_{xx}} \\
    &= -( \bar w_{xx} + 2 \bar w_x \partial_x ) (\bar u_t + x^2 + \frac{1}{\bar w_x} ) + \bar u_{xx} (\bar w_t - \bar u_x) .
\end{align*}
To deal with the non-local term $(\partial_x^{-1} \bar w) \bar w_t$, the natural extension of the definition of the variational derivative is used. We find
\begin{align*}
    \var{01}{L_{01}}{\bar w} &= - \partial_x^{-1} \der{L_{01}}{(\partial_x^{-1} \bar w)} + \der{L_{01}}{\bar w} - \partial_x \der{L_{01}}{\bar w_x} - \partial_t \der{L_{01}}{\bar w_t}  \\
    &= -( \bar u_{xx} ) \left(\bar u_t + x^2 + \frac{1}{\bar w_x} \right) - \left( \frac{2}{\bar w_x}\partial_x - \frac{\bar w_{xx}}{\bar w_x^2} + 2 \partial_x^{-1} \right) (\bar w_t - \bar u_x) .
\end{align*}

\subsubsection{Lagrangian multiform for the scalar CAE}

Finally, we present a Lagrangian multiform for scalar formulation of the CAE. This is not an application of our general construction, but the result is interesting in its own right. Following the results by Pavlov and Zykov in \cite{pavlov2013lagrangian}, we first introduce the transformation of variables
$u=\varphi_{xx}$
so that equation \eqref{constast} is written as
\[
    \varphi_{xxtt}+\left(\frac{1}{\varphi_{xx}}\right)_{xx}+2=0,
\]
which is the Euler-Lagrange equation for the Lagrangian density
\[ 
    L(\varphi,\varphi_{xx},\varphi_{tt})=\frac{1}{2}\varphi_{xx}\varphi_{tt}+\log{\varphi_{xx}}+2\varphi .
\]

Inspired on the latter and similar expressions from \cite{pavlov2013lagrangian}, we build a multiform for the scalar version of the constant astigmatism equation and one of its symmetries. We start with
\begin{align*}
    & L_{01} =  \frac{1}{2} \varphi_{xt_1}^{2} + \log\left(\varphi_{xx}\right) + 2 \varphi \\
    & L_{02} =  \log\left(\varphi_{xx}\right) \varphi_{xt_2} + 2 \sqrt{-{\left(\varphi_{t_2}^{2} - 1\right)} \varphi_{xx}}
\end{align*}
We then look for $L_{12}$ such that $\dd \cL$ factorises. We compute
\begin{align*}
    &\partial_{2} L_{01} - \partial_{1} L_{02} 
    = {\left(2 \sqrt{-\frac{\varphi_{xx}}{\varphi_{t_2}^{2} - 1}} \varphi_{t_2} + \frac{\varphi_{xxx}}{\varphi_{xx}} - \varphi_{xxt_1}\right)} {\left(\sqrt{-\frac{\varphi_{t_2}^{2} - 1}{\varphi_{xx}}} + \frac{\varphi_{xt_2}}{\varphi_{xx}} + \varphi_{t_1t_2}\right)}
    - \partial_x L_{12} ,
\end{align*}
where 
\[ L_{12} =  \log\left(\varphi_{xx}\right) \varphi_{t_1t_2} - \varphi_{xt_1} \varphi_{t_1t_2} - 2 \sqrt{-\frac{\varphi_{t_2}^{2} - 1}{\varphi_{xx}}} - \frac{\varphi_{xt_2}}{\varphi_{xx}} . \]
Hence, we obtain a Lagrangian multiform $\cL = L_{01} \dd x \wedge \dd t_1 + L_{02} \dd x \wedge \dd t_2 +  L_{12} \dd t_1 \wedge \dd t_2$ such that all solutions of the system
\begin{subequations}
\label{CAE-scalar}
\begin{align}
    &\varphi_{t_1t_2} = -\sqrt{-\frac{\varphi_{t_2}^{2} - 1}{\varphi_{xx}}} - \frac{\varphi_{xt_2}}{\varphi_{xx}}  \\
    &\varphi_{xxt_1} = 2 \sqrt{-\frac{\varphi_{xx}}{\varphi_{t_2}^{2} - 1}} \varphi_{t_2} + \frac{\varphi_{xxx}}{\varphi_{xx}}
\end{align}
\end{subequations}
are critical points. In particular, this system implies the Euler-Lagrange equation
\[ 0 = \var{12}{L_{12}}{\varphi} = \partial_x^2 \left( \frac{1}{\varphi_{xx}}  \right) + \varphi_{xxt_1t_1} + 2,
\]
which is the constant astigmatism equation for $u = \varphi_{xx}$. Also part of the set of multiform Euler-Lagrange equations, and hence a consequence of equations \eqref{CAE-scalar}, is the surprisingly simple PDE
\[
0 = \var{12}{L_{12}}{\varphi_x} = -\frac{\varphi_{xxt_2}}{\varphi_{xx}^{2}} + \varphi_{t_1t_1t_2} .
\]
This example is similar to several presented in \cite{sleigh2022semidiscrete, ferapontov2025lagrangian} in that the two ways of characterising critical points, the multiform Euler-Lagrange equations and the double-zero expansion of $\dd \cL$, are not obviously equivalent. Hence, in these cases, the Lagrangian multiform provides an easy way of proving equivalence of two systems of PDEs.

Note that we do not have a Hamiltonian structure or a recursion operator in this case. So, it is not clear whether this example is part of a hierarchy of symmetries, or whether it is limited to the 3-dimensional multi-time spanned by $(x,u_2,t_3)$.

\section{Conclusions}

In this paper, we have presented a balanced symplectic counterpart to the standard Hamiltonian formalism for integrable PDEs by introducing a specific class of variables, namely Hamiltonian potential variables. The change to these variables maps compatible pairs of Hamiltonian operators into pairs of symplectic operators that are again compatible. We leveraged this result to establish a direct correspondence between the Lenard-Magri scheme to construct integrable hierarchies in the Hamiltonian setting and the Lenard scheme in the symplectic formalism. Different symplectic operators for a given PDE give rise to non-equivalent Lagrangian formulations of this PDE. Traditionally, the Euler-Lagrange equations of the ``higher'' Lagrangians are increasingly weak differential consequences of the original PDE. However, we showed that when these Lagrangians are extended to form Lagrangian multiforms, the corresponding generalised Euler-Lagrange equations produce the PDEs of the hierachy in their evolutionary form, regardless of which symplectic operator is used.

We analysed a range of examples to illustrate our methods and give evidence of their applicability to a wide range of integrable PDEs. These included the celebrated KdV equation and its dispersionless limits, as well as to multi-component systems arising from polytropic gas dynamics. Finally, we examined the Constant Astigmatism Equation, providing for the first time its Lenard scheme in the symplectic framework, as well as a Lagrangian multiform.

In future work, we intend to study the converse of the process presented in this paper. Specifically, we aim to investigate the construction of bi-Hamiltonian systems starting from bi-Lagrangian (and, by extension, bi-symplectic) structures. Based on this, we hope to obtain explicit integrability conditions on the Lagrangians. Preliminary computations in this direction suggest that, while similar procedures remain applicable, they entail distinct technical challenges.  Another challenge that is left to future work is to extend our approach to bi-Hamiltonian systems where neither of the Hamiltonian operators is constant.

\paragraph{Acknowledgements} PV acknowledges the financial support of GNFM of the Istituto Nazionale di Alta Matematica and the hospitality of the Department of Mathematical Sciences of the Loughborough University.  PV is partially funded by the research project Mathematical Methods in Non-Linear Physics (MMNLP) by the Commissione Scientifica Nazionale – Gruppo 4 – Fisica Teorica of the Istituto Nazionale di Fisica Nucleare (INFN), Sezione di Lecce.	

MV acknowledges support form the Engineering and Physical Sciences Research Council (Project reference EP/Y006712/1).

\bibliographystyle{abbrvnat_mv}
\bibliography{zotero}

@article{baran2009integrability,
  title = {On Integrability of {{Weingarten}} Surfaces: A Forgotten Class},
  shorttitle = {On Integrability of {{Weingarten}} Surfaces},
  author = {Baran, Hynek and Marvan, Michal},
  year = 2009,
  journal = {J. Phys. A: Math. Theor.},
  volume = {42},
  number = {40},
  pages = {404007},
  issn = {1751-8121},
  doi = {10.1088/1751-8113/42/40/404007},
  urldate = {2026-03-06},
  langid = {english}
}

@article{bianchi1879ricerchea,
  title = {{Ricerche sulle superficie elicoidali e sulle superficie a curvatura costante}},
  author = {Bianchi, Luigi},
  year = 1879,
  journal = {Ann. Della Scuola Norm. Super. Pisa - Cl. Sci.},
  volume = {2},
  pages = {285--341},
  langid = {italian}
}

@misc{bocharov1999symmetries,
  title = {Symmetries and {{Conservation Laws}} for {{Differential Equations}} of {{Mathematical Physics}}},
  author = {Bocharov, A. and Chetverikov, V. and Duzhin, S. and Khor’kova, N. and Samokhin, A. and Torkhov, Yu and Verbovetsky, A.},
  year = 1999,
  journal = {American Mathematical Society},
  series = {Translations of {{Mathematical Monographs}}},
  volume = {182},
  publisher = {American Mathematical Society},
  issn = {0065-9282, 2472-5137},
  doi = {10.1090/mmono/182},
  urldate = {2026-03-06},
  howpublished = {http://www.ams.org/mmono/182},
  isbn = {9780821809587 9780821897898 9781470445966},
  langid = {english}
}

@article{bustamante2003multilagrangians,
  title = {Multi-{{Lagrangians}}, Hereditary Operators and {{Lax}} Pairs for the {{Korteweg}}–de {{Vries}} Positive and Negative Hierarchies},
  author = {Bustamante, Miguel D. and Hojman, Sergio A.},
  year = 2003,
  journal = {J. Math. Phys.},
  volume = {44},
  number = {10},
  pages = {4652--4671},
  issn = {0022-2488},
  doi = {10.1063/1.1609035},
  urldate = {2025-10-08}
}

@article{caudrelier2025geometry,
  title = {On the Geometry of {{Lagrangian}} One-Forms},
  author = {Caudrelier, Vincent and Harland, Derek},
  year = 2025,
  journal = {Lett Math Phys},
  volume = {115},
  number = {2},
  pages = {38},
  issn = {1573-0530},
  doi = {10.1007/s11005-025-01925-0},
  urldate = {2026-03-30},
  langid = {english}
}

@article{dealmeidadasilva1990simple,
  title = {A Simple {{Lagrangian}} for Integrable Systems},
  author = {{\noopsort{almeida da silva}}{de Almeida da Silva}, M. A. and Das, Ashok},
  year = 1990,
  journal = {J. Math. Phys.},
  volume = {31},
  number = {4},
  pages = {798--800},
  issn = {0022-2488},
  doi = {10.1063/1.528813},
  urldate = {2026-03-06}
}

@book{dorfman1993dirac,
  title = {Dirac Structures and Integrability of Nonlinear Evolution Equations},
  author = {Dorfman, Irene},
  year = 1993,
  publisher = {Wiley \& Sons},
  address = {Chichester, England},
  isbn = {978-0-471-93893-4}
}

@article{dubrovin1983hamiltonianformalism,
  title = {{Hamiltonian-Formalism of One-Dimensional Systems of the Hydrodynamic Type and the Bogolyubov-Whitham Averaging Method}},
  author = {Dubrovin, Ba and Novikov, Sp},
  year = 1983,
  journal = {Dokl. Akad. Nauk Sssr},
  volume = {270},
  number = {4},
  pages = {781--785},
  publisher = {Mezhdunarodnaya Kniga},
  address = {Moscow},
  issn = {0002-3264},
  langid = {russian}
}

@article{ferapontov1992nonlocal,
  title = {Nonlocal Matrix Hamiltonian Operators, Differential Geometry, and Applications},
  author = {Ferapontov, E. V.},
  year = 1992,
  journal = {Theor Math Phys},
  volume = {91},
  number = {3},
  pages = {642--649},
  issn = {1573-9333},
  doi = {10.1007/BF01017341},
  urldate = {2026-03-06},
  langid = {english}
}

@article{ferapontov2006class,
  title = {On a {{Class}} of {{Three-Dimensional Integrable Lagrangians}}},
  author = {Ferapontov, E.V. and Khusnutdinova, K.R. and Tsarev, S.P.},
  year = 2006,
  journal = {Commun. Math. Phys.},
  volume = {261},
  number = {1},
  pages = {225--243},
  issn = {1432-0916},
  doi = {10.1007/s00220-005-1415-5},
  urldate = {2026-03-06},
  langid = {english}
}

@article{ferapontov2010integrable,
  title = {Integrable {{Lagrangians}} and Modular Forms},
  author = {Ferapontov, E. V. and Odesskii, A. V.},
  year = 2010,
  journal = {Journal of Geometry and Physics},
  volume = {60},
  number = {6},
  pages = {896--906},
  issn = {0393-0440},
  doi = {10.1016/j.geomphys.2010.02.006},
  urldate = {2026-03-06}
}

@article{ferapontov2010integrablea,
  title = {Integrable {{Equations}} of the {{Dispersionless Hirota}} Type and {{Hypersurfaces}} in the {{Lagrangian Grassmannian}}},
  author = {Ferapontov, Evgeny Vladimirovich and Hadjikos, Lenos and Khusnutdinova, Karima Robertovna},
  year = 2010,
  journal = {Int Math Res Notices},
  volume = {2010},
  number = {3},
  pages = {496--535},
  issn = {1073-7928},
  doi = {10.1093/imrn/rnp134},
  urldate = {2026-03-06}
}

@article{ferapontov2025lagrangian,
  title = {Lagrangian Multiforms and Dispersionless Integrable Systems},
  author = {Ferapontov, Evgeny V. and Vermeeren, Mats},
  year = 2025,
  journal = {Lett Math Phys},
  volume = {115},
  number = {6},
  pages = {125},
  issn = {1573-0530},
  doi = {10.1007/s11005-025-02016-w},
  urldate = {2026-03-06},
  langid = {english}
}

@misc{geometry,
  title = {On the Geometry of {{Lagrangian}} One-Forms | {{Letters}} in {{Mathematical Physics}} | {{Springer Nature Link}}},
  urldate = {2026-03-13},
  howpublished = {https://link.springer.com/article/10.1007/s11005-025-01925-0}
}

@article{hlavac2014reciprocal,
  title = {A {{Reciprocal Transformation}} for the {{Constant Astigmatism Equation}}},
  author = {Hlaváč, Adam and Marvan, Michal},
  year = 2014,
  journal = {SIGMA Symmetry Integrability Geom. Methods Appl.},
  volume = {10},
  pages = {091},
  publisher = {{SIGMA. Symmetry, Integrability and Geometry: Methods and Applications}},
  issn = {18150659},
  doi = {10.3842/SIGMA.2014.091},
  urldate = {2026-03-06},
  langid = {english}
}

@article{hlavac2015multisoliton,
  title = {On Multisoliton Solutions of the Constant Astigmatism Equation},
  author = {Hlaváč, Adam},
  year = 2015,
  journal = {J. Phys. A: Math. Theor.},
  volume = {48},
  number = {36},
  pages = {365202},
  publisher = {IOP Publishing},
  issn = {1751-8121},
  doi = {10.1088/1751-8113/48/36/365202},
  urldate = {2026-03-06},
  langid = {english}
}

@article{hlavac2017nonlocal,
  title = {Nonlocal Conservation Laws of the Constant Astigmatism Equation},
  author = {Hlaváč, Adam and Marvan, Michal},
  year = 2017,
  journal = {Journal of Geometry and Physics},
  series = {Nonlinear Partial Differential Equations: Integrability, Geometry and Related Topics},
  volume = {113},
  pages = {117--130},
  issn = {0393-0440},
  doi = {10.1016/j.geomphys.2016.06.002},
  urldate = {2026-03-06}
}

@article{hlavac2018more,
  title = {More Exact Solutions of the Constant Astigmatism Equation},
  author = {Hlaváč, Adam},
  year = 2018,
  journal = {Journal of Geometry and Physics},
  volume = {123},
  pages = {209--220},
  issn = {0393-0440},
  doi = {10.1016/j.geomphys.2017.09.003},
  urldate = {2026-03-06}
}

@misc{im,
  title = {I'm a {{Mathematician}} 2024/25 – {{A}} Super-Curricular {{STEM}} Outreach Education and Engagement Activity},
  urldate = {2025-03-26},
  howpublished = {https://bridget.imamathematician.uk/}
}

@article{krasilshchik2011geometry,
  title = {Geometry of Jet Spaces and Integrable Systems},
  author = {Krasil’shchik, Joseph and Verbovetsky, Alexander},
  year = 2011,
  journal = {Journal of Geometry and Physics},
  series = {The {{Interface}} between {{Integrability}} and {{Quantization}}},
  volume = {61},
  number = {9},
  pages = {1633--1674},
  issn = {0393-0440},
  doi = {10.1016/j.geomphys.2010.10.012},
  urldate = {2026-03-06}
}

@article{lipschitz1900zur,
  title = {Zur {{Theorie}} Der {{Krummen Oberflächen}}},
  author = {Lipschitz, R.},
  year = 1900,
  journal = {Acta Math.},
  volume = {10},
  number = {none},
  pages = {131--136},
  publisher = {Institut Mittag-Leffler},
  issn = {0001-5962, 1871-2509},
  doi = {10.1007/BF02393697},
  urldate = {2026-03-06},
  langid = {english}
}

@article{lobb2009lagrangian,
  title = {Lagrangian Multiforms and Multidimensional Consistency},
  author = {Lobb, Sarah and Nijhoff, Frank},
  year = 2009,
  journal = {J. Phys. Math. Theor.},
  volume = {42},
  pages = {454013},
  doi = {10.1088/1751-8113/42/45/454013}
}

@article{magri1978simple,
  title = {A Simple Model of the Integrable {{Hamiltonian}} Equation},
  author = {Magri, Franco},
  year = 1978,
  journal = {J. Math. Phys.},
  volume = {19},
  number = {5},
  pages = {1156--1162},
  issn = {0022-2488},
  doi = {10.1063/1.523777},
  urldate = {2026-03-06}
}

@article{manganaro2014constant,
  title = {The Constant Astigmatism Equation. {{New}} Exact Solution},
  author = {Manganaro, N and Pavlov, M V},
  year = 2014,
  journal = {J. Phys. A: Math. Theor.},
  volume = {47},
  number = {7},
  pages = {075203},
  publisher = {IOP Publishing},
  issn = {1751-8121},
  doi = {10.1088/1751-8113/47/7/075203},
  urldate = {2026-03-06},
  langid = {english}
}

@article{mokhov1998symplectic,
  title = {Symplectic and {{Poisson}} Structures on Loop Spaces of Smooth Manifolds, and Integrable Systems},
  author = {Mokhov, O. I.},
  year = 1998,
  journal = {Russ. Math. Surv.},
  volume = {53},
  number = {3},
  pages = {515},
  publisher = {IOP Publishing},
  issn = {0036-0279},
  doi = {10.1070/RM1998v053n03ABEH000019},
  urldate = {2025-11-06},
  langid = {english}
}

@book{mokhov2001symplectic,
  title = {Symplectic and Poisson Geometry on Loop Spaces of Smooth Manifolds and Integrable Equations},
  author = {Mokhov, O. I.},
  year = 2001,
  publisher = {Harwood Academic Publishers},
  address = {Amsterdam},
  isbn = {978-90-5823-235-9},
  langid = {english}
}

@article{nutku1987new,
  title = {On a New Class of Completely Integrable Nonlinear Wave Equations. {{II}}. {{Multi}}‐{{Hamiltonian}} Structure},
  author = {Nutku, Y.},
  year = 1987,
  journal = {J. Math. Phys.},
  volume = {28},
  number = {11},
  pages = {2579--2585},
  issn = {0022-2488},
  doi = {10.1063/1.527749},
  urldate = {2026-03-13}
}

@incollection{nutku2001lagrangian,
  title = {Lagrangian {{Approach}} to {{Integrable Systems Yields New Symplectic Structures}} for {{KDV}}},
  booktitle = {Integrable {{Hierarchies}} and {{Modern Physical Theories}}},
  author = {Nutku, Y.},
  editor = {Aratyn, Henrik and Sorin, Alexander S.},
  year = 2001,
  pages = {203--213},
  publisher = {Springer Netherlands},
  address = {Dordrecht},
  doi = {10.1007/978-94-010-0720-7_6},
  urldate = {2026-03-06},
  isbn = {978-94-010-0720-7},
  langid = {english}
}

@article{nutku2002multilagrangians,
  title = {Multi-{{Lagrangians}} for Integrable Systems},
  author = {Nutku, Y. and Pavlov, M. V.},
  year = 2002,
  journal = {J. Math. Phys.},
  volume = {43},
  number = {3},
  pages = {1441--1459},
  issn = {0022-2488},
  doi = {10.1063/1.1427765},
  urldate = {2026-03-06}
}

@article{olver1988hamiltonian,
  title = {Hamiltonian Structures for Systems of Hyperbolic Conservation Laws},
  author = {Olver, Peter J. and Nutku, Yavuz},
  year = 1988,
  journal = {J. Math. Phys.},
  volume = {29},
  number = {7},
  pages = {1610--1619},
  issn = {0022-2488},
  doi = {10.1063/1.527909},
  urldate = {2025-09-24}
}

@article{pavlov2013lagrangian,
  title = {Lagrangian and {{Hamiltonian}} Structures for the Constant Astigmatism Equation},
  author = {Pavlov, Maxim V and Zykov, Sergej A},
  year = 2013,
  journal = {J. Phys. A: Math. Theor.},
  volume = {46},
  number = {39},
  pages = {395203},
  publisher = {IOP Publishing},
  issn = {1751-8121},
  doi = {10.1088/1751-8113/46/39/395203},
  urldate = {2026-03-06},
  langid = {english}
}

@article{pavlov2017remarks,
  title = {Remarks on the {{Lagrangian}} Representation of Bi-{{Hamiltonian}} Equations},
  author = {Pavlov, {\relax MV} and Vitolo, {\relax RF}},
  year = 2017,
  journal = {J. Geom. Phys.},
  volume = {113},
  pages = {239--249},
  publisher = {Elsevier}
}

@article{pavlov2021classification,
  title = {Classification of Bi-{{Hamiltonian}} Pairs Extended by Isometries},
  author = {Pavlov, Maxim V. and Vergallo, Pierandrea and Vitolo, Raffaele},
  year = 2021,
  journal = {Proc. A},
  volume = {477},
  number = {2251},
  pages = {20210185},
  issn = {1364-5021},
  doi = {10.1098/rspa.2021.0185},
  urldate = {2026-03-06}
}

@article{petrera2017variational,
  title = {Variational Symmetries and Pluri-{{Lagrangian}} Systems in Classical Mechanics},
  author = {Petrera, Matteo and Suris, {\relax Yu}ri B},
  year = 2017,
  journal = {J. Nonlinear Math. Phys.},
  volume = {24 (Sup. 1)},
  pages = {121--145},
  doi = {10.1080/14029251.2017.1418058}
}

@article{petrera2021variational,
  title = {Variational Symmetries and Pluri-{{Lagrangian}} Structures for Integrable Hierarchies of {{PDEs}}},
  author = {Petrera, Matteo and Vermeeren, Mats},
  year = 2021,
  journal = {Eur. J. Math.},
  volume = {7},
  pages = {741--765},
  publisher = {Springer},
  doi = {10.1007/s40879-020-00436-7}
}

@book{saunders1989geometry,
  title = {The {{Geometry}} of {{Jet Bundles}}},
  author = {Saunders, D. J.},
  year = 1989,
  series = {London {{Mathematical Society Lecture Note Series}}},
  publisher = {Cambridge University Press},
  address = {Cambridge},
  doi = {10.1017/CBO9780511526411},
  urldate = {2026-03-06},
  isbn = {978-0-521-36948-0}
}

@article{sleigh2020variational,
  title = {Variational Symmetries and {{Lagrangian}} Multiforms},
  author = {Sleigh, Duncan and Nijhoff, Frank and Caudrelier, Vincent},
  year = 2020,
  journal = {Lett Math Phys},
  volume = {110},
  number = {4},
  pages = {805--826},
  issn = {1573-0530},
  doi = {10.1007/s11005-019-01240-5},
  urldate = {2023-12-18},
  langid = {english}
}

@article{sleigh2022semidiscrete,
  title = {Semi-Discrete {{Lagrangian}} 2-Forms and the {{Toda}} Hierarchy},
  author = {Sleigh, Duncan and Vermeeren, Mats},
  year = 2022,
  journal = {J. Phys. A.},
  volume = {55},
  number = {47},
  pages = {475204},
  publisher = {IOP Publishing},
  issn = {1751-8121},
  doi = {10.1088/1751-8121/aca451},
  urldate = {2023-01-28},
  langid = {english}
}

@incollection{suris2016lagrangian,
  title = {On the {{Lagrangian}} Structure of Integrable Hierarchies},
  booktitle = {Advances in Discrete Differential Geometry},
  author = {Suris, {\relax Yu}ri B and Vermeeren, Mats},
  year = 2016,
  pages = {347--378},
  publisher = {Springer},
  address = {Berlin, Heidelberg},
  doi = {10.1007/978-3-662-50447-5_11}
}

@incollection{suris2016variational,
  title = {Variational Symmetries and Pluri-{{Lagrangian}} Systems},
  booktitle = {Dynamical Systems, Number Theory and Applications: {{A}} Festschrift in Honor of Armin Leutbecher's 80th Birthday},
  author = {Suris, {\relax Yu}ri B},
  year = 2016,
  pages = {255--266},
  publisher = {World Scientific},
  address = {New Jersey, etc.},
  doi = {10.1142/9789814699877_0013}
}

@article{vermeeren2021hamiltonian,
  title = {Hamiltonian Structures for Integrable Hierarchies of {{Lagrangian PDEs}}},
  author = {Vermeeren, Mats},
  year = 2021,
  journal = {Open Commun. Nonlinear Math. Phys.},
  volume = {1},
  pages = {ocnmp:7491},
  doi = {10.46298/ocnmp.7491}
}

@preamble{ "\providecommand{\noopsort}[1]{} " }

\end{document}